\documentclass{article}
\usepackage[a4paper]{geometry}
\usepackage[T1]{fontenc}
\usepackage[utf8]{inputenc}
\usepackage{lmodern}
\usepackage{tikz}
\usepackage{subcaption}

\captionsetup{position=bottom}

\usepackage{shuffle}
\usepackage{booktabs}
\usetikzlibrary{calc,patterns}

\usepackage{amsmath,amssymb}
\usepackage{amsthm}

\usepackage{xcolor}

\definecolor{col1}{RGB}{100,143,255}
\definecolor{col2}{RGB}{120, 94, 240}
\definecolor{col3}{RGB}{254,97,0}
\definecolor{col4}{RGB}{220, 38, 127}
\definecolor{col5}{RGB}{255, 176, 0}

\usepackage{hyperref}

\newtheorem{theorem}{Theorem}%
\newtheorem{lemma}[theorem]{Lemma}
\newtheorem{corollary}[theorem]{Corollary}
\newtheorem{proposition}[theorem]{Proposition}
\newtheorem{definition}[theorem]{Definition}

\newtheorem{algorithm}[theorem]{Algorithm}

\theoremstyle{remark}

\newcommand{\bb}[1]{{\boldsymbol{#1}}}

\newcommand{\tr}{\mathrm{tr}}

\newcommand{\R}{\mathbb{R}}
\newcommand{\Q}{\mathbb{Q}}

\newcommand{\HE}{{\mathcal{H}}}

\newcommand{\MG}{\mathcal{MG}}

\DeclareMathOperator{\Tr}{Tr}
\newcommand{\dd}{\text{d}}
\newcommand{\xiz}{z}

\newcommand{\bigO}{\mathcal{O}}
\newcommand{\smallO}{o}

\DeclareMathOperator{\id}{id}
\DeclareMathOperator{\Aut}{Aut}

\hypersetup{
  pdfauthor={Michael Borinsky},
  pdftitle={Tropicalized quantum field theory and global tropical sampling},
  pdfsubject={},
  pdfkeywords={},
  linkcolor  = col2,
  citecolor  = col4,
  urlcolor   = col5,
  colorlinks = true,
}
\begin{document}

\title{Tropicalized quantum field theory \\and global tropical sampling}
\author{Michael Borinsky%
\footnote{Perimeter Institute, 31 Caroline St N, Waterloo, ON N2L 2Y5, Canada
}
}
\date{}
\maketitle

\begin{abstract}
We explain how to tropicalize scalar quantum field theory and show that tropicalized massive scalar quantum field theory is exactly solvable. 
This exact solution manifests as a non-linear recursion equation fulfilled by the expansion coefficients of the quantum effective action.
Geometrically, this recursion computes specific volumes of moduli spaces of metric graphs and is thereby analogous to Mirzakhani's volume recursions on the moduli space of curves.
Building on this exact solution, we construct an algorithm that samples points from the moduli space of graphs approximately proportional to their perturbative contribution. Remarkably, this algorithm requires only polynomial time and memory, suggesting that perturbative quantum field theory computations lie in the polynomial-time complexity class, while all known algorithms for evaluating individual Feynman integrals are exponential in time and memory.
To demonstrate the capabilities of the algorithm, we evaluate the primitive contribution to the $\phi^4$ beta function at 50 loops with a proof-of-concept implementation.
\end{abstract}
\section{Introduction}
\subsection{Physical motivation}

Quantum field theory is regarded as one of the most accurate theories in science, with predictions that agree with experiments to an extraordinary degree of precision. Yet it remains very difficult to obtain concrete predictions from it. One of the most severe bottlenecks is the need to sum over all Feynman integrals of a given loop order to reach a required accuracy within a perturbation-theoretic workflow. As the number of these integrals and the associated graphs grows factorially, the amount of work necessary to make a quantum field theory prediction at a specified accuracy also grows at least factorially. In contrast, the time, energy, and resources needed to measure such a physical quantity typically grow at most polynomially with the desired accuracy.

From this physical perspective, this article is motivated by the belief that this is a highly unsatisfying situation: a well-formulated law of nature should not only allow for concrete predictions of a physical phenomenon but also enable the observer to make these predictions with lower overall effort than performing the experiment and measuring the outcome.

\subsection{Overview of the results}

This article addresses this conceptual deficiency of quantum field theory in two steps: 

First, we define a deformation of massive scalar quantum field theory inspired by tropical geometry, which we call \emph{tropicalized quantum field theory}.
Our first main result (Theorem~\ref{thm:pde}) is that this theory is \emph{exactly solvable}:  
its quantum effective action satisfies a non-linear partial differential equation, the \emph{tropical loop equation}, that determines all perturbative coefficients recursively. This exact solution holds in any spacetime dimension; however, in the critical dimension, this equation simplifies to an ordinary differential equation.

Second, using this solvability, we design a sampling algorithm that effectively approximates the original perturbative expansion without enumerating individual Feynman integrals.
The algorithm samples points from the moduli space of graphs.
The probability density with which the points are sampled is the tropicalization of the original Schwinger parametric Feynman integrand. The perturbative expansion coefficient of the initial, non-deformed quantum field theory can be computed via a Monte Carlo procedure, with convergence guaranteed by tropical algebra. Crucially, this new sampling algorithm has only \emph{polynomially growing runtime and memory requirements}. The existence of this fast algorithm implies, for instance, that the evaluation of a specific Feynman integral contributing to the perturbative expansion becomes quickly more computationally demanding  (with currently available methods) than estimating the global sum over all Feynman integrals at once.

As a proof of concept, we compute perturbative coefficients for the $3$-point correlation function in massive $\phi^3$ theory in three dimensions up to 20 loops. For $\phi^4$ theory, we compute some coefficients of the primitive $\beta$-function up to 50 loops.

\subsection{Relation to previous works}
Over the last few decades, there have been many pioneering efforts to simplify and make quantities such as scattering amplitudes in quantum field theory both conceptually and computationally more manageable. Examples include methods that sum all Feynman diagrams of a particular class at once (see, e.g.,~\cite{Parke:1986gb,Bern:1994zx}), as well as the discovery of new, elegant formulations that capture large amounts of combinatorial data  (see, e.g.,~\cite{Britto:2005fq,Arkani-Hamed:2010zjl,Cachazo:2013hca}).

Over the last decade, specifically, tropical and convex geometric methods became established tools in fundamental physics (see, e.g.,~\cite{kaneko2010geometric,arkani2014amplituhedron,arkani2017positive,Arkani-Hamed:2019mrd}). 
Panzer introduced the tropicalization of individual Feynman integrals relevant for this work: the \emph{Hepp bound}~\cite{Panzer:2019yxl}. The Hepp bound and more general Feynman integral tropicalizations were used by the author in \cite{Borinsky:2020rqs} to give a sampling algorithm for individual Feynman integrals whose runtime scales better than previous algorithms (e.g.,~\cite{Binoth:2003ak}), but still exponentially with the number of edges of the Feynman graph.

    Further, our approach is inspired by the surfaceology program \cite{arkani2023all,Arkani-Hamed:2024vna,Arkani-Hamed:2024pzc,De:2024wsy,Salvatori:2025oib}. In particular, our perspective resonates with Salvatori's recent work \cite{Salvatori:2025oib}, which puts forward a sampling algorithm for surfaces through Hepp-bound-related recursions. 
 This algorithm allows one to directly compute entire perturbative expansion coefficients of certain quantum field theories. %
The second part of the present article translates some of these ideas from surfaceology into a purely graph-based setting, without resorting to surfaces at all. The global sampling algorithm presented in Section~\ref{sec:algo} achieves the same goal of providing a way to compute a full perturbative coefficient at once without resorting to a decomposition into individual Feynman integrals. 
Its runtime and memory requirements, which only grow polynomially with the loop order and multiplicity, appear to improve significantly over those of the algorithm in \cite{Salvatori:2025oib},
although a direct comparison is not straightforward because the two methods are applied in different settings, namely the 
surface-ordered amplitudes in  $\Tr(\phi^3)$ versus the $n$-point correlation functions in scalar quantum field theories here.
In reverse, we expect that many of the concepts uncovered in our purely graph-centred setting, such as the exact solvability of tropicalized scalar quantum field theory, have counterparts within the surfaceology framework, and we hope that the present article helps to open a path for a deeper integration of surfaceology with tropicalized quantum field theory.

Recent empirical studies of perturbation theory at large loop orders~\cite{Balduf:2023ilc,Balduf:2024gvv,Balduf:2024njk,Borinsky:2025ncs} provide additional motivation for algorithms that sample quantum field theoretic quantities directly, without explicitly enumerating Feynman diagrams. In particular, Balduf's analysis \cite[\S 4]{Balduf:2023ilc}  indicates that the distribution of contributions from individual Feynman integrals becomes increasingly heavy-tailed at high loop order. This implies that a naive Monte Carlo strategy---randomly sampling a graph, computing its Feynman integral, and averaging over the results---would suffer from enormous variance and fail to converge efficiently. Such heavy-tailed behaviour makes it unlikely that unbiased sampling over graphs could produce reliable estimates in a reasonable time, showing that more sophisticated methods like the one developed in this work are needed.

The tropical loop equation and the associated non-linear recursion equation for the expansion coefficients 
of the tropicalized $n$-point correlation functions have a relative in the theory of matrix integrals: the loop
equations and the more general Eynard--Orantin topological recursion~\cite{Eynard:2007kz}. It is plausible, but not obvious, that our exact solution of tropicalized massive scalar quantum field theory and the associated non-linear recursion have interpretations as topological recursions.  

From a purely mathematical standpoint, the first part of this article proves a formula for the generating function of certain volumes of moduli spaces of metric graphs \cite{culler1986moduli} (see also \cite{chan2021moduli}). The necessary volume forms are tropicalized versions of parametric Feynman integrands.
The algorithm in the second part samples points from these moduli spaces uniformly with respect to these forms.
Such volume computations have an analogy in the moduli space of Riemann surfaces, thanks to the seminal work of Mirzakhani~\cite{Mirzakhani:2006fta}, which also provided input to the surfaceology framework~(see, e.g.,~\cite[\S 8.2]{arkani2023all}). 
Her recursion computes the volumes of moduli spaces of Riemann surfaces with fixed boundary lengths.  %

In the context of amplitude computations, volume forms over moduli spaces of graphs were previously studied by Berghoff~\cite{Berghoff:2017dyq}. We also remark that the moduli space of graphs naturally emerges as an infrared limit of the moduli space of Riemann surfaces in string perturbation theory \cite{wittenTropical,Tourkine:2013rda} and that tropical geometric methods found applications in string topology \cite{Mandel:2019uae}.

\subsection{Plan of the paper}
Section~\ref{sec:brieftrop} targets readers who are familiar with the path-integral formulation of quantum field theory. There, we briefly summarize the gist of tropicalized quantum field theory. We recall more specific prerequisites on parametric perturbation theory, tropical geometry, and the Hepp bound in Section~\ref{sec:pre}.
In this article, we focus on massive scalar quantum field theories in a Euclidean spacetime. 
However, there is much flexibility concerning these restrictions, and likely the results of this article can be extended in various ways. We give an overview of weaker but sufficient restrictions beyond the massive Euclidean case in Section~\ref{sec:positivity}.
 We discuss tropicalized quantum field theory in depth and prove the tropical loop equation in Section~\ref{sec:tropqft}. For the proof, we define the tropical effective action in Section~\ref{sec:tropeff} as a generating function of Hepp-weighted graphs, which we then prove to be equal to the quantum effective action in the tropical limit. We show that the tropical loop equation can be interpreted as a Dyson--Schwinger equation in Section~\ref{sec:perttropqft} and discuss tropicalized quantum field theory in the critical dimension, where the tropical loop equation degenerates to a non-linear ODE, in Section~\ref{sec:crit}. Remarkably, we find that renormalization and tropicalization are naturally compatible.

    In Section~\ref{sec:algo}, we introduce the sampling algorithm and prove its correctness. 
There, for the benefit of readability, we will narrow our scope from general massive scalar quantum field theories to massive $\phi^k$ theory. 
We will start in Section~\ref{sec:modulimetric} by interpreting perturbation theoretic problems in massive  $\phi^k$ theory as integration problems over moduli spaces of graphs. In Section~\ref{sec:preproc}, we translate the tropical loop equation, the manifestation of the exact solvability of tropicalized quantum field theory, into a non-linear recursion equation for the perturbative coefficients of the $n$-point correlation functions at all orders. This recursion can be solved in polynomial time.
Based on this explicit recursion, we can state the global, polynomial-time sampling algorithm in Section~\ref{sec:poly}, where we also prove its correctness.

A proof-of-concept \texttt{C++} implementation of the algorithm is included in the ancillary files of the arXiv version of this article and available as a GitHub repository.\footnote{\url{https://github.com/michibo/amplitrop}}
In Section~\ref{sec:applications}, we present some illustrative computations we performed with this implementation.
In the final Section~\ref{sec:perspective}, 
we outline possible perspectives and generalizations arising from the open directions highlighted throughout this work, pointing toward future investigations.

\subsection{A brief overview of tropicalized quantum field theory}
\label{sec:brieftrop}
A quantum field theory is typically defined by specifying a space of field  configurations, which consists of sections of some bundle over a spacetime manifold, and an action functional encoding the dynamics of the theory.  A prototypical example is the manifold $\R^D$ with the trivial line bundle, the field configuration space given by functions $\Phi: \R^D \rightarrow \R$, and the action functional,
\begin{align} \label{eq:S} \mathcal S[\Phi,J] = \int_{\R^D} \dd^{D} x \,\bigg( \frac12 \Phi(x) \left(\Box +m^2 \right) \Phi(x) - V[\Phi](x) - J(x) \Phi(x) \bigg), \end{align}
which associates a number to each bosonic field configuration $\Phi,J$ on $\R^D$. We will work with a general scalar potential $V[\Phi](x)=\sum_{k\geq 3} \frac{ \lambda_k'}{k!} \Phi(x)^k$ with coupling constants $\lambda_3', \lambda_4',\ldots$, and the negative Laplacian on $\R^D$: $\Box = - \partial_\mu \partial^\mu$. %

Our opening move is to \emph{deform} each member of the scalar QFT family \eqref{eq:S} via a positive real parameter $\xi$ as follows:
\begin{align} \label{eq:Sxi} \mathcal S^\xi[\Phi,J] = \int_{\R^{D\cdot \xi}} \dd^{D\cdot \xi} x \,\bigg( \frac12 \Phi(x) \left(\Box +m^2 \right)^\xi \Phi(x) - V[\Phi](x) - J(x) \Phi(x) \bigg), \end{align}
where we rescaled the spacetime dimension as $D \mapsto D\cdot \xi$ and powered the propagator as $(\Box + m^2) \mapsto (\Box + m^2)^{\xi}$.
Obviously, $\mathcal S[\Phi]=\mathcal S^1[\Phi]$.
We will use dimensional regularization~\cite{tHooft:1972tcz}. So, the values of the parameters $D$ and $\xi$ will not necessarily be integers. 
It is convenient to pass to dimensionless coupling constants $\lambda_k =\mu^{\xi((k-2)D/2 -k)} \lambda_k'$, where $\mu$ is an auxiliary mass parameter.

The first message of this article is that we can rightfully call $\xi \rightarrow 0^+$ the \emph{tropical limit} of \eqref{eq:S}.

Before diving into the precise meaning of this statement, we briefly recall the path integral workflow. We will not use path integrals in our proofs or later in the article. Here, in this section, they are convenient to provide a quick overview of the essence of tropicalized quantum field theory from a physical perspective. We write
\begin{align*} Z^\xi[J] = \int \exp\left(-\mathcal S^\xi [\Phi,J] \right)\, \mathcal D[\Phi]\, , \end{align*}
for the partition functional that only depends on the source field $J$.
Following a standard quantum field theory stream of logic (see, e.g.,~\cite{Weinberg:1995mt}), we define the 
free energy functional $W^\xi[J] = \log Z^\xi[J]$ and the effective action functional, 
$\Gamma^\xi[\Phi_c]= W^\xi[J[\Phi_c]] - \Phi_c \cdot J[\Phi_c]$, 
 which is the Legendre transform of $W^\xi[J]$ with respect to the classical field $\Phi_c[J] = {\delta W^\xi}/{\delta J}$.
The momentum-space $n$-point correlation functions are then
$ \Gamma^\xi_{n}(p_1,\ldots,p_n)= {\delta^n \Gamma^\xi[\Phi_c]}/({\delta \Phi_c(p_1) \cdots \delta \Phi_c(p_n)}). $

We will show that the $n$-point correlation functions behave well in the tropical limit $\xi \rightarrow 0^+$.
Explicitly, we will prove that for Euclidean momenta $p_1,\ldots,p_n$ and a positive mass $m^2>0$ (see Section~\ref{sec:positivity} for weaker assumptions, which are also sufficient), we find 
\begin{align} \label{eq:deftropeff}  \Gamma^\xi_{n}(p_1,\ldots,p_n) = \mu^{\xi(D+n(1-D/2))} \cdot \delta^{({D\cdot \xi})}\left( \sum_{i=1}^n p_i\right) \cdot \left( \frac{\partial^n \Gamma^\tr}{\partial \varphi^n}\Big|_{\varphi=0} + \smallO(\xi)\right) \text{ as } \xi \rightarrow 0^+ , \end{align}
where $\Gamma^\tr= \Gamma^\tr(D,\varphi,\lambda_3,\lambda_4,\ldots)$ is a dimensionless \emph{function} (and not a functional) in the dimension $D$, the scalar formal variable $\varphi$, and the dimensionless coupling constants $\lambda_3,\lambda_4,\ldots$  We will refer to this function as the \emph{tropical effective action}. %

The key result of this article is that the tropical effective action, $\Gamma^\tr$, is completely determined by the following partial differential equation (PDE),
 which we call the \emph{tropical loop equation}, 
\begin{align} \label{eq:pde} \mathcal P_D \, \Gamma^\tr = \left(1- \frac{\partial^2 \Gamma^\tr}{ \partial \varphi^2} \right)^{-1} - 1, \end{align}
with the linear, first-order differential operator $\mathcal P_D$ on the left-hand side,
\begin{gather} \label{eq:PD} \mathcal P_D = \left( - D - \left(1-\frac{D}{2}\right) \varphi \frac{\partial}{\partial \varphi} +\sum_{k\geq 3} \left( k-D \left(\frac{k}{2}-1\right)\right) \lambda_k \frac{\partial}{\partial \lambda_k} \right), \end{gather}
and the boundary data, which is given by 
\begin{align} \label{eq:boundary} \lim_{\hbar \rightarrow 0}\, \hbar \cdot \Gamma^\tr\left(D, \varphi {\hbar}^{-\frac12}, \hbar^{\frac12} \lambda_3,\hbar^{\frac22} \lambda_4, \hbar^{\frac32} \lambda_5,\ldots\right) = \sum_{k \geq 3} \lambda_k \frac{\varphi^k}{k!}. \end{align}
Eq.~\eqref{eq:boundary} fixes $\Gamma^\tr$ at tree-level
and eq.~\eqref{eq:pde} uniquely determines $\Gamma^\tr$ beyond that (i.e.~at loop-level).
These formulas form the foundation of all other results in this article.

Eq.~\eqref{eq:pde} is a PDE in infinitely many parameters, which often reduces transparently to a PDE with only a finite number of parameters. 
For instance, %
specializing the tropical effective action to the $\phi^k$-theory case, i.e., specifying 
$\Gamma^\tr_{\phi^k} := \Gamma^\tr(D,\varphi, 0, \ldots, 0,\lambda_k, 0,\ldots)$ leads to the PDE
\begin{align} \label{eq:troploopphik} \mathcal P_D^{\phi^k} \, \Gamma^\tr_{\phi^k} = \left(1- \frac{\partial^2 \Gamma^\tr_{\phi^k}}{ \partial \varphi^2} \right)^{-1} - 1, \end{align}
where the differential operator is 
$$
\mathcal P_D^{\phi^k} = 
  \left( - D - \left(1-\frac{D}{2}\right) \varphi
\frac{\partial}{\partial \varphi}
+\left( k-D \left(\frac{k}{2}-1\right)\right) \lambda_k \frac{\partial}{\partial \lambda_k} \right)
$$ and the boundary condition $ \lim_{\hbar \rightarrow 0}\, \hbar \cdot \Gamma^\tr\left(D, \varphi {\hbar}^{-\frac12}, 0,\ldots,\hbar^{\frac{k-2}{2}} \lambda_k, 0,\ldots\right) = \lambda_k \frac{\varphi^k}{k!}.$
Analogously, we get a finite-dimensional PDE problem if we restrict to finitely many non-zero couplings.

We leave the exploration of tropicalized quantum field theories beyond the scalar and massive case to future works. It even seems possible that gauge theories can be  tropicalized. Tropicalized quantum field theory captures only the dominant singular behaviour of the original quantum field theory as a function of the spacetime dimension $D$. Since gauge theories are generally compatible with dimensional regularization, it is plausible that they, too, admit a tropicalization.

\section*{Acknowledgments}
I am deeply grateful to Erik Panzer for countless discussions on the Hepp bound since 2016. I am also very thankful to Paul Balduf, Dario Benedetti, Francis Brown, Freddy Cachazo, Kevin Costello, Gerald Dunne, Davide Gaiotto, Dirk Kreimer, Oliver Schnetz, and Karen Yeats for their valuable comments and insights.
I owe special thanks to Nima Arkani-Hamed, Carolina Figueiredo, and Giulio Salvatori for patient and generous explanations of surfaceology.
I thank the anonymous referee for helpful comments, suggestions, and useful pointers to the literature.

Research at Perimeter Institute is supported in part by the Government of Canada through the Department of Innovation, Science and Economic Development and by the Province of Ontario through the Ministry of Colleges and Universities. 

\section{Preliminaries}
\label{sec:pre}
\subsection{Perturbation theory}
\label{sec:pert}

The (Feynman) graphs typically used in quantum field theory are finite, one-dimensional CW complexes.    Our graphs will have no vertices of~degree $0$ or $2$. Vertices of degree 1 and the adjacent edges have a special role. They are called legs. 
We do not allow edges that only connect a pair of legs.
We reserve the name vertex for vertices that are not legs. %
The edges incident to legs are called external edges, and the term edge is reserved for all other (internal) edges.
A graph is bridgeless if an (internal) edge's removal does not change the number of connected components. It is 1PI (1-particle-irreducible) if it is connected and bridgeless. 
We consider the legs to be labelled uniquely by integers $1,\ldots,n$, and we write $\Aut (G)$ for the automorphism group  of a graph $G$. 
Automorphisms are required to retain the leg labels.
We write 
$\mathcal G_n$ for the set of isomorphism classes of 1PI graphs with $n$ legs.

The $n$-point correlation functions associated with the action~\eqref{eq:S} can be computed perturbatively (i.e.~as a power series expansion) by summing over all 1PI graphs:
\begin{align} \label{eq:defGamma} \Gamma_{n}^\xi(p_1,\ldots,p_n) = \sum_{\substack{G\in \mathcal G_n}} \frac{\prod_{v\in V_G} \left( \mu^{\xi(|v| - {(|v|-2)} D/2)} \lambda_{|v|} \right)} {|\Aut (G)|} I_G^\xi(p_1,\ldots,p_n) , \end{align}
Here, for each of $G$'s vertices $v\in V_G$, we denote its degree by $|v|$. The product over $V_G$ produces the coupling constant monomial associated with the graph $G$. The powers of the auxiliary mass parameter $\mu$ are consequences of the choice of dimensionless coupling constants $\lambda_k$.
We interpret $\Gamma_n^\xi$ as a power series in the formal parameters $\lambda_3,\lambda_4,\ldots$ whose coefficients are functions of $\xi, D, m,p_1,p_2,\ldots$ %

The last term $I_G^\xi(p_1,\ldots,p_n)$ denotes the Feynman integral associated with $G$. It is given by
\begin{align*}  I_G^\xi(p_1,\ldots,p_n)= \prod_{e \in E_G} \left( \int_{\R^{D\cdot\xi}} \frac{\dd^{D\cdot\xi} k_e}{(2\pi)^{D\cdot\xi}} \frac{1}{(k_e^2+m^2)^\xi} \right) \prod_{v \in V_G} \widehat \delta^{({D\cdot\xi})}\left( \sum_{e\in E_G}\mathcal E_{v,e} k_e + \sum_{i=1}^n \mathcal P_{v,i} p_i \right) \end{align*}
The integration over $k_e$ for each of $G$'s edges in $E_G$ extends to the right-most factor,
where $\mathcal E_{v,e}$ is $G$'s incidence matrix computed by fixing an arbitrary orientation of $G$'s edges and setting $\mathcal E_{v,e} = +1/-1$ if the edge $e$ is directed towards/away from the vertex $v$ 
and $\mathcal E_{v,e}=0$ if $e$ is not incident to $v$.
The matrix $\mathcal P_{v,i}$ is computed analogously by setting $\mathcal P_{v,i} =1$ if the $i$-th leg 
connects to vertex $v$ and zero otherwise. 
Here, $\widehat \delta^{({D})}(x) = (2\pi)^D \delta^{(D)}(x)$ is a convenient normalization of the usual $D$-dimensional Dirac $\delta$-function.

Passing to the Feynman parametric representation (see, e.g.,~\cite[Chapters~2--3]{Weinzierl:2022eaz}) and splitting off an overall $\delta$-function yields a compact expression for the $\xi$-deformed Feynman integral 
\begin{align} \begin{gathered} \label{eq:defFeynman} I_G^\xi(p_1,\ldots,p_n) = \mu^{-2\xi\cdot\omega_D(G)} \cdot \widehat \delta^{({D\cdot \xi})}\left( \sum_{i=1}^n p_i\right) \cdot \widetilde I_G^\xi(p_1,\ldots,p_n), \textrm{ where } \\
\widetilde I_G^\xi(p_1,\ldots,p_n) = 1 \textrm{ if } L(G) = 0 \textrm { and } \\
\widetilde I_G^\xi(p_1,\ldots,p_n) =   \frac{\Gamma(\omega_D(G) \cdot \xi) } {\Gamma(\xi)^{|E_G|} (4\pi)^{L(G)\cdot D \cdot \xi /2} } \int_{\mathbb P_{>0}^{E_G}} \left( \frac{{\prod_{e\in E_G} x_e}} { \mathcal U_G(\bb x)^{D /2} \cdot \mathcal V_G(\bb x)^{\omega_D(G)} } \right)^{\xi} \Omega  \text{ if } L(G) \geq 1\, .    \end{gathered} \end{align}
Here, the loop number of $G$ is $L(G)$, which is equal to the rank of $G$'s fundamental group. As $G$ is connected, we have by the usual Euler characteristic formula, $L(G)= |E_G|-|V_G|+1$.
The \emph{superficial degree of divergence} of $G$ is the number $\omega_D(G)=|E_G|-D\cdot L(G)/2$. %
Further, the Euler $\Gamma$-function is defined by $\Gamma(x) = \int_0^\infty t^{x} e^{-t} \frac{\dd t}{t}$,
the domain $\mathbb P_{>0}^{E_G}$ is the 
positive part of $|E_G|-1$ dimensional real projective space $\mathbb P_{>0}^{E_G} = \{ [x_1:\cdots:x_{|E_G|}] \in \mathbb{RP}^{|E_G|-1}: x_e > 0\}$, and 
\begin{align} \label{eq:Omega} \Omega = \sum_{e=1}^{|E_G|} (-1)^{|E_G|-e} {\dd \log x_1} \wedge \cdots \wedge \widehat{ \dd \log x_e} \wedge \cdots \wedge {\dd \log x_{|E_G|}} \end{align}
is the canonical volume form on the projective simplex.
The Symanzik polynomials $\mathcal U_G(\bb x)$, $\mathcal F_G(\bb x)$, and the rational function $\mathcal V_G(\bb x) = \mathcal F_G(\bb x) / \mathcal U_G(\bb x)$ 
respect the combinatorial formulas,
\begin{align} \begin{aligned} \label{eq:UF} \mathcal U_G(\bb x) &= \sum_{T \subset G} \prod_{e \not \in T} x_e &&\text { and }& \mathcal F_G(\bb x) &= \sum_{F \subset G} \frac{p(F)^2}{\mu^2} \prod_{e\not \in F} x_e + \frac{m^2}{\mu^2} \cdot \mathcal U_G(\bb x) \cdot \sum_{e\in E_G} x_e, \end{aligned} \end{align}
where we sum over all spanning trees $T$  and all spanning two-forests $F$ of $G$, respectively, and $p(F)$ is the total momentum flowing between $F$'s two components. In contrast to typical treatments, we chose to extract the mass dimension of the polynomial $\mathcal F_G$ for later convenience.
Note that the expression for $\widetilde I_G^\xi(p_1,\ldots,p_n)$
is dimensionless and explicitly depends on $D$ and $\xi$.

\subsection{Divergences of Feynman integrals and positivity assumptions}
\label{sec:positivity}

The parametric Feynman integral in eq.~\eqref{eq:defFeynman} is divergent for many graphs $G$ and many values of $D$, $m$ and $p_1,p_2, \ldots$ If the external momenta $p_i$ are Euclidean, we have effective convergence criteria (see~\cite{Speer:1975dc}). 
In the Minkowskian case, 
no such combinatorial convergence criteria exist yet. 
Even in the pseudo-Euclidean regime (see, e.g., \cite{Borinsky:2023jdv}), where all the coefficients of $\mathcal F_G$ are positive, only convergence theorems, which require cumbersome computations involving polytopes, can be applied in general  (see, e.g., \cite[Theorem~2.3]{Borinsky:2020rqs}). 
A vanishing mass $m^2$ complicates the question whether a given Feynman integral is convergent or not, and general Minkowski space Feynman integrals can usually only be evaluated unambiguously after a specific deformation of the integration domain via the Feynman $i\varepsilon$ prescription (see, e.g., \cite{Hannesdottir:2022bmo}).

To avoid these additional complications, we make the simplifying assumption that our kinematic parameters are Euclidean and that the mass of the scalar field is positive $m^2 >0$. In fact, the weaker assumptions of the kinematics being in the pseudo-Euclidean regime as defined in~\cite{Borinsky:2023jdv} or the co-positive cone as defined in~\cite{Sturmfels:2025wpg} are sufficient together with the $m^2>0$ assumption or the assumption that the external momenta are sufficiently generic (see, e.g., \cite{Borinsky:2023jdv}). 

Even under all these assumptions, the Feynman integral in eq.~\eqref{eq:defFeynman} is still often divergent.
However, the divergences are well understood. For instance, for a fixed graph $G$, the integral~\eqref{eq:defFeynman} exists if the dimension $D$ is sufficiently small, and it is a meromorphic function in $D$  \cite{Speer:1969sjv}. A meromorphic function has a well-defined analytic continuation to the whole complex plane, so even if~\eqref{eq:defFeynman} is only defined for small $D$, we define it for almost all $D$ via analytic continuation.
\subsection{Tropical algebra}

For a polynomial $p(\bb x) = \sum_{\bb k \in I} a_{\bb k} \prod_{i=1}^n x_i^{k_i}$,
where $I$ is some set of multi-indices $\bb k =(k_1,\ldots,k_n)$ such that $a_{\bb k} \neq 0$ for all $\bb k \in I$,
we define the \emph{tropical approximation} of $p$ as
\begin{align*}  p^\tr(\bb x) = \max_{\bb k\in I} \prod_{i=1}^n x_i^{k_i}. \end{align*}
For example, if $p(\bb x) = x_1^4 + \frac23 x_2^2 x_3^2 + 10^5 x_1 x_3^3$,
then $p^\tr(\bb x) = \max \{ x_1^4, x_2^2 x_3^2, x_1 x_3^3\}$ is the piece-wise
monomial function in $x_1,x_2,x_3,x_4$ that is equal to either $x_1^4, x_2^2 x_3^2,$ or $x_1 x_3^3$ depending on 
which monomial is the largest for the given values of $x_1,x_2,x_3,$ and $x_4$. The tropical approximation disregards the 
precise coefficients of the original polynomial $p(\bb x)$. 
However, the function $p^\tr$ can change drastically if one of the coefficients is set to $0$.
What follows is a key property of $p^\tr$:
\begin{lemma}
\label{lmm:tropapprox}
If $p$ is a homogeneous polynomial in $x_1,\ldots,x_n$ with only positive coefficients, then
there are constants $C_1, C_2 > 0$ such that
\begin{align*} C_1 \leq \frac{p(\bb x)}{p^\tr(\bb x)} \leq C_2 \text{ for all } \bb x \in \mathbb P_{>0}^n, \end{align*}
with $\mathbb P_{>0}^n = \{ [x_1:\ldots:x_n] \in \mathbb {RP}^{n-1} : x_i > 0 \text{ for } i =1,\ldots, n\}$.
\end{lemma}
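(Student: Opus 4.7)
The plan is to produce explicit constants directly from the coefficients of $p$. Write $p(\bb x) = \sum_{\bb k \in I} a_{\bb k} \bb x^{\bb k}$ with $I$ finite and every $a_{\bb k} > 0$. Since $p$ and $p^\tr$ are both positively homogeneous of the same degree (that of $p$), their ratio descends to a well-defined positive function on $\mathbb P_{>0}^n$, so it is meaningful to look for global, $\bb x$-independent bounds.

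The upper bound is immediate: each monomial is bounded by the maximum one, so
\begin{align*}
p(\bb x) = \sum_{\bb k \in I} a_{\bb k} \bb x^{\bb k} \;\leq\; \Bigl(\sum_{\bb k \in I} a_{\bb k}\Bigr) \max_{\bb k \in I} \bb x^{\bb k} = C_2 \cdot p^\tr(\bb x),
\end{align*}
with $C_2 := \sum_{\bb k \in I} a_{\bb k}$. For the lower bound, pick (for each $\bb x$) an index $\bb k^* \in I$ realising the maximum in $p^\tr(\bb x)$. Positivity of every $a_{\bb k}$ means discarding the other terms only decreases $p$, yielding
\begin{align*}
p(\bb x) \;\geq\; a_{\bb k^*} \bb x^{\bb k^*} \;\geq\; \Bigl(\min_{\bb k \in I} a_{\bb k}\Bigr) \cdot p^\tr(\bb x),
\end{align*}
so $C_1 := \min_{\bb k \in I} a_{\bb k} > 0$ works.

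There is no substantive obstacle; the only point to keep in mind is that positivity of the coefficients is essential. With mixed signs one could have cancellations making $p(\bb x)$ vanish at points where $p^\tr(\bb x) > 0$, destroying any uniform lower bound on $p(\bb x)/p^\tr(\bb x)$. Homogeneity, in turn, is what guarantees the ratio is a function on projective space in the first place; without it, the bounds would still hold pointwise on $\R_{>0}^n$ but the statement as phrased would not be well-posed.
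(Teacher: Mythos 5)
Your proof is correct and uses exactly the same constants as the paper, which simply sets $C_1 = \min_{\bb k \in I} a_{\bb k}$ and $C_2 = \sum_{\bb k \in I} a_{\bb k}$ without further elaboration. Your additional remarks on homogeneity and the necessity of positive coefficients are accurate and fill in what the paper leaves implicit.
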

\begin{proof}
Let 
$C_1 = \min_{\bb k \in I} a_{\bb k}$ and $C_2 = \sum_{\bb k \in I} a_{\bb k}$.
\end{proof}

A stronger version of this statement is proven in \cite{Borinsky:2020rqs}, which also allows for negative 
coefficients of $p$.
The function $p^\tr$ carries the same information as the \emph{Newton polytope} $\mathrm{N}[p]\subset \R^n$ of $p$.
The piece-wise linear function, $\bb y \mapsto \log p^\tr(e^{y_1},\ldots,e^{y_n}) = \max_{\bb v \in \mathrm{N}[p]} \bb v^T \bb y$ is the traditional \emph{tropicalization} of $p$ \cite{MR3287221}.

\subsection{The Hepp bound}
\label{sec:hepp}
We will show that 
we can continuously deform the general scalar quantum field theory~\eqref{eq:S} into a tropicalized quantum field theory, which is perturbatively described by 
tropicalized parametric Feynman integrals (see the right-hand side of eq.~\eqref{eq:tropFI} below). Such tropicalized Feynman integrals were introduced by Panzer in \cite{Panzer:2019yxl} as the \emph{Hepp bound}. 

\begin{proposition}
\label{prop:troplimit}
Provided that all coefficients of the $\mathcal F_G$ polynomial are positive, 
the parametric Feynman integral can be replaced by its tropicalized counterpart
in the limit $\xi\rightarrow 0^+$. That means
\begin{align} \begin{aligned} \label{eq:tropFI} \widetilde I_G^\tr(p_1,\ldots,p_n) :&= \lim_{\xi \rightarrow 0^+} \widetilde I_G^\xi(p_1,\ldots,p_n) \\
&=    \frac{1}{\omega_D(G)} \int_{\mathbb P_{>0}^{E_G}} \frac{{\prod_{e\in E_G} x_e}} { \mathcal U_G^\tr(\bb x)^{D /2} \cdot \mathcal V_G^\tr(\bb x)^{\omega_D(G)} } \, \Omega  \quad \text{ if } L(G) \geq 1\, ,    \end{aligned} \end{align}
and 
$ \widetilde I_G^\tr(p_1,\ldots,p_n) := \lim_{\xi \rightarrow 0^+} \widetilde I_G^\xi(p_1,\ldots,p_n) = I_G^\xi(p_1,\ldots,p_n)= 1 \text{ if } L(G) =0 $,
where  
$\mathcal V_G^\tr(\bb x)= \mathcal F_G^\tr(\bb x) /\mathcal U_G^\tr(\bb x)$ with  the tropical approximations of $\mathcal U_G$ and $\mathcal F_G$ from eq.~\eqref{eq:UF}.
\end{proposition}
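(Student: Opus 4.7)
The strategy is to compute the $\xi \to 0^+$ asymptotics of the $\Gamma$-function prefactor and of the projective integral separately, and then verify that they combine to produce the claimed formula. The $L(G)=0$ case is trivial: $\widetilde I_G^\xi = 1$ for all $\xi$. So I assume $L(G) \geq 1$.

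First, from $\Gamma(z) = z^{-1} + O(1)$ near $z=0$, one finds
$$\frac{\Gamma(\omega_D(G)\xi)}{\Gamma(\xi)^{|E_G|}(4\pi)^{L(G)D\xi/2}} = \frac{\xi^{|E_G|-1}}{\omega_D(G)}\bigl(1 + o(1)\bigr) \text{ as } \xi \to 0^+.$$
Thus the prefactor produces a vanishing factor $\xi^{|E_G|-1}$ that must be matched by a divergence of the integral. The key is to exhibit this divergence via a rescaling that exploits the piecewise-monomial character of the tropical approximation.

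Let $g(\bb x) = \prod_e x_e \,\mathcal U_G(\bb x)^{-D/2}\mathcal V_G(\bb x)^{-\omega_D(G)}$ and denote by $g^\tr$ the same expression with $\mathcal U_G^\tr$ and $\mathcal F_G^\tr$ in place of $\mathcal U_G$ and $\mathcal F_G$; both functions are homogeneous of degree zero on $\mathbb P_{>0}^{E_G}$. Since $\log g^\tr$ is piecewise linear in $(\log x_e)_{e\in E_G}$, for every $t>0$ the coordinate-wise power automorphism $\phi_t : [x_e] \mapsto [x_e^t]$ of $\mathbb P_{>0}^{E_G}$ satisfies $g^\tr \circ \phi_t = (g^\tr)^t$, while $\phi_t^{\ast} \Omega = t^{|E_G|-1}\Omega$ because each $d\log x_e$ picks up a factor of $t$. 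Pulling back along $\phi_{1/\xi}$ therefore yields
$$\int_{\mathbb P_{>0}^{E_G}} g^\xi \,\Omega \;=\; \xi^{-(|E_G|-1)} \int_{\mathbb P_{>0}^{E_G}} \bigl((g/g^\tr)\circ\phi_{1/\xi}\bigr)^\xi \cdot g^\tr \,\Omega,$$
where I have used $(g\circ\phi_{1/\xi})^\xi = ((g/g^\tr)\circ\phi_{1/\xi})^\xi \cdot (g^\tr\circ\phi_{1/\xi})^\xi$ and collapsed the second factor to $g^\tr$. This is precisely the $\xi^{-(|E_G|-1)}$ divergence we wanted.

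Finally, the positivity hypothesis on $\mathcal F_G$ together with the trivial positivity of $\mathcal U_G$ allows Lemma~\ref{lmm:tropapprox} to be applied to both Symanzik polynomials, giving constants $0 < c \leq g/g^\tr \leq C < \infty$ uniformly on $\mathbb P_{>0}^{E_G}$. Consequently $((g/g^\tr)\circ\phi_{1/\xi})^\xi \to 1$ uniformly as $\xi \to 0^+$, while the entire integrand is dominated by $C \cdot g^\tr$. Lebesgue's dominated convergence theorem then gives
$$\lim_{\xi \to 0^+} \int_{\mathbb P_{>0}^{E_G}} \bigl((g/g^\tr)\circ\phi_{1/\xi}\bigr)^\xi g^\tr \,\Omega = \int_{\mathbb P_{>0}^{E_G}} g^\tr \,\Omega,$$
and multiplying by the prefactor asymptotics produces $\widetilde I_G^\tr = \omega_D(G)^{-1} \int g^\tr\,\Omega$, as claimed. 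The main obstacle is the integrability of $g^\tr$ needed for the dominated convergence step: this can fail when the graph has tropical subdivergences, but under the convergence hypotheses of Section~\ref{sec:positivity} it holds for $D$ in a suitable domain, and in general one should regard both sides of the proposition as meromorphic in $D$, perform the limit where $g^\tr$ is integrable, and then analytically continue.
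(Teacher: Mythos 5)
Your proof is correct and follows essentially the same route as the paper: the coordinate rescaling $x_e \mapsto x_e^{1/\xi}$ (your $\phi_{1/\xi}$, the paper's $\iota_\xi$) producing the Jacobian factor $\xi^{-(|E_G|-1)}$ that cancels against the $\Gamma$-function prefactor, followed by Lemma~\ref{lmm:tropapprox} to dominate the integrand and pass to the tropical limit. Your explicit factorization $(g\circ\phi_{1/\xi})^\xi = ((g/g^\tr)\circ\phi_{1/\xi})^\xi\, g^\tr$ and the remark on integrability of $g^\tr$ are slightly more careful presentations of steps the paper leaves terse, but the argument is the same.
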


\begin{proof}
Let $\xi > 0$ and consider the smooth automorphism $\iota_\xi : \mathbb P_{>0}^{E_G} \rightarrow \mathbb P_{>0}^{E_G}$ given in homogeneous coordinates by $x_k \mapsto x_k^{1/\xi}$.
Pulling back the canonical form $\Omega$ from eq.~\eqref{eq:Omega} along $\iota_\xi$ gives $\iota_\xi^* \Omega = \xi^{1-|E_G|} \Omega$.
Hence, using the transformation $\iota_\xi$ on the
integral expression of \eqref{eq:defFeynman} results in
\begin{align*} \widetilde I_G^\xi(p_1,\ldots,p_n) =  \frac{\xi \cdot \Gamma(\omega_D(G) \cdot \xi)} {(\xi \cdot \Gamma(\xi))^{|E_G|} (4\pi)^{L(G)\cdot D \cdot \xi /2} } \int_{\mathbb P_{>0}^{E_G}} \left( \frac{{\prod_{e\in E_G} x_e^{1/\xi}}} { \mathcal U_G(\bb x^{1/\xi})^{D /2} \cdot \mathcal V_G(\bb x^{1/\xi})^{\omega_D(G)} } \right)^{\xi} \Omega, \end{align*}
where $\mathcal U_G(\bb x^{1/\xi})$ stands for $\mathcal U_G(x_1^{1/\xi},\ldots,x_{|E_G|}^{1/\xi})$ and $\mathcal V_G(\bb x^{1/\xi})$ analogously.

For any polynomial $p(\bb x) = \sum_{\bb k \in I} a_{\bb k} \prod_{i=1}^n x_i^{k_i}$ with only positive coefficients $a_{\bb k} > 0$, we have
\begin{align*} \lim_{\xi\rightarrow 0^+} p(x_1^{1/\xi},\ldots,x_n^{1/\xi})^{\xi} = \lim_{\xi\rightarrow 0^+} \left( \sum_{\bb k \in I} a_{\bb k} \left( \prod_{i=1}^n x_i^{k_i} \right)^{\frac{1}{\xi}} \right)^\xi = \max_{\bb k \in I} \prod_{i=1}^n x_i^{k_i} = p^{\tr}(x_1,\ldots,x_n). \end{align*}
The statement follows after using Lemma~\ref{lmm:tropapprox} and
$\Gamma(\omega_D(G) \cdot \xi) = \frac{1}{\omega_D(G) \cdot \xi} + \bigO(1)$ for $\xi \rightarrow 0$.
\end{proof}

Panzer's Hepp bound beautifully captures the combinatorics of the integral on the right-hand side of eq.~\eqref{eq:tropFI}. The original Hepp bound only addresses the $\omega_D(G) \rightarrow 0$ case in eq.~\eqref{eq:tropFI}. In this case, the \emph{Feynman integral} reduces to a \emph{Feynman period} whose value is explicitly bounded by the Hepp bound. This bound follows from an application of Lemma~\ref{lmm:tropapprox} under the integral sign on the right-hand side of~\eqref{eq:tropFI}. Our generalization of this object to arbitrary values of $\omega_D(G)$ is quite transparent. It gives a bound on the full Feynman integral provided that the values of the external kinematics are controlled. So, we still call it the Hepp bound.

\begin{definition}
\label{def:Hepp}
To each graph $G$, we assign the Hepp bound $\mathcal H_D(G)$, a rational function in $D$ by fixing 
$\mathcal H_D(G) = 1$ if $G$ is a graph without edges (i.e.~a disjoint union of vertices),
and otherwise  
\begin{align} \label{eq:HDedgerec} \mathcal H_D(G) = \frac{1}{\omega_D(G)} \sum_{e\in E_G} \mathcal H_D(G\setminus e), \end{align}
which recursively defines the Hepp bound for all graphs $G$.
(Recall $\omega_D(G) = |E_G| - D \cdot L(G)/2$.)
\end{definition}
Our definition differs slightly from the original one in \cite{Panzer:2019yxl},
which defined the Hepp bound via a sum over all permutations  and included an overall factor of $\omega_D(G)$. 
See Proposition~6.8 \cite{Borinsky:2020rqs} for a proof that Panzer's original definition fulfills eq.~\eqref{eq:HDedgerec}.

Our viewpoint on the Hepp bound is similar to Salvatori's in \cite[Section~3.1]{Salvatori:2025oib}, who put forward an elegant perspective on the Hepp bound that focuses on the identity~\eqref{eq:HDedgerec}. For the reader's convenience, we will reprove and recombine results on the Hepp bound in this section, both because we will need them later  and to keep our exposition self-contained.
\begin{figure}
\begin{align*} \mathcal H_D\left( \begin{tikzpicture}[every path/.style={thick},baseline={(0,-.1)}] \def\rad{.6} \def\rud{.6} \def\radl{0} \coordinate (v1) at (-\rud,0); \coordinate (v2) at (\rud,0); \coordinate (v01) at (-\radl,0); \coordinate (v02) at (\radl,0); \filldraw[pattern color=col3,pattern=north east lines] (v1) circle(\rad); \filldraw[pattern color=col3,pattern=north east lines] (v2) circle(\rad); \fill (v01) circle(2pt); \fill (v02) circle(2pt); \node[fill=white,circle] at (v1) {$G_1$}; \node[fill=white,circle] at (v2) {$G_2$}; \end{tikzpicture} \right) = \mathcal H_D\left( \begin{tikzpicture}[every path/.style={thick},baseline={(0,-.1)}] \def\rad{.6} \def\rud{.8} \def\radl{.2} \coordinate (v1) at (-\rud,0); \coordinate (v2) at (\rud,0); \coordinate (v01) at (-\radl,0); \coordinate (v02) at (\radl,0); \filldraw[pattern color=col3,pattern=north east lines] (v1) circle(\rad); \filldraw[pattern color=col3,pattern=north east lines] (v2) circle(\rad); \fill (v01) circle(2pt); \fill (v02) circle(2pt); \node[fill=white,circle] at (v1) {$G_1$}; \node[fill=white,circle] at (v2) {$G_2$}; \end{tikzpicture} \right) = \mathcal H_D\left( \begin{tikzpicture}[every path/.style={thick},baseline={(0,-.1)}] \def\rad{.6} \def\rud{.8} \def\radl{.2} \coordinate (v1) at (-\rud,0); \coordinate (v01) at (-\radl,0); \filldraw[pattern color=col3,pattern=north east lines] (v1) circle(\rad); \fill (v01) circle(2pt); \node[fill=white,circle] at (v1) {$G_1$}; \end{tikzpicture} \right) \cdot \mathcal H_D\left( \begin{tikzpicture}[every path/.style={thick},baseline={(0,-.1)}] \def\rad{.6} \def\rud{.8} \def\radl{.2} \coordinate (v2) at (\rud,0); \coordinate (v02) at (\radl,0); \filldraw[pattern color=col3,pattern=north east lines] (v2) circle(\rad); \fill (v02) circle(2pt); \node[fill=white,circle] at (v2) {$G_2$}; \end{tikzpicture} \right) \end{align*}
\caption{Factorization property of the Hepp bound}
\label{fig:factor}
\end{figure}
\begin{proposition}
\label{prop:factor}
The Hepp bound factors for disconnected graphs and graphs with a cut vertex. %
Let $G_1$ and $G_2$ be graphs, $G_1\sqcup G_2$ their disjoint union, and $H$ the result of gluing together $G_1$ and $G_2$ at one vertex. 
The Hepp bound fulfills, as illustrated in Figure~\ref{fig:factor},
\begin{align*}  \mathcal H_D(H) = \mathcal H_D(G_1 \sqcup G_2) = \mathcal H_D(G_1) \cdot \mathcal H_D(G_2). \end{align*}
\end{proposition}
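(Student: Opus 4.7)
The plan is to prove both equalities simultaneously by strong induction on $|E_{G_1}| + |E_{G_2}|$. The starting observation is that $\omega_D$ is additive under both operations: the disjoint union case is clear, while for the cut-vertex gluing $H = G_1 \cup_v G_2$ the edge count is preserved but $|V_H| = |V_{G_1}| + |V_{G_2}| - 1$ and the connected-component count also drops by $1$, so the Euler characteristic formula gives $L(H) = L(G_1) + L(G_2)$. Hence $\omega_D(G_1 \sqcup G_2) = \omega_D(H) = \omega_D(G_1) + \omega_D(G_2)$. Moreover, the edge set of either construction is in bijection with $E_{G_1} \sqcup E_{G_2}$, and for every $e \in E_{G_i}$ edge deletion commutes with both constructions: $(G_1 \sqcup G_2) \setminus e = (G_1 \setminus e) \sqcup G_2$ and $H \setminus e = (G_1 \setminus e) \cup_v G_2$.

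The base case, in which both $G_1$ and $G_2$ are edgeless, is immediate since then all three graphs have $\mathcal H_D = 1$. For the inductive step, I assume the statement for all pairs with smaller total edge count, and that $G_1 \sqcup G_2$ (equivalently $H$) has at least one edge. Applying the recursion~\eqref{eq:HDedgerec} and splitting the resulting sum over $E_{G_1} \sqcup E_{G_2}$ into two partial sums, each summand with $e \in E_{G_1}$ factors by the induction hypothesis as $\mathcal H_D((G_1 \setminus e) \sqcup G_2) = \mathcal H_D(G_1 \setminus e) \cdot \mathcal H_D(G_2)$. Pulling $\mathcal H_D(G_2)$ out of the first partial sum and using \eqref{eq:HDedgerec} in reverse for $G_1$ itself rewrites $\sum_{e \in E_{G_1}} \mathcal H_D(G_1 \setminus e)$ as $\omega_D(G_1) \cdot \mathcal H_D(G_1)$; the same identity holds tautologically when $|E_{G_1}| = 0$, since then $\omega_D(G_1) = 0$ and both sides vanish. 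After the symmetric treatment of the second partial sum, the numerator becomes $(\omega_D(G_1) + \omega_D(G_2)) \cdot \mathcal H_D(G_1) \cdot \mathcal H_D(G_2)$, which exactly cancels the denominator $\omega_D(G_1 \sqcup G_2)$.

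The argument for $H$ is word-for-word identical, replacing $\sqcup$ by $\cup_v$ throughout, so both equalities follow in parallel. I do not expect any real obstacle here: the only delicate point is that the auxiliary identity $\sum_{e \in E_{G_i}} \mathcal H_D(G_i \setminus e) = \omega_D(G_i) \mathcal H_D(G_i)$ has to be invoked even when $G_i$ is edgeless, and this is handled by noting that $\omega_D$ vanishes on edgeless graphs, so both sides of the identity collapse to $0$.
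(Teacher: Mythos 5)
Your proof is correct and follows essentially the same route as the paper: induction on the total edge count, splitting the recursion \eqref{eq:HDedgerec} over $E_{G_1}\sqcup E_{G_2}$, applying the induction hypothesis, resumming via \eqref{eq:HDedgerec} in reverse, and cancelling against $\omega_D(G_1\sqcup G_2)=\omega_D(G_1)+\omega_D(G_2)$. Your explicit treatment of the additivity of $\omega_D$ in the cut-vertex case and of the degenerate situation where one factor is edgeless is slightly more careful than the paper's write-up, but the argument is the same.
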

This follows from the more general Proposition~2.24 \cite{{Panzer:2019yxl}} (see also~\cite[Eq.~(3.16)]{Salvatori:2025oib}). We replicate the proof here, adapted to the present notation and specific case.
\begin{proof}
The statement is trivial if $H$, $G_1$ and $G_2$ have no edges, as the Hepp bound is always $1$ then. We proceed by induction on the number of edges and assume that the statement holds for all graphs $G_1$ and $G_2$ such that $G_1\sqcup G_2$ has $n-1$ edges. For pairs $G_1,G_2$, where $G_1\sqcup G_2$ has $n$ edges, we get from eq.~\eqref{eq:HDedgerec} that 
$$
\mathcal H_D(G_1\sqcup G_2) = 
\frac{1}{\omega_D(G_1 \sqcup G_2)}
\left(
\sum_{e_1\in E_{G_1}} 
\mathcal H_D(G_1 \setminus e_1 \sqcup G_2 )
+
\sum_{e_2\in E_{G_2}} 
\mathcal H_D(G_1 \sqcup G_2 \setminus e_2 )
\right).
$$
On the right-hand side, we may use the induction hypothesis. So,
\begin{align*} \mathcal H_D(G_1\sqcup G_2) &= \frac{1}{\omega_D(G_1 \sqcup G_2)} \left( \sum_{e_1\in E_{G_1}} \mathcal H_D(G_1 \setminus e_1) \mathcal H_D( G_2 ) + \sum_{e_2\in E_{G_2}} \mathcal H_D(G_1) \mathcal H_D( G_2 \setminus e_2 ) \right) \\
&= \frac{\omega_D(G_1) + \omega_D(G_2)}{\omega_D(G_1 \sqcup G_2)} \mathcal H_D(G_1) \mathcal H_D( G_2 ), \end{align*}
where we used eq.~\eqref{eq:HDedgerec} again in the second step.
The statement follows from the definition of $\omega_D(G)$.
The argument works analogously for graphs $H$ as given in the statement.  
\end{proof}

\begin{proposition}
\label{prop:hepp}
For a 1PI graph $G$, any set of external momenta $p_1,\ldots, p_n$ and $m^2>0$ such that all coefficients of $\mathcal F_G$ are non-negative, we have 
\begin{gather} \label{eq:Ddecomposition} \widetilde I_G^\tr(p_1,\ldots,p_n) = \HE_D(G). \end{gather}
\end{proposition}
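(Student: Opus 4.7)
My plan is to show that $\widetilde I_G^\tr$ satisfies the same edge-deletion recursion~\eqref{eq:HDedgerec} as $\HE_D(G)$; combined with the common base case at $L(G)=0$ (where both sides equal $1$ by convention), the result then follows by induction on $|E_G|$. The identification $I_G^\tr = \widetilde I_G^\tr$ is legitimate because the prefactors $\mu^{-2\xi\omega_D(G)}$ and $\widehat\delta^{(D\xi)}(\sum_i p_i)$ appearing in~\eqref{eq:defFeynman} both tend to $1$ as $\xi\to 0^+$, and the remaining content is $\widetilde I_G^\tr$ as defined in Proposition~\ref{prop:troplimit}.

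For the inductive step, I would partition the projective simplex $\mathbb P_{>0}^{E_G}$ into the $|E_G|!$ Hepp sectors $R_\sigma = \{[\bb x] : x_{\sigma(1)} \leq \cdots \leq x_{\sigma(|E_G|)}\}$ indexed by permutations $\sigma$ of the edges. The positivity hypothesis on the coefficients of $\mathcal F_G$ is precisely what guarantees that no candidate monomial drops out of the tropical max; consequently, in each sector $R_\sigma$ both $\mathcal U_G^\tr$ and $\mathcal F_G^\tr$ collapse to single monomials, singled out by the greedy matroid algorithm applied to the ordering $\sigma$. (Equivalently, the permutohedral fan refines the normal fans of the Newton polytopes of $\mathcal U_G$ and $\mathcal F_G$.) Hence the whole integrand is a single monomial on each $R_\sigma$.

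Inside a sector, the triangular substitution $x_{\sigma(i)} = \prod_{j \geq i} t_j$ with $t_j \in (0,1]$ (on the affine chart $x_{\sigma(|E_G|)}=1$) maps $R_\sigma$ onto the cube $(0,1]^{|E_G|-1}$ and turns $\Omega$ into $\prod_j dt_j/t_j$. Combining the greedy tree and two-forest descriptions with the Euler-characteristic identity $\omega_D(\gamma_j^\sigma) = j - D L(\gamma_j^\sigma)/2$ for the subgraph $\gamma_j^\sigma = G|_{\{\sigma(1),\ldots,\sigma(j)\}}$, the exponent of $t_j$ in the sector integrand equals precisely $\omega_D(\gamma_j^\sigma)-1$. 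The decoupled one-dimensional integrals evaluate to $\prod_{j=1}^{|E_G|-1} 1/\omega_D(\gamma_j^\sigma)$. Summing over sectors and grouping permutations by their last label $e = \sigma(|E_G|)$, the inductive hypothesis identifies each such group with $\HE_D(G\setminus e)$, giving $\widetilde I_G^\tr = \frac{1}{\omega_D(G)}\sum_{e \in E_G}\HE_D(G\setminus e) = \HE_D(G)$ by~\eqref{eq:HDedgerec}.

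The main obstacle is the exponent identification in the previous paragraph: verifying that the contributions of $\prod_e x_e$, $(\mathcal U_G^\tr)^{-D/2}$, and $(\mathcal V_G^\tr)^{-\omega_D(G)}$ telescope cleanly into the single factor $\omega_D(\gamma_j^\sigma) - 1$ for every $j$. This is a matroid-combinatorial argument comparing the truncations of the greedy spanning tree and the greedy spanning two-forest to $\gamma_j^\sigma$ with the rank and loop number of that subgraph; it is precisely here that the positivity hypothesis on $\mathcal F_G$ is indispensable, since without it some monomials (notably those coming from the mass term) might be missing from the tropical max and the clean greedy description would break down.
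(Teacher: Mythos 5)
Your argument is essentially correct, but it takes a genuinely different route from the paper. You unroll the integral over the full permutohedral (Hepp) sector decomposition into $|E_G|!$ sectors and evaluate each one in closed form as $\prod_{j=1}^{|E_G|-1} 1/\omega_D(\gamma_j^\sigma)$ --- this is Panzer's original permutation-sum definition of the Hepp bound, which the paper deliberately avoids. The paper instead peels off only the \emph{largest} edge coordinate: it first proves $\mathcal V_G^\tr(\bb x)=\max_{e}x_e$ (Lemma~\ref{lmm:Vmax}, which is where $m^2>0$ and the positivity of $\mathcal F_G$ enter --- exactly the point you also identify), then splits $\mathbb P_{>0}^{E_G}$ into the $|E_G|$ regions $P_{e}$ where $x_{e}$ dominates, maps each to a cube, and invokes Lemma~\ref{lmm:kruskal} and Corollary~\ref{cor:HDcube} to land in one step on the recursion~\eqref{eq:HDedgerec}. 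Your route buys the closed-form permutation formula as a by-product and makes the per-sector monomial structure fully explicit; the paper's route is leaner, needs no matroid-greedy telescoping, and matches the recursive Definition~\ref{def:Hepp} directly, which is what Section~\ref{sec:poly} reuses.

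Two soft spots you should tighten. First, the "main obstacle" you flag does go through, and you should carry it out: with $x_{\sigma(i)}=\prod_{j\ge i}t_j$ the exponent of $t_j$ coming from $\prod_e x_e$ is $j$; the greedy (Kruskal) tree omits exactly $L(\gamma_j^\sigma)$ of the first $j$ edges, so $(\mathcal U_G^\tr)^{-D/2}$ contributes $-\tfrac{D}{2}L(\gamma_j^\sigma)$; and $\mathcal V_G^\tr=x_{\sigma(|E_G|)}=t_{|E_G|}$ contributes nothing for $j<|E_G|$, giving $\omega_D(\gamma_j^\sigma)$ before the $\dd t_j/t_j$ from $\Omega$. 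Second, your induction is phrased on $\widetilde I^\tr_{G\setminus e}$, but $G\setminus e$ need not be 1PI, so the integral representation of Proposition~\ref{prop:troplimit} (with its $\mathcal V^\tr$ factor) does not literally apply to it; the clean fix is to induct instead on the identity $\sum_{\sigma'}\prod_j 1/\omega_D(\gamma_j^{\sigma'})=\mathcal H_D(G\setminus e)$, which follows from unrolling Definition~\ref{def:Hepp} and needs no integral for the subgraphs.
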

This statement also follows from the argument after Lemma~2.8 in 
\cite{{Panzer:2019yxl}}. 
We prove it in detail using the recursive Definition~\ref{def:Hepp},
as the argument will be important later for the discussions in Section~\ref{sec:algo}.
It might be helpful for the reader to come back to this section while studying the proofs in Section~\ref{sec:poly}.

As Salvatori in \cite[\S 3.1]{Salvatori:2025oib}, we pass from projective to cubical coordinates in the integral representation of $\widetilde I_G^\tr$ from Proposition~\ref{prop:troplimit}.
Let $C_{e}(E_G) = \{ \bb \xiz \in [0,1]^{E_G}: \xiz_{e'} < \xiz_{e} \text{ for all } e' \in E_G\setminus e\}$
be the subset of the $|E_G|$-cube,
where the $e$-th coordinate is the largest.

\begin{lemma}
\label{lmm:kruskal}
Let $G$ be a 1PI graph 
and $e \in E_G$.
If $\bb \xiz \in C_e(E_G)$,
then $\mathcal U^\tr_G(\bb \xiz) = \xiz_{e} \cdot \mathcal U^\tr_{G\setminus e}(\bb \xiz)$.
\end{lemma}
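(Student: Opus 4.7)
The plan is to unpack $\mathcal U_G^\tr$ directly from its definition as the tropical approximation of the spanning-tree polynomial $\mathcal U_G(\bb x)=\sum_{T}\prod_{e'\notin T} x_{e'}$. Since every spanning tree of $G$ has $|V_G|-1$ edges, each monomial contributes a factor $\prod_{e'\notin T}\xiz_{e'}$ of the same total degree $L(G)$, and
\begin{equation*}
\mathcal U_G^\tr(\bb\xiz)=\max_{T\text{ sp.\ tree of }G}\prod_{e'\notin T}\xiz_{e'}.
\end{equation*}
The monomial indexed by $T$ contains the factor $\xiz_e$ exactly when $e\notin T$, and spanning trees $T$ of $G$ with $e\notin T$ are in bijection with spanning trees $T'$ of $G\setminus e$, via $T'=T$. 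Therefore the partial maximum over trees avoiding $e$ equals $\xiz_e \cdot \mathcal U_{G\setminus e}^\tr(\bb\xiz)$.

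The remaining point, which is the only nontrivial step, is to show that on the region $C_e(E_G)$ the overall maximum is in fact achieved by a tree avoiding $e$. Here I would invoke the matroid exchange property enabled by the 1PI hypothesis: since $G$ is connected and bridgeless, the edge $e$ is not a bridge, so for any spanning tree $T$ that does contain $e$, the two components obtained by deleting $e$ from $T$ are reconnected inside $G$ by some other edge $e'\ne e$. Setting $T'=(T\setminus\{e\})\cup\{e'\}$ produces another spanning tree of $G$, and the monomial ratio is
\begin{equation*}
\frac{\prod_{f\notin T'}\xiz_f}{\prod_{f\notin T}\xiz_f}=\frac{\xiz_e}{\xiz_{e'}}>1,
\end{equation*}
where the strict inequality uses precisely that $\bb\xiz\in C_e(E_G)$, i.e.\ $\xiz_{e'}<\xiz_e$. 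Hence no $T\ni e$ can attain the maximum, so the maximum is attained by some tree avoiding $e$, and the identity $\mathcal U_G^\tr(\bb\xiz)=\xiz_e\cdot\mathcal U_{G\setminus e}^\tr(\bb\xiz)$ follows.

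The main obstacle is the exchange step: one must verify that $G\setminus e$ is connected (so that $\mathcal U_{G\setminus e}^\tr$ is actually defined by spanning trees) and that the edge swap strictly improves the monomial. Both follow from bridgelessness together with the strict inequalities defining $C_e(E_G)$, so 1PI is used in an essential way; without it, the lemma would fail for a bridge edge $e$, since then $e$ would belong to every spanning tree and $\mathcal U_G^\tr$ would carry no factor of $\xiz_e$ at all. Everything else is bookkeeping with monomials.
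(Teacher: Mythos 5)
Your proof is correct and follows essentially the same route as the paper's: both arguments observe that spanning trees avoiding $e$ biject with spanning trees of $G\setminus e$ (contributing the extra factor $\xiz_e$), and both exclude trees containing $e$ from attaining the maximum via the exchange step $T\mapsto (T\setminus e)\cup e'$, which is available precisely because $G$ is bridgeless and which strictly increases the monomial by $\xiz_e/\xiz_{e'}>1$ on $C_e(E_G)$. Your closing remark about why 1PI is essential matches the paper's use of connectedness of $G\setminus e$.
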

\begin{proof}
For a given set of cubical edge lengths 
$\bb \xiz \in C_e(E_G)$, 
where $z_e$ is the largest coordinate, 
we call a spanning tree $T_{\mathrm{max}}$ of $G$ \emph{maximal}
if it fulfills 
$\prod_{e'\not\in T} \xiz_{e'} \leq \prod_{e'\not\in T_{\mathrm{max}}} \xiz_{e'}$ for all spanning trees $T \subset G$.
Assume that a maximal spanning tree $T_{\mathrm{max}}$ contains the edge $e$. Because $G$ is 1PI, $G\setminus e$ is a connected graph.
So,  there must be an edge $h\in E_G\setminus e$ 
such that $(T_{\mathrm{max}} \setminus e) \cup h$ is a spanning tree. This spanning tree fulfills 
$\prod_{e'\not\in (T_{\mathrm{max}} \setminus e) \cup h} \xiz_{e'} = \frac{\xiz_e}{\xiz_h} \prod_{e'\not\in T_{\mathrm{max}}} \xiz_{e'} > \prod_{e'\not\in T_{\mathrm{max}}} \xiz_{e'}$
resulting in a contradiction. So, there is no maximal spanning tree that does not contain $e$. 
By definition $\mathcal U^\tr_G(\bb \xiz) = \max_{T\subset G} \prod_{e'\not\in T} \xiz_{e'} = \prod_{e'\not \in T_{\mathrm{max}}} \xiz_{e'} = \xiz_{e} \max_{T\subset G\setminus e} \prod_{e'\not\in T} \xiz_{e'} $.
\end{proof}

\begin{corollary}
\label{cor:HDcube}
For all graphs $G$ with $E_G \neq \emptyset$, we have
\begin{align*} \mathcal H_D(G) = \int_{[0,1]^{E_G}} \frac{\dd \xiz_1\cdots \dd \xiz_{|E_G|}} {\mathcal U_G^\tr(\bb \xiz)^{D/2}}\, . \end{align*}
\end{corollary}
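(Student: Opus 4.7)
My plan is to prove Corollary~\ref{cor:HDcube} by induction on $|E_G|$, exploiting the chamber decomposition of the unit cube into the pieces $C_e(E_G)$ and then applying Lemma~\ref{lmm:kruskal} chamber by chamber.

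\textbf{Base case.} For $|E_G|=1$, the graph has a single edge $e$. If $e$ is a self-loop, then $\mathcal U_G^\tr(\xiz_1)=\xiz_1$ and $\omega_D(G)=1-D/2$, so a direct calculation yields $\int_0^1 \xiz_1^{-D/2}\,\dd\xiz_1 = 1/\omega_D(G)$, which equals $\mathcal H_D(G)=\omega_D(G)^{-1}\mathcal H_D(G\setminus e)=\omega_D(G)^{-1}$. If instead $e$ is a bridge (so $L(G)=0$), then $\mathcal U_G^\tr \equiv 1$ and both sides of the claimed identity equal $1$.

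\textbf{Inductive step.} Assume the formula for all graphs with fewer than $|E_G|$ edges, and decompose $[0,1]^{E_G} = \bigsqcup_{e\in E_G} C_e(E_G)$ up to a measure-zero set. On $C_e(E_G)$, Lemma~\ref{lmm:kruskal} gives $\mathcal U_G^\tr(\bb\xiz)=\xiz_e\,\mathcal U_{G\setminus e}^\tr(\bb\xiz)$. I would then rescale $\xiz_{e'}=\xiz_e\,u_{e'}$ for $e'\in E_G\setminus e$, mapping $C_e(E_G)$ diffeomorphically onto $[0,1]\times[0,1]^{E_G\setminus e}$. The Jacobian produces a factor $\xiz_e^{|E_G|-1}$, and homogeneity of $\mathcal U_{G\setminus e}^\tr$ of degree $L(G\setminus e)=L(G)-1$ (here the 1PI assumption ensures $G\setminus e$ is still connected) gives $\mathcal U_{G\setminus e}^\tr(\bb\xiz)=\xiz_e^{L(G)-1}\,\mathcal U_{G\setminus e}^\tr(\bb u)$. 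Collecting all exponents of $\xiz_e$, the separated $\xiz_e$-integral becomes $\int_0^1 \xiz_e^{\omega_D(G)-1}\,\dd\xiz_e = 1/\omega_D(G)$, so
\begin{equation*}
\int_{C_e(E_G)}\frac{\dd\bb\xiz}{\mathcal U_G^\tr(\bb\xiz)^{D/2}} \;=\; \frac{1}{\omega_D(G)}\int_{[0,1]^{E_G\setminus e}}\frac{\dd\bb u}{\mathcal U_{G\setminus e}^\tr(\bb u)^{D/2}} \;=\; \frac{1}{\omega_D(G)}\,\mathcal H_D(G\setminus e),
\end{equation*}
where the last step is the induction hypothesis. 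Summing over $e\in E_G$ and comparing to the recursive definition \eqref{eq:HDedgerec} closes the induction.

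\textbf{Main obstacle.} The cleanest point of difficulty is that Lemma~\ref{lmm:kruskal} is stated for 1PI graphs; if $e$ is a bridge, then every spanning tree contains $e$, so $\mathcal U_G^\tr$ is independent of $\xiz_e$ and the decomposition $\mathcal U_G^\tr=\xiz_e\,\mathcal U_{G\setminus e}^\tr$ fails. I would handle the general case by first invoking the factorization from Proposition~\ref{prop:factor}, which lets me reduce to a product of 1PI pieces: $\mathcal U_G^\tr$ factorizes over the 1PI blocks, the cube integral factorizes as a product over the same blocks, and the Hepp bound factorizes by Proposition~\ref{prop:factor}. Together these reduce the problem to the 1PI case already handled above. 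The only other routine points — ensuring the base of the induction covers the disconnected/edgeless components that appear along the way, and verifying that the measure-zero boundaries between chambers are harmless — are standard.
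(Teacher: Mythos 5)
Your proof is correct and follows essentially the same route as the paper's: decompose $[0,1]^{E_G}$ into the chambers $C_e(E_G)$, apply Lemma~\ref{lmm:kruskal} on each chamber, rescale to separate the $\xiz_e$-integral (producing the factor $\xiz_e^{\omega_D(G)-1}$), and recover the edge-deletion recursion~\eqref{eq:HDedgerec} by induction. Your explicit treatment of the bridge/non-1PI case via the factorization of Proposition~\ref{prop:factor} is a welcome extra precaution that the paper's own proof passes over silently.
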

Up to a prefactor, this is Salvatori's formula~\cite[(3.12)]{Salvatori:2025oib}.
\begin{proof}
We can apply the decomposition of the cube $[0,1]^{E_G}$ into subsets $\bigsqcup_e C_e(E_G)$ (up to sets of measure zero) to the integral on the right-hand side of the stated equation to get
\begin{gather} \notag \int_{[0,1]^{E_G}} \frac{\dd \xiz_1\cdots \dd \xiz_{|E_G|}} {\mathcal U_G^\tr(\bb \xiz)^{D/2}} = \sum_{e\in E_G} \int_{C_e(E_G)} \frac{\dd \xiz_1\cdots \dd \xiz_{|E_G|}} {\mathcal U_G^\tr(\bb \xiz)^{D/2}}. \intertext{Applying Lemma~\ref{lmm:kruskal} under the integral sign and passing to cubical edge coordinates on $G\setminus e$ given by $\xiz_h' = \xiz_e \xiz_h$ for all $h \in E_{G}\setminus e$ on the right-hand side results in } \label{eq:eUG} \int_{[0,1]^{E_G}} \frac{\dd \xiz_1\cdots \dd \xiz_{|E_G|}} {\mathcal U_G^\tr(\bb \xiz)^{D/2}} = \sum_{e\in E_G} \int_0^1 \frac{\dd \xiz_e}{\xiz_e} \xiz_e^{|E_G|-L(G)D/2} \int_{ [0,1]^{E_{G\setminus e}} } \frac{\dd \xiz_1'\cdots \widehat {\dd \xiz_e'} \cdots \dd \xiz_{|E_G|}'} {\mathcal U_{G\setminus e}^\tr(\bb \xiz')^{D/2}}. \end{gather}
We recover the recursive identity \eqref{eq:HDedgerec},
because $|E_G| - L(G)D/2= \omega_D(G)$.
It can easily be checked that the statements hold for graphs $G$ where $|E_G|=1$. So, the statement follows.
\end{proof}

\begin{lemma}
\label{lmm:Vmax}
For a 1PI graph $G$, any set of external momenta $p_1,\ldots, p_n$ and $m^2>0$ such that all coefficients of $\mathcal F_G$ are non-negative,
we have
$\mathcal V_G^\tr(\bb x) = \max_{e\in E_G} x_e$.
\end{lemma}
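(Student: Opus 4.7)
My plan is to clear the denominator and prove the equivalent identity $\mathcal F_G^\tr(\bb x) = \mathcal U_G^\tr(\bb x)\cdot\max_{e\in E_G}x_e$, by establishing the two inequalities separately after writing both sides as a maximum over monomials. The positivity assumption on the coefficients of $\mathcal F_G$ is crucial: it ensures that $\mathcal F_G^\tr$ is simply the maximum of the monomials listed in the combinatorial expression \eqref{eq:UF}, with no cancellations to worry about.

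For the $\geq$ direction, I would expand the mass term $\frac{m^2}{\mu^2}\mathcal U_G(\bb x)\sum_{e\in E_G} x_e$ using the spanning-tree formula for $\mathcal U_G$ from \eqref{eq:UF}. Since $m^2>0$, for every spanning tree $T\subset G$ and every edge $e\in E_G$ the monomial $x_e\prod_{e'\not\in T}x_{e'}$ appears in $\mathcal F_G$ with strictly positive coefficient. Taking the max over these yields
\[
\mathcal F_G^\tr(\bb x)\;\geq\;\max_{T,\,e}\, x_e\prod_{e'\not\in T}x_{e'} \;=\;\Big(\max_{e\in E_G}x_e\Big)\cdot\mathcal U_G^\tr(\bb x).
\]

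For the reverse inequality, I would note that every monomial appearing in $\mathcal F_G$ is of one of two types: a mass-term monomial $x_e\prod_{e'\not\in T}x_{e'}$ (trivially bounded by the right-hand side), or a momentum-term monomial $\prod_{e'\not\in F}x_{e'}$ for some spanning two-forest $F$. The only step that needs real content is handling the latter type: here I would use that $G$ is connected (the 1PI hypothesis is stronger than required). Since $F$ has two tree components while $G$ is connected, at least one edge $e^*\in E_G\setminus F$ must bridge the two components of $F$, making $T:=F\cup\{e^*\}$ a spanning tree. This rewrites the monomial as $x_{e^*}\prod_{e'\not\in T}x_{e'}\leq(\max_e x_e)\cdot\mathcal U_G^\tr(\bb x)$. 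Taking the max over all monomials of $\mathcal F_G$ gives $\mathcal F_G^\tr(\bb x)\leq\mathcal U_G^\tr(\bb x)\cdot\max_e x_e$, and dividing by the everywhere-positive $\mathcal U_G^\tr$ yields the claim.

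The only conceptual ingredient is the existence of the bridging edge $e^*$, which is an immediate consequence of connectedness: otherwise the two tree components of $F$ would already be distinct components of $G$. Everything else is bookkeeping over the Kirchhoff/matrix-tree expansion in \eqref{eq:UF}, so I do not anticipate any serious obstacle.
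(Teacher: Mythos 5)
Your proposal is correct and follows essentially the same route as the paper: the lower bound comes from the mass term contributing every monomial $x_{e'}\prod_{e\notin T}x_e$ (using $m^2>0$), and the upper bound comes from completing each two-forest $F$ to a spanning tree by a bridging edge $e_F$ so that $\prod_{e\notin F}x_e = x_{e_F}\prod_{e\notin F\cup e_F}x_e \leq (\max_e x_e)\,\mathcal U_G^\tr(\bb x)$. Your remark that only connectedness (not the full 1PI hypothesis) is needed is a correct observation, but otherwise the two arguments coincide.
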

This statement also follows from the more general considerations of \cite[Theorem 1.1]{Tellander:2021xdz}.
\begin{proof}
For a graph $G$ and one of its two-forests $F$ contributing to the sum in the formula for $\mathcal F_G$ in~\eqref{eq:UF}, we can find an edge $e_F$ such that $F \cup e_F$ is a spanning tree $T_F$ of $G$. Hence, 
$\prod_{e \notin F} x_e = x_{e_F} \prod_{e \notin F \cup e_F} x_e \leq x_{e_F} \cdot \mathcal U_G^\tr(\bb x)$, where we used that $\mathcal U_G^\tr(\bb x) = \max_{T \subset G} \prod_{e \notin T} x_e$ with the maximum over all spanning trees of $G$. As $m^2 > 0$, all monomials of the form $x_{e'} \prod_{e\notin T} x_e$ with some spanning tree $T$ and edge $e'$ of $G$ appear in the polynomial $\mathcal F_G$. It follows that 
$\mathcal F_G^\tr(\bb x) = (\max_{e\in E_G} x_e) \mathcal \cdot \mathcal U_G^\tr(\bb x)$ and therefore $\mathcal V_G^\tr(\bb x) = \max_{e\in E_G} x_e$. 
\end{proof}

\begin{proof}[Proof of Proposition~\ref{prop:hepp}]
Recall the formula for $\widetilde I_G^\tr$ from Proposition~\ref{prop:troplimit}.
So, by Lemma~\ref{lmm:Vmax},
$$
\widetilde 
I_G^\tr(p_1,\ldots,p_n) =
\frac{1}{\omega_D(G)}
\int_{\mathbb P_{>0}^{E_G}}
\frac{{\prod_{e\in E_G} x_e}}
{
\mathcal U_G^\tr(\bb x)^{D /2} \cdot
(\max_{e\in E_G} x_e)^{\omega_D(G)} 
}
\,
\Omega.
$$
Let $P_e = \{ \bb x \in \mathbb P_{>0}^{E_G}: x_{e'} < x_e \text{ for all } e'\in E_G\setminus e\}$.
We have
$$
\widetilde 
I_G^\tr(p_1,\ldots,p_n) =
\frac{1}{\omega_D(G)}
\sum_{e'\in E_G}
\int_{P_{e'}}
\frac{{\prod_{e\in E_G} x_e}}
{
\mathcal U_G^\tr(\bb x)^{D /2} \cdot
(x_{e'})^{\omega_D(G)} 
}
\,
\Omega.
$$
There is a diffeomorphism
$(0,1)^{E_{G\setminus e'}} \rightarrow P_{e'}$
given by $x_e = \xiz_e$ for $e \neq e'$ and $x_{e'}=1$.
Transforming the integral above into cubical form 
using this diffeomorphism 
results in the right-hand side of 
eq.~\eqref{eq:eUG}, which equals $\mathcal H_D(G)$.
\end{proof}

\section{Tropicalized quantum field theory}
\label{sec:tropqft}
\subsection{The tropical effective action}
\label{sec:tropeff}

With all these tools at hand, we are now ready to give an explicit definition of the tropicalized effective action as a generating function of the Hepp bound 
summed over all graphs. We will then prove that this definition reproduces the 
 quantum effective action in the tropical limit.

Let $\mathcal G$ be the set of all isomorphism classes of 1PI graphs with an arbitrary number of legs, where the legs are now unlabelled. We write $\Aut'(G)$ for the group of automorphisms of $G$ that are allowed to permute the legs and  $n(G)$ for the number of legs of a graph $G \in \mathcal G$.
The tropical effective action $\Gamma^\tr$ is the generating function of the set $\mathcal G$ weighted by the Hepp bound,
\begin{align} \begin{gathered} \label{eq:gammatr} \Gamma^\tr(D,\varphi,\lambda_3,\lambda_4,\ldots) = \sum_{G \in \mathcal G} \frac{\varphi^{n(G)} \prod_{v\in V_G} \lambda_{|v|}} {|\Aut' (G)|}\, \HE_D(G) \in \Q(D)[\![\varphi,\lambda_3,\lambda_4,\ldots]\!]\,. \end{gathered} \end{align}
Under the positivity assumptions outlined in Section~\ref{sec:positivity} (pseudo-Euclidean regime and $m^2>0$),
we can combine the equations~\eqref{eq:defGamma}, \eqref{eq:defFeynman}, \eqref{eq:tropFI}, and \eqref{eq:Ddecomposition} into a formula for the effective action in the tropical limit as given in eq.~\eqref{eq:deftropeff}, which we officially record in the following:
\begin{theorem}
\label{thm:trop}
Assuming Euclidean momenta $p_1,\ldots,p_n$ and a mass $m^2 >0$,  the effective action of the general scalar quantum field theory defined in~eq.~\eqref{eq:Sxi} fulfills the tropical deformation limit,
\begin{align*}  \Gamma^\xi_{n}(p_1,\ldots,p_n) = \mu^{\xi(D+n(1-D/2))} \cdot \delta^{({D\cdot \xi})}\left( \sum_{i=1}^n p_i\right) \cdot \left( \frac{\partial^n \Gamma^\tr}{\partial \varphi^n}\Big|_{\varphi=0} + \smallO(\xi)\right) \text{ as } \xi \rightarrow 0^+ . \end{align*}
\end{theorem}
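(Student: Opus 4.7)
The plan is to combine the perturbative expansion \eqref{eq:defGamma} with the factorization \eqref{eq:defFeynman}, invoke Proposition~\ref{prop:troplimit} and Proposition~\ref{prop:hepp} to collapse each Feynman integral to its Hepp bound $\mathcal H_D(G)$ in the $\xi\to 0^+$ limit, and then re-organize the resulting sum combinatorially so that it becomes $\partial^n\Gamma^\tr/\partial\varphi^n|_{\varphi=0}$ as defined by \eqref{eq:gammatr}.

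After substituting \eqref{eq:defFeynman} into \eqref{eq:defGamma}, I would pull the overall factor $\hat\delta^{({D\cdot\xi})}\!\left(\sum_i p_i\right)$ out of the graph sum and collect all $\mu$-powers. Using $\sum_{v\in V_G}|v|=2|E_G|+n$ (each internal edge contributes twice, each leg once) and $L(G)=|E_G|-|V_G|+1$, a direct computation yields
\begin{align*}
-2\xi\,\omega_D(G)+\xi\sum_{v\in V_G}\!\left(|v|-(|v|-2)\tfrac{D}{2}\right)=\xi\bigl(D+n(1-D/2)\bigr),
\end{align*}
which is graph-independent and reproduces the prefactor stated in the theorem. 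The remaining discrepancy $\hat\delta^{({D\cdot\xi})}=(2\pi)^{D\cdot\xi}\delta^{({D\cdot\xi})}=(1+\bigO(\xi))\delta^{({D\cdot\xi})}$ is absorbed into the $\smallO(\xi)$ remainder.

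Since $m^2>0$ and the external momenta are Euclidean, all coefficients of $\mathcal F_G$ are positive (see Section~\ref{sec:positivity}), so Proposition~\ref{prop:troplimit} gives $\widetilde I_G^\xi\to\widetilde I_G^\tr$ and Proposition~\ref{prop:hepp} identifies this with $\mathcal H_D(G)$. Crucially, $\mathcal H_D(G)$ depends on neither the external momenta nor the leg labeling, which is precisely what enables the final combinatorial reorganization. An orbit-stabilizer argument applied to the action of $\Aut'(G)$ on the set of bijections $\{\text{legs of }G\}\to\{1,\ldots,n\}$ (whose stabilizer at any labeling is exactly $\Aut(G)$) shows that each unlabeled $G\in\mathcal G$ with $n(G)=n$ corresponds to $n!\,|\Aut(G)|/|\Aut'(G)|$ distinct classes in $\mathcal G_n$. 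Consequently,
\begin{align*}
\sum_{G\in\mathcal G_n}\frac{\prod_{v\in V_G}\lambda_{|v|}}{|\Aut(G)|}\,\mathcal H_D(G)=\sum_{\substack{G\in\mathcal G\\ n(G)=n}}\frac{n!\prod_{v\in V_G}\lambda_{|v|}}{|\Aut'(G)|}\,\mathcal H_D(G),
\end{align*}
and the right-hand side is exactly $\partial^n\Gamma^\tr/\partial\varphi^n|_{\varphi=0}$ by \eqref{eq:gammatr}.

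The only genuinely subtle point, and the nominal main obstacle, is the exchange of the $\xi\to 0^+$ limit with the sum over the infinite set $\mathcal G_n$. This evaporates once one recalls that both $\Gamma^\xi_n$ and $\Gamma^\tr$ are interpreted as formal power series in the couplings $\lambda_3,\lambda_4,\ldots$; at any fixed monomial $\prod_k\lambda_k^{a_k}$ only finitely many graphs contribute, so the limit can be taken termwise and the $\smallO(\xi)$ remainder is well-defined order by order in the couplings.
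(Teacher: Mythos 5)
Your proof is correct and follows essentially the same route as the paper's: combine eqs.~\eqref{eq:defGamma} and \eqref{eq:defFeynman} with Propositions~\ref{prop:troplimit} and \ref{prop:hepp} to replace each Feynman integral by $\mathcal H_D(G)$, and then use the orbit--stabilizer theorem to identify the resulting labelled-graph sum with $\partial^n\Gamma^\tr/\partial\varphi^n|_{\varphi=0}$. You are in fact more explicit than the paper on the $\mu$-power bookkeeping and on the order-by-order (in the couplings) justification for exchanging the $\xi\rightarrow 0^+$ limit with the infinite graph sum.
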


\begin{proof}
A standard argument using the orbit stabilizer theorem implies that 
\begin{align*} \frac{\partial^n \Gamma^\tr}{\partial \varphi^n} \big|_{\varphi=0} = n! \sum_{\substack{G \in \mathcal G\\n(G) = n}} \frac{\prod_{v\in V_G} \lambda_{|v|}} {|\Aut'(G)|}\, \mathcal H_D(G) = \sum_{G \in \mathcal G_n} \frac{\prod_{v\in V_G} \lambda_{|v|}} {|\Aut(G)|}\, \mathcal H_D(G), \end{align*}
where we fix the leg labels in $\mathcal G_n$ and $\Aut(G)$.
The statement follows after using the formulas
\eqref{eq:defGamma}, \eqref{eq:defFeynman}, \eqref{eq:tropFI}, \eqref{eq:Ddecomposition}
 and confirming the power of $\mu$.
\end{proof}

\subsection{The tropical loop equation}
\label{eq:looprecursion}

In this section, we will prove the tropical loop equation~\eqref{eq:pde},
which (up to a tree-level boundary term) completely fixes $\Gamma^\tr$ and thereby the quantum effective action in the tropical limit. We start by giving  a combinatorial interpretation of \eqref{eq:pde}'s right-hand side.

Recall that a bridge of a graph is an edge whose removal increases the number of connected components,
and that a connected graph without bridges is a 1PI graph.
\begin{definition}
\label{def:beaded}
A connected graph $G$, with two or more legs, exactly two of which have a special colour, is a \emph{beaded graph} if it is either a 1PI graph or if cutting any of its bridges and giving the special colour to both newly created legs from the cut yields a pair of beaded graphs.  
\end{definition}
Beaded graphs can be formed by stringing 1PI two-point graphs together like beads on a string. Note, however, that we allow, in a general beaded graph, each 1PI component to have an arbitrary number of non-special legs.
Figure~\ref{fig:beaded} shows examples of beaded graphs. The special legs are drawn as doubled lines. Note that we can string 
any number of vertices together to form such a graph. The last graph 
in the figure is not beaded, because it has a bridge
whose removal does not separate the two special legs.
\begin{figure}
\begin{align*} \def\rad{.3} \def\rud{.5} \begin{tikzpicture}[every path/.style={thick},baseline={(0,-.1)}] \coordinate (v1i) at (\rad,0); \coordinate (v1m) at (0,0); \coordinate (v1o) at (-\rad,0); \coordinate (v1u) at (0,\rad); \coordinate (v1d) at (0,-\rad); \draw[double] (v1i) -- (v1m); \draw[double] (v1o) -- (v1m); \draw (v1u) -- (v1m); \draw (v1d) -- (v1m); \fill (v1m) circle(2pt); \end{tikzpicture} \, , \begin{tikzpicture}[every path/.style={thick},baseline={(0,-.1)}] \coordinate (v1m) at (0,0); \coordinate (v2m) at (\rud,0); \coordinate (v1o) at (-\rad,0); \coordinate (v1i) at ({\rud+\rad},0); \coordinate (v1u) at (0,\rad); \coordinate (v1d) at (0,-\rad); \coordinate (v2u) at (\rud,\rad); \draw[double] (v1i) -- (v2m); \draw[double] (v1o) -- (v1m); \draw (v1m) -- (v2m); \draw (v1u) -- (v1m); \draw (v1d) -- (v1m); \fill (v1m) circle(2pt); \fill (v2m) circle(2pt); \draw (v2u) -- (v2m); \end{tikzpicture} \, , \begin{tikzpicture}[every path/.style={thick},baseline={(0,-.1)}] \coordinate (v1ii) at ({-\rud-\rad},0); \coordinate (v1i) at (-\rud,0); \coordinate (v1m) at (0,0); \coordinate (v2o) at (\rud,0); \coordinate (v2oo) at ({\rud+\rad},0); \draw (v1m) circle(\rud); \draw[double] (v1ii) -- (v1i); \draw[double] (v2oo) -- (v2o); \fill (v1i) circle(2pt); \fill (v2o) circle(2pt); \end{tikzpicture} \, , \begin{tikzpicture}[every path/.style={thick},baseline={(0,-.1)}] \coordinate (Lcenter) at (-{1.5*\rud},0); \coordinate (Rcenter) at ({1.5*\rud},0); \coordinate (Lright) at ({-\rud/2},0); \coordinate (Rleft) at ({\rud/2},0); \coordinate (Louter) at ({-2.5*\rud-\rad},0); \coordinate (Router) at ({2.5*\rud+\rad},0); \coordinate (v1) at ($(Rcenter) + (60:\rud)$); \coordinate (v2) at ($(Rcenter) + (120:\rud)$); \coordinate (v3) at ($(Rcenter) + (-60:\rud)$); \coordinate (v4) at ($(Rcenter) + (-120:\rud)$); \coordinate (v4i) at ($(Rcenter) + (-120:{\rud+\rad})$); \coordinate (v5) at ($(Lcenter) + (90:\rud)$); \coordinate (v5i) at ($(Lcenter) + (90:{\rud+\rad})$); \draw (v1) -- (v4); \draw[white,line width=5pt] (v3) -- (v2); \draw (v3) -- (v2); \draw (Lcenter) circle(\rud); \draw (Rcenter) circle(\rud); \draw (Lright) -- (Rleft); \draw[double] (Louter) -- ({-2.5*\rud},0); \draw[double] (Router) -- ({2.5*\rud},0); \draw (v5) -- (v5i); \draw (v4) -- (v4i); \fill (Lright) circle(2pt); \fill (Rleft) circle(2pt); \fill ({2.5*\rud},0) circle(2pt); \fill ({-2.5*\rud},0) circle(2pt); \fill (v1) circle(2pt); \fill (v2) circle(2pt); \fill (v3) circle(2pt); \fill (v4) circle(2pt); \fill (v5) circle(2pt); \end{tikzpicture} \, , \begin{tikzpicture}[every path/.style={thick},baseline={(0,-.1)}] \coordinate (Lcenter) at (-{1.5*\rud},0); \coordinate (Rcenter) at ({2.5*\rud},0); \coordinate (Lright) at ({-\rud/2},0); \coordinate (Rleft) at ({1.5*\rud},0); \coordinate (Louter) at ({-2.5*\rud-\rad},0); \coordinate (Router) at ({4.5*\rud+\rad},0); \coordinate (v1) at (\rud/2,0); \coordinate (v1u) at ($(v1) + (90:\rad)$); \coordinate (v1i) at ($(v1) + (-60:\rad)$); \coordinate (v1o) at ($(v1) + (-120:\rad)$); \draw (v1) -- (v1u); \draw (v1) -- (v1i); \draw (v1) -- (v1o); \coordinate (v5) at ($(Lcenter) + (90:\rud)$); \coordinate (v5i) at ($(Lcenter) + (90:{\rud+\rad})$); \draw (Lcenter) circle(\rud); \draw (Rcenter) circle(\rud); \draw (Lright) -- (Rleft); \draw[double] (Louter) -- ({-2.5*\rud},0); \draw[double] (Router) -- ({4.5*\rud},0); \draw ({4.5*\rud},0) -- ({3.5*\rud},0); \draw ({4.5*\rud},0) -- ({4.5*\rud},\rad); \draw ({4.5*\rud},0) -- ({4.5*\rud},-\rad); \draw (v5) -- (v5i); \fill (Lright) circle(2pt); \fill (Rleft) circle(2pt); \fill ({4.5*\rud},0) circle(2pt); \fill ({3.5*\rud},0) circle(2pt); \fill ({-2.5*\rud},0) circle(2pt); \fill (v5) circle(2pt); \fill (v1) circle(2pt); \end{tikzpicture} \, ; \quad \quad \begin{tikzpicture}[every path/.style={thick},baseline={(0,-.1)}] \coordinate (v1m) at (0,0); \coordinate (v11) at (-\rud,0); \coordinate (v11i) at ({-\rud-\rad},0); \coordinate (v12) at (+\rud,0); \coordinate (v12i) at ({+\rud+\rad},0); \coordinate (v13) at (0,-\rud); \coordinate (v13i) at (0,{-\rud-\rad}); \coordinate (v14) at (0,+\rud); \coordinate (v2m) at (0,{3*\rud}); \coordinate (v21) at (0,{4*\rud}); \coordinate (v21i) at (0,{4*\rud+\rad}); \coordinate (v22) at (-\rud,{3*\rud}); \coordinate (v23) at (0,{2*\rud}); \coordinate (v24) at (+\rud,{3*\rud}); \draw[double] (v11i) -- (v11); \draw[double] (v12i) -- (v12); \fill (v11) circle(2pt); \fill (v12) circle(2pt); \fill (v13) circle(2pt); \fill (v14) circle(2pt); \fill (v21) circle(2pt); \fill (v22) circle(2pt); \fill (v23) circle(2pt); \fill (v24) circle(2pt); \draw (v13) -- (v13i); \draw (v1m) circle(\rud); \draw (v2m) circle(\rud); \draw (v14) -- (v23); \draw (v22) -- (v24); \draw (v21) -- (v21i); \end{tikzpicture} \end{align*}
\caption{
Five examples of beaded graphs and 
one example of a non-beaded graph.
}
\label{fig:beaded}
\end{figure}
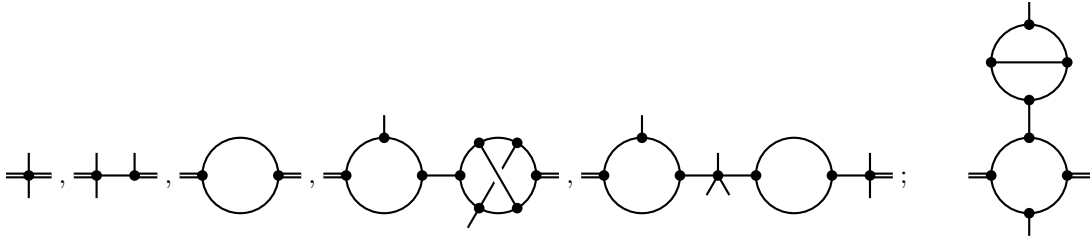

Let $\mathcal{B}$ be the set of isomorphism classes of beaded graphs with two specially coloured legs and an arbitrary number of uncoloured legs. For a beaded graph $G$, let $\Aut_{\mathcal B}'(G)$ be the group of automorphisms of $G$ which do not mix special and non-special legs, but are allowed to permute legs otherwise. Basic generatingfunctionology proves the following formula:
\begin{proposition}
\label{prop:beaded}
The generating function of beaded graphs weighted by their Hepp bound is
\begin{align*} \sum_{ G\in \mathcal B } \frac{\varphi^{n(G)-2} \prod_{v\in V_G} \lambda_{|v|}} {|\Aut_{\mathcal B}' (G)|} \,\mathcal H_D(G) = \frac12 \left( \left( 1 - \frac{\partial^2 \Gamma^\tr}{\partial \varphi^2} \right)^{-1} - 1 \right). \end{align*}
\end{proposition}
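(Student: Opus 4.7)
The approach is to unfold Definition~\ref{def:beaded} into an explicit combinatorial chain structure and then match the resulting generating function to a geometric series in $U = \partial^2 \Gamma^\tr / \partial \varphi^2$. First, I would argue by induction on the number of bridges, starting from the recursive definition, that every beaded graph $G \in \mathcal B$ is obtained by stringing together a finite ordered sequence of $1$PI beads $B_1,\ldots,B_k$ with $k \geq 1$ via $k-1$ bridges laid along the unique path from one special leg to the other. Each intermediate bead $B_i$ carries two distinguished legs (the bridge endpoints after cutting) together with arbitrarily many uncoloured legs; the two endpoint beads each carry one bridge leg and one special leg.

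Second, I would prove $\mathcal H_D(G) = \prod_{i=1}^k \mathcal H_D(B_i)$. Any endpoint $u$ of a bridge is a cut vertex of $G$, since $u$ has degree at least $3$ and the bridge is the only connection between the two sides. Hence $G = B_1 \cup_u (\{\text{bridge}\}\cup B_2\cup\cdots\cup B_k)$ in the sense of Proposition~\ref{prop:factor}, and the Hepp bound factorises at $u$. In the right-hand factor the bridge becomes an external edge attached to a new degree-$1$ vertex; but the recursion of Definition~\ref{def:Hepp} sums only over internal edges, so adding or removing such legs leaves $\mathcal H_D$ invariant. Iterating the argument down the chain yields the full product formula.

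Third, by the orbit-stabilizer computation already used in the proof of Theorem~\ref{thm:trop}, $U$ is precisely the Hepp-weighted generating function of $1$PI graphs with an ordered pair of distinguished legs and arbitrary uncoloured legs. Combined with the chain decomposition, ordered beaded graphs of chain length $k$ contribute $U^k$, so summing over $k \geq 1$ gives $(1-U)^{-1} - 1$. Passing from ordered to unordered special legs introduces a uniform factor of $1/2$: a standard orbit-stabilizer check with $\Aut_{\mathcal B}'(G)$ handles palindromic and non-palindromic chains simultaneously, since in the palindromic case the chain reversal contributes an extra element of $\Aut_{\mathcal B}'$, while in the non-palindromic case the two orderings yield distinct isomorphism classes in $\mathcal B$.

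The main obstacle I anticipate is the second step: Proposition~\ref{prop:factor} is stated for cut vertices rather than bridges, so one must combine the recognition of bridge endpoints as cut vertices with the leg-invariance of the Hepp bound in order to strip the dangling bridge-leg produced by the splitting. Once this factorisation is secured, the remaining combinatorial bookkeeping is standard generatingfunctionology.
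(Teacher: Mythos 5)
Your proposal is correct and follows essentially the same route as the paper: the chain decomposition of beaded graphs into 1PI beads, Hepp-bound factorization at the cut vertices flanking each bridge via Proposition~\ref{prop:factor}, identification of $\partial^2\Gamma^\tr/\partial\varphi^2$ as the generating function of doubly-pointed 1PI graphs, summation of the geometric series, and the orbit-stabilizer factor of $\frac12$ from unordering the two special legs. The only cosmetic difference is in how the bridge itself is discharged: the paper treats it as a single-edge graph with $\mathcal H_D = 1$ glued at two cut vertices, whereas you strip it as a dangling leg invisible to the recursion of Definition~\ref{def:Hepp}; both are valid.
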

\begin{proof}
Almost all involved combinatorial objects are \emph{labelled structures} from the perspective of enumerative combinatorics and generatingfunctionology (see, e.g., \cite[A.II]{FlajoletSedgewick:2009}). It is also well-known in quantum field theory that taking derivatives of a generating function passes to the corresponding pointed object (for a combinatorial viewpoint, see \cite[A.II.6.1]{FlajoletSedgewick:2009}). For instance, 
${\partial^2 \Gamma^\tr}/{\partial \varphi^2}$ is the generating function of 
Hepp bound weighted graphs with two or more legs, two of which are fixed and labelled (i.e.~pointed).

Recall the factorization property of the Hepp bound from Proposition~\ref{prop:factor}.
It follows from Definition~\ref{def:Hepp}
that for a graph with only a single edge, e.g.,
$ \begin{tikzpicture}[baseline={(0,-.1)}] \coordinate (v1) at (-.2,0); \coordinate (v2) at (.2,0); \draw (v1) -- (v2); \fill (v1) circle(1.5pt); \fill (v2) circle(1.5pt); \end{tikzpicture} $,
we have 
$\mathcal H_D( \begin{tikzpicture}[baseline={(0,-.1)}] \coordinate (v1) at (-.2,0); \coordinate (v2) at (.2,0); \draw (v1) -- (v2); \fill (v1) circle(1.5pt); \fill (v2) circle(1.5pt); \end{tikzpicture} ) = 1$,
because 
$\omega_D( \begin{tikzpicture}[baseline={(0,-.1)}] \coordinate (v1) at (-.2,0); \coordinate (v2) at (.2,0); \draw (v1) -- (v2); \fill (v1) circle(1.5pt); \fill (v2) circle(1.5pt); \end{tikzpicture} ) = 1 - 0 = 1$.
Therefore, for a beaded graph $H$ 
that is formed by stringing 
the 1PI graphs $G_1,G_2,\ldots$ together,
we have $\mathcal H_D(H) = \mathcal H_D(G_1) \cdot \mathcal H_D(G_2) \cdots$

We can string any number of two-pointed 1PI graphs together along their distinct legs to form a beaded graph.
By the factorization property, the generating function ${\partial^2 \Gamma^\tr}/{\partial \varphi^2} + \left({\partial^2 \Gamma^\tr}/{\partial \varphi^2}\right)^2 + \left({\partial^2 \Gamma^\tr}/{\partial \varphi^2}\right)^3 +\ldots$ 
of such strings overcounts beaded graphs by a factor of two, because the two endpoints
of the string are not interchangeable. Accounting for the additional 2-fold symmetry and summing the geometric series gives the stated formula.
\end{proof}

This proposition gives a combinatorial interpretation for the right-hand side of~\eqref{eq:pde}. We will continue with the left-hand side, which involves the linear differential operator $\mathcal P_D$. 
 The following lemma explains the somewhat ad hoc-seeming shape of this operator.

\begin{lemma}
\label{lmm:PDdiag}
For any connected graph $G$, we have
\begin{align*}  \mathcal P_D \left( \varphi^{n(G)} \prod_{v\in V_G} \lambda_{|v|} \right) = 2 \omega_D(G) \cdot \varphi^{n(G)} \prod_{v\in V_G} \lambda_{|v|}, \end{align*}
where  $\mathcal P_D$ is defined in eq.~\eqref{eq:PD}
and we recall that $n(G)$ is the number of legs of $G$.
\end{lemma}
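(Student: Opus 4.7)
The plan is to compute $\mathcal P_D$ applied to $M := \varphi^{n(G)} \prod_{v\in V_G} \lambda_{|v|}$ directly, harvest the Euler-type identities on $G$, and check that the result simplifies to $2\omega_D(G)\cdot M$.

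First I would note that $\varphi\frac{\partial}{\partial\varphi}M = n(G)\,M$, while $\lambda_k\frac{\partial}{\partial\lambda_k}M$ multiplies $M$ by the number of vertices of $G$ of degree $k$. Summing with the weights from \eqref{eq:PD} gives
\begin{align*}
\sum_{k\ge 3}\left(k - D\Bigl(\tfrac{k}{2}-1\Bigr)\right)\lambda_k\frac{\partial M}{\partial \lambda_k}
= \left(\sum_{v\in V_G}\Bigl(|v| - D\Bigl(\tfrac{|v|}{2}-1\Bigr)\Bigr)\right) M.
\end{align*}
Thus $\mathcal P_D M$ equals $M$ times
\begin{align*}
-D - \Bigl(1-\tfrac{D}{2}\Bigr)n(G) + \sum_{v\in V_G}|v| - \tfrac{D}{2}\sum_{v\in V_G}|v| + D\,|V_G|.
\end{align*}

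The second step is to invoke the handshake relation for $G$, which in the present conventions (where $E_G$ denotes the set of internal edges and the legs of $G$ are external edges terminating at internal vertices) reads $\sum_{v\in V_G}|v| = 2|E_G| + n(G)$. Substituting this and the Euler relation $|V_G| = |E_G| - L(G) + 1$ (valid because $G$ is connected, so in particular it applies to any connected graph as the statement demands) into the bracket above yields
\begin{align*}
-D - \Bigl(1-\tfrac{D}{2}\Bigr)n(G) + 2|E_G| + n(G) - D|E_G| - \tfrac{D}{2}n(G) + D\bigl(|E_G| - L(G) + 1\bigr).
\end{align*}

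The final step is a routine cancellation: the $-D$ and $+D$ cancel, the terms involving $n(G)$ combine as $-(1-D/2)n(G) + n(G) - (D/2)n(G) = 0$, and the edge/loop contributions collapse to $2|E_G| - D\,L(G) = 2\omega_D(G)$ by definition of the superficial degree of divergence. This gives $\mathcal P_D M = 2\omega_D(G)\,M$, as required.

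The only place where care is needed is keeping straight which edges, vertices and degrees are counted: the handshake sum must include the contribution of the legs to the degrees of internal vertices, and the Euler formula must be read with $E_G$ restricted to internal edges. Once those bookkeeping points are set, the computation is mechanical and no serious obstacle remains.
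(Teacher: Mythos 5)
Your proposal is correct and follows essentially the same route as the paper: both rest on the eigenvalue action of the Euler operators on the monomial together with the handshake identity $\sum_{v}|v| = 2|E_G| + n(G)$ and the Euler formula $L(G) = |E_G| - |V_G| + 1$. The only difference is presentational — the paper first rewrites $\omega_D(G)$ in terms of $n(G)$, $|V_G|$, and $\sum_v |v|$ and then substitutes the differential operators, whereas you apply $\mathcal P_D$ first and simplify afterwards.
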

\begin{proof}
By the Euler characteristic formula and the total
number of half-edges of a connected graph with $n(G)$
legs, we have 
$L(G) = |E_G| - |V_G| + 1$,  and 
$\sum_{v \in V_G} |v| = n(G) + 2 |E_G|$. Hence,
\begin{gather*}  \omega_D(G) = |E_G| - L(G) \cdot D/2 =     - \frac{D}{2} - \left(1-\frac{D}{2}\right) \frac{n(G)}{2} + \sum_{v \in V_G} \left( \frac{|v|}{2}-\frac{D}{2} \left(\frac{|v|}{2}-1\right)\right). \end{gather*}
Now, replace $n(G)$ with $\varphi \frac{\partial}{\partial \varphi}$,
$\sum_{v\in V_G}1$ with $\sum_{k \geq 3} \lambda_k \frac{\partial}{\partial \lambda_k}$,
and $\sum_{v\in V_G}|v|$
with 
$\sum_{k \geq 3} k \lambda_k \frac{\partial}{\partial \lambda_k}$.
\end{proof}

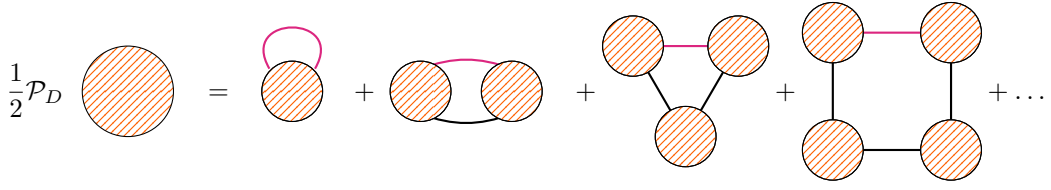
\begin{figure}
\begin{gather*} \frac12 \mathcal P_D ~~ \begin{tikzpicture}[baseline={(0,-.1)}] \filldraw[pattern color=col3,pattern=north east lines] (0,0) circle (.6cm); \end{tikzpicture} \quad =   \begin{tikzpicture}[baseline={(0,-.1)}] \draw[thick, col4, looseness=3.6, out=60, in=120] ($(0,.3)+(.3,0)$) to ($(0,.3)+(-.3,0)$); \draw[fill=white] (0,0) circle (.4cm); \filldraw[pattern color=col3,pattern=north east lines] (0,0) circle (.4cm); \end{tikzpicture} +~  \begin{tikzpicture}[baseline={(0,-.1)}] \coordinate (v1) at (-.6,0); \coordinate (v2) at (.6,0); \draw[thick, col4, bend left=20] (-.6,.3) to (.6,.3); \draw[thick, bend right=20] (-.6,-.3) to (.6,-.3); \draw[fill=white] (v1) circle (.4cm); \filldraw[pattern color=col3,pattern=north east lines] (v1) circle (.4cm); \draw[fill=white] (v2) circle (.4cm); \filldraw[pattern color=col3,pattern=north east lines] (v2) circle (.4cm); \end{tikzpicture} \quad +   \begin{tikzpicture}[baseline={(0,-.3)}] \coordinate (v1) at (30:0.8); \coordinate (v2) at (150:0.8); \coordinate (v3) at (270:0.8); \draw[thick,col4] (v1) -- (v2); \draw[thick] (v2) -- (v3); \draw[thick] (v3) -- (v1); \draw[fill=white] (v1) circle (.4cm); \filldraw[pattern=north east lines, pattern color=col3] (v1) circle (.4cm); \draw[fill=white] (v2) circle (.4cm); \filldraw[pattern=north east lines, pattern color=col3] (v2) circle (.4cm); \draw[fill=white] (v3) circle (.4cm); \filldraw[pattern=north east lines, pattern color=col3] (v3) circle (.4cm); \end{tikzpicture} +   \begin{tikzpicture}[baseline={(0,-.1)}] \coordinate (v1) at (45:1.1); \coordinate (v2) at (135:1.1); \coordinate (v3) at (225:1.1); \coordinate (v4) at (-45:1.1); \draw[thick,col4] (v1) -- (v2); \draw[thick] (v2) -- (v3); \draw[thick] (v3) -- (v4); \draw[thick] (v4) -- (v1); \draw[fill=white] (v1) circle (.4cm); \filldraw[pattern color=col3,pattern=north east lines] (v1) circle (.4cm); \draw[fill=white] (v2) circle (.4cm); \filldraw[pattern color=col3,pattern=north east lines] (v2) circle (.4cm); \draw[fill=white] (v3) circle (.4cm); \filldraw[pattern color=col3,pattern=north east lines] (v3) circle (.4cm); \draw[fill=white] (v4) circle (.4cm); \filldraw[pattern color=col3,pattern=north east lines] (v4) circle (.4cm); \end{tikzpicture} + \ldots  \end{gather*}
\caption{Illustration of the tropical loop equation. 
Each orange blob stands for a 1PI graph. 
The right-hand graphs have one 
purple pointed edge.
Cutting these  edges yields beaded graphs.
}
\label{fig:troplooprec}
\end{figure}

With this observation in hand, we are ready to prove 
the tropical loop equation~\eqref{eq:pde}, which we 
formally record as follows.
\begin{theorem}
\label{thm:pde}
The tropical effective action fulfills the PDE
\begin{align*} \mathcal P_D \, \Gamma^\tr = \left(1- \frac{\partial^2 \Gamma^\tr}{ \partial \varphi^2} \right)^{-1} - 1. \end{align*}
\end{theorem}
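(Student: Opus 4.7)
The plan is to apply $\mathcal P_D$ directly to the defining series~\eqref{eq:gammatr}, use the eigenvalue identity of Lemma~\ref{lmm:PDdiag} together with the edge-recursion~\eqref{eq:HDedgerec} for the Hepp bound to rewrite the result as a sum indexed by \emph{pointed} 1PI graphs $(G,e)$, and finally set up a bijection with beaded graphs to conclude via Proposition~\ref{prop:beaded}.

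\textbf{Step 1: differentiate term by term.} Since $\mathcal P_D$ is a linear first-order operator acting diagonally on the monomials $\varphi^{n(G)}\prod_{v\in V_G}\lambda_{|v|}$, Lemma~\ref{lmm:PDdiag} immediately yields
\begin{equation*}
\mathcal P_D \Gamma^\tr \;=\; \sum_{G\in\mathcal G} \frac{\varphi^{n(G)}\prod_{v\in V_G}\lambda_{|v|}}{|\Aut'(G)|}\,\bigl(2\omega_D(G)\bigr)\,\mathcal H_D(G).
\end{equation*}
The factor $2\omega_D(G)$ is exactly what Definition~\ref{def:Hepp} inverts in the recursion $\omega_D(G)\mathcal H_D(G)=\sum_{e\in E_G}\mathcal H_D(G\setminus e)$, so after substitution
\begin{equation*}
\mathcal P_D \Gamma^\tr \;=\; 2\sum_{G\in\mathcal G}\sum_{e\in E_G} \frac{\varphi^{n(G)}\prod_{v\in V_G}\lambda_{|v|}}{|\Aut'(G)|}\,\mathcal H_D(G\setminus e).
\end{equation*}

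\textbf{Step 2: pointed 1PI graphs versus beaded graphs.} I would establish a bijection between isomorphism classes of pairs $(G,e)$ (with $G$ 1PI and $e\in E_G$) and isomorphism classes of beaded graphs $H\in\mathcal B$, sending $(G,e)\mapsto H=G\setminus e$, where the two half-edges created by the cut become the two specially coloured legs. The inverse glues the two special legs of $H$ into a single edge. The key point to verify is that $G\setminus e$ is indeed beaded: any bridge $b$ of $G\setminus e$ must form, together with $e$, a two-edge cut of $G$ (otherwise $G$ would already have a bridge, contradicting 1PI-ness), and such a $b$ necessarily separates the two endpoints of $e$ in $G\setminus e$, i.e.\ separates the two special legs. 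By induction on the number of bridges, $G\setminus e$ satisfies Definition~\ref{def:beaded}. Conversely, gluing the two special legs of a beaded graph back into an edge turns every former bridge into a non-bridge and preserves connectedness, producing a 1PI graph. Under this bijection vertices are preserved, $n(G)=n(H)-2$, and the stabilizer of the pair $(G,e)$ inside $\Aut'(G)$ corresponds to $\Aut_{\mathcal B}'(H)$ (an automorphism of $H$ fixing the set of two special legs extends uniquely to an automorphism of $G$ fixing $e$, and vice versa).

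\textbf{Step 3: reindex and invoke Proposition~\ref{prop:beaded}.} A standard orbit–stabilizer argument turns the double sum $\sum_G\sum_{e\in E_G} |\Aut'(G)|^{-1}(\cdots)$ into a sum over isomorphism classes of pairs weighted by $|\Aut'_{\text{pair}}(G,e)|^{-1}$, which by Step~2 equals $\sum_{H\in\mathcal B}|\Aut_{\mathcal B}'(H)|^{-1}(\cdots)$. With $\varphi^{n(G)}=\varphi^{n(H)-2}$ this gives
\begin{equation*}
\mathcal P_D \Gamma^\tr \;=\; 2\sum_{H\in\mathcal B} \frac{\varphi^{n(H)-2}\prod_{v\in V_H}\lambda_{|v|}}{|\Aut_{\mathcal B}'(H)|}\,\mathcal H_D(H),
\end{equation*}
and Proposition~\ref{prop:beaded} identifies the right-hand side with $\bigl(1-\partial^2\Gamma^\tr/\partial\varphi^2\bigr)^{-1}-1$, completing the proof.

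\textbf{Main obstacle.} The only genuine subtlety is the combinatorial bijection of Step~2: one must check that cutting any edge of a 1PI graph yields a beaded graph (so that the output of the reindexing really lives in $\mathcal B$), that gluing is its inverse, and that the correct notion of automorphism is preserved on both sides. Once this is settled, the rest of the argument is bookkeeping built on results already proved in the preceding subsections.
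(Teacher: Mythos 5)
Your proposal is correct and follows essentially the same route as the paper's proof: apply $\mathcal P_D$ diagonally via Lemma~\ref{lmm:PDdiag}, absorb the factor $\omega_D(G)$ using the edge recursion of Definition~\ref{def:Hepp}, reindex by edge-pointed 1PI graphs via orbit--stabilizer, identify these with beaded graphs by cutting/gluing the pointed edge, and conclude with Proposition~\ref{prop:beaded}. Your Step~2 in fact spells out the verification that $G\setminus e$ is beaded (via the two-edge-cut argument) in more detail than the paper, which simply asserts the natural isomorphism between $\mathcal E$ and $\mathcal B$.
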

Figure~\ref{fig:troplooprec} illustrates this formula and the argument of the proof below.
The name \emph{tropical loop equation} emphasizes that the right-hand side 
contains only data of one loop order less than the left-hand side by stringing 1PI graphs into a loop.
\begin{proof}
From eq.~\eqref{eq:gammatr}, Lemma~\ref{lmm:PDdiag},
and Definition~\ref{def:Hepp}, we know that
\begin{align*} \mathcal P_D \Gamma^\tr = \sum_{G \in \mathcal G} \frac{\varphi^{n(G)} \prod_{v\in V_G} \lambda_{|v|}} {|\Aut' (G)|}\, 2\omega_D(G)\cdot \HE_D(G) = 2\sum_{G \in \mathcal G} \frac{\varphi^{n(G)} \prod_{v\in V_G} \lambda_{|v|}} {|\Aut' (G)|}\, \sum_{e\in E_G} \mathcal H_D(G\setminus e). \end{align*}
Let $\mathcal E$ be the set of isomorphism classes of \emph{edge-pointed 1PI graphs}, i.e.~pairs $(G,e)$ such that $G \in \mathcal G$ and $e \in E_G$.
We write $\Aut'(G,e)$ for the subgroup of $\Aut'(G)$ that fixes the edge $e$. Using the orbit-stabilizer theorem, we may switch the $1/\Aut'$ factor and the sum over $e$ in the equation above to get a generating function of Hepp-weighted edge-pointed 1PI graphs:
\begin{align*} \mathcal P_D \Gamma^\tr = 2 \sum_{(G,e) \in \mathcal E} \frac{\varphi^{n(G)} \prod_{v\in V_G} \lambda_{|v|}} {|\Aut' (G,e)|}\, \mathcal H_D(G\setminus e). \end{align*}
Figure~\ref{fig:troplooprec} illustrates this equation. The stated formula now follows by observing that the set of beaded graphs $\mathcal B$ and the set of edge-pointed 1PI graphs $\mathcal E$ are naturally isomorphic: If we cut the pointed edge of an edge-pointed 1PI graph and give a special colour to the two new legs, we get a beaded graph. If we connected the two special legs of a beaded graph and remember the position of the newly created edge, we get an edge-pointed 1PI graph. The respective notions of automorphisms are compatible. Further, we have $\mathcal H_D(G\setminus e) = \mathcal H (G_1) \cdot \mathcal H(G_2) \cdots$, where $G_1,G_2,\ldots$ are the 1PI components of $G\setminus e$.
The statement follows from Proposition~\ref{prop:beaded}.
\end{proof}

The PDE~\eqref{eq:pde} does not fix 
$\Gamma^\tr$ completely. 1PI graphs that have no edges (i.e.~vertex graphs) contribute terms of the form 
$ \lambda_k \frac{\varphi^k}{k!} $
to $\Gamma^\tr$. Such terms are mapped to $0$ by $\mathcal P_D$, as can be seen from Lemma~\ref{lmm:PDdiag}. 
Consistently, vertex graphs cannot be formed by connecting the two special legs of a beaded graph.
The boundary condition 
\begin{align*} \lim_{\hbar \rightarrow 0}\, \hbar \cdot \Gamma^\tr\left(D, \varphi {\hbar}^{-\frac12}, \hbar^{\frac{1}{2}} \lambda_3,\ldots,\hbar^{\frac{k-2}{2}} \lambda_k, \ldots\right) = \lim_{\hbar \rightarrow 0} \sum_{G \in \mathcal G} \hbar^{L(G)} \frac{\varphi^{n(G)} \prod_{v\in V_G} \lambda_{|v|}} {|\Aut' (G)|}\, \HE_D(G) = \lambda_k \frac{\varphi^k}{k!} \end{align*}
fixes the 1PI vertex contribution to $\Gamma^\tr$ to the value required by the definition in eq.~\eqref{eq:gammatr}.

The PDE~\eqref{eq:pde}, proven in Theorem~\ref{thm:pde}, exhibits a structure reminiscent of the Wetterich equation and, after some manipulation, also aligns well with the Polchinski equation~\cite{Polchinski:1983gv,Wetterich:1992yh}.\footnote{I thank Dario Benedetti and Kevin Costello for their illuminating explanations of these functional differential equations.} Both these are functional differential equations from the functional renormalization group framework. It would be very interesting to prove the PDE~\eqref{eq:pde} directly using functional renormalization group techniques instead of the perturbative route taken here.

\subsection{Perturbative tropicalized quantum field theory}
\label{sec:perttropqft}

In this article, we will not attempt an in-depth analysis of the PDE~\eqref{eq:pde}. It is not 
apparent if (and how) it could be solved analytically. 
In this section, we will explain how to solve it effectively in the ring of power series $\Q(D)[\![\varphi,\lambda_3,\lambda_4,\ldots]\!]$ in the formal variables $\varphi,\lambda_3,\lambda_4,\ldots$ over the field of rational functions in $D$. By Lemma~\ref{lmm:PDdiag}, the differential operator $\mathcal P_D$ is diagonal in the canonical basis of this ring over $\Q(D)$.
Further, it is easy to see that for 
connected graphs $G$ and generic $D$, we only have $\omega_D(G) =0$ if $G$ is a single-vertex graph.
So, by the definition of $\Gamma^\tr$ in eq.~\eqref{eq:gammatr},
\begin{align*} \Gamma^\tr = \sum_{k \geq 3} \varphi^k \frac{\lambda_k}{k!} + (\textit{terms that lie in the image of $\mathcal P_D$}). \end{align*}
Therefore, Theorem~\ref{thm:pde} implies that 
\begin{align} \label{eq:fixpoint} \Gamma^\tr = \sum_{k \geq 3} \varphi^k \frac{\lambda_k}{k!} + \mathcal P_D^{-1} \left( \left(1- \frac{\partial^2 \Gamma^\tr}{ \partial \varphi^2} \right)^{-1} - 1 \right), \end{align}
where $\mathcal P_D^{-1}$ is the canonical inverse of $\mathcal P_D$ in the monomial basis.
We remark that the fixed-point equation~\eqref{eq:fixpoint} is a spot-on manifestation 
of a \emph{Dyson--Schwinger equation} \cite[Chapter~19]{Bjorken:1965zz}. In particular, it can be easily put into the form of the `simple example' in \cite[\S 2]{Kreimer:2006ua}. This enables the study of this fixed-point equation using the Connes--Kreimer Hopf algebra of rooted trees~\cite{Connes:1998qv}. We leave such a study to future work.

Let $f$ be the operator $\Q(D)[\![\varphi,\lambda_3,\lambda_4,\ldots]\!] \rightarrow \Q(D)[\![\varphi,\lambda_3,\lambda_4,\ldots]\!]$
given by
\begin{align*} f(P)= \sum_{k \geq 3} \varphi^k \frac{\lambda_k}{k!} + \mathcal P_D^{-1} \left( \left(1- \frac{\partial^2 P}{ \partial \varphi^2} \right)^{-1} - 1 \right). \end{align*}
It is a small exercise in operator theory on such rings to
prove that with $P_{-1} = 0$, the iteration $P_{L+1} = f(P_L)$ converges to 
$\Gamma^\tr$ from eq.~\eqref{eq:gammatr}:
$\Gamma^\tr = \lim_{L\rightarrow \infty} P_L$ in the usual power series topology.
The $L$-th iteration $P_L$ agrees with the 
full series $\Gamma^\tr$ up to the $L$-th loop order.
The series expansion of $\Gamma^\tr$ can be computed easily using this iteration. The first few terms are
\begin{gather*} \Gamma^\tr =~ \frac{\varphi^3 \lambda_3}{3!} + \frac{\varphi^4 \lambda_4}{4!} + \frac{\varphi^5 \lambda_5}{5!} + \frac{\varphi^6 \lambda_6}{6!} + \ldots \\
   -\frac{\varphi \lambda_3}{D-2} - \frac{\varphi^2 \lambda_3^2}{D-4} - \frac{\varphi^2 \lambda_4}{2(D-2)}  -\frac{\varphi^3 \lambda_3^3}{D-6} -\frac{\varphi^3 \lambda_3 \lambda_4}{D-4} -\frac{\varphi^3 \lambda_5}{6(D-2)} + \ldots \\
+ \frac{\lambda_4}{2(D-2)^2}  +\frac{ \lambda_3^2 }{(D-3)(D-4)}  +\frac{\varphi \lambda_5}{2(D-2)^2}  + \frac{\varphi \lambda_3^3 (5D-24)}{(D-4)^2(D-6)}   +\frac{2\varphi \lambda_3 \lambda_4 (2D-5)}{(D-2)(D-3)(D-4)} + \ldots  \\
-\frac{\lambda_6}{6(D-2)^3} -2\frac{\lambda_3\lambda_5}{ (D-2)(D-3)(D-4) }  -\frac23\lambda_3^4\frac{- 2400 + 1412 D - 272 D^2 + 17 D^3}{ (D-4)^3(D-5)(D-6)(D-8) }    \\
-\lambda_4^2 \frac{32-25D+5D^2}{(D-2)^2(D-3)(D-4)(3D-8)}  -\lambda_3^2\lambda_4\frac{480 - 290 D+ 39D^2}{(D-2)(D-4)^2(D-6)(3D-10)}  +\ldots\\
+\ldots \end{gather*}
Here, the first line represents the $0$-loop (tree-level) terms in $\Gamma^\tr$ up to order $\varphi^6$. The first iteration, $P_0 = f(P_{-1}) = f(0)$, and all following $P_L$ with $L\geq 0$ contain these terms. From the next iteration, $P_1 = f(P_0)$ on, $P_L$ contains the correct first-loop terms, which are illustrated in the second line up to order $\varphi^3$. The third line contains the corresponding second loop order terms up to order $\varphi^1$. The three-loop contribution is shown up to order $\varphi^0$ (only vacuum contributions) in lines 4 and 5. Hence, iterating $f$ three times to get $P_3$, which agrees with $\Gamma^\tr$ up to third loop order, was sufficient to compute the displayed terms. Notice that each $\varphi^n$ coefficient  to the $L$-loop contribution is always a rational function in $D$ of total rational degree $-L$.

\subsection{Critical tropicalized quantum field theory}
\label{sec:crit}
The tropical loop equation~\eqref{eq:pde} shows interesting behaviour if we restrict ourselves to one coupling constant, i.e.~we put $\lambda_r=0$ for all $r \neq k$ and fix the dimension such that the $\lambda_k$ coupling becomes critical. Under these restrictions, the non-linear PDE~\eqref{eq:pde} reduces to a non-linear ODE. For instance, in the $\phi^4$ theory case, where $\lambda_r=0$ for all $r\neq 4$ and $D=4$, eq.~\eqref{eq:pde} reduces to
\begin{align} \label{eq:odephi4} \left( -4 + \varphi \frac{\partial}{\partial \varphi} \right) \Gamma_{\phi^4}^\tr = \left(1- \frac{\partial^2 \Gamma_{\phi^4}^\tr}{\partial \varphi^2} \right)^{-1} -1. \end{align}
Here, an interesting twist emerges:
The tropical loop equation becomes autonomous of the coupling, which suddenly only contributes via the boundary condition. It is unclear how to implement the perturbative 
approach from the last section: Now all 1PI graphs with four external legs lie in the kernel of the operator $-4 + \varphi \frac{\partial}{\partial \varphi}$, not just terms corresponding to single vertices. Without the regularization provided by a generic $D$, it is suddenly non-trivial to invert the operator $-4 + \varphi \frac{\partial}{\partial \varphi}$. The root of the problem lies in spurious perturbative divergences that appear in the double limit $\lambda_4 \rightarrow 0$ and $D\rightarrow 4$. The famous perturbative quantum field theory answer to this issue is \emph{renormalization}.
However, the tropical loop equation elegantly resolves these divergences on its own.
The coupling constant only enters via the boundary data that can conveniently also be interpreted as \emph{renormalization conditions} for the second-order ODE:
\begin{align*} \frac{\partial \Gamma_{\phi^4}^\tr}{\partial \varphi} \big|_{\varphi = 0} &= 0 \quad \text{and}\quad \frac{\partial^4 \Gamma_{\phi^4}^\tr}{\partial \varphi^4} \big|_{\varphi = 0} = \lambda_4. \end{align*}
The first condition enforces the $\varphi \leftrightarrow -\varphi$ symmetry of the theory.
The second one fixes the dressed, renormalized four-point function.
So, renormalization is naturally baked into the tropical loop equation. Interestingly, this procedure also effectively fixes the two-point function.  %
We will not attempt a detailed study of critical tropicalized quantum field theory here. 

The ODE~\eqref{eq:odephi4} is a special case of a known equation in the functional renormalization group community (see, e.g., equation (3) of \cite{Codello:2012sc} after substituting $\widetilde V_* \rightarrow \frac14 -\Gamma^\tr$, $d\rightarrow 4$ and $c_d =1$). In particular, numerical investigations found that the specific ODE~\eqref{eq:odephi4} has no non-trivial solutions that are finite for all real values of $\varphi$ \cite{Codello:2012sc,Hellwig:2015woa}. We leave the exploration of the consequences of this observation for tropicalized quantum field theory and further connections to functional renormalization group methods to future work.

\section{Global tropical sampling}
\label{sec:algo}
In \cite{Borinsky:2020rqs}, the author introduced a sampling algorithm that randomly generates points in $\mathbb P^{E_G}_{>0}$, weighted by evaluations of the tropicalized Symanzik polynomials $\mathcal U_G, \mathcal F_G$, tailored towards an efficient evaluation of Feynman integrals. Here, we will extend this idea to the entire quantum field theory and directly evaluate full perturbative coefficients in one go. Remarkably, and in contrast to~\cite{Borinsky:2020rqs}, the new global sampling algorithm that we will introduce runs in polynomial time. At sufficiently large loop order, it hence becomes significantly more efficient to evaluate the entire perturbative coefficient at once than to compute an individual Feynman integral contributing to it. In fact, due to the different runtime scalings, it will quickly become more efficient to evaluate the whole coefficient than \emph{any single Feynman integral at the same loop order}.

Recall the definition of the $n$-point correlation function in eqs.~\eqref{eq:defGamma}--\eqref{eq:defFeynman}, which imply:
\begin{gather} \notag \Gamma_{n}^1(p_1,\ldots,p_n) = \mu^{D+n(1-D/2)} \cdot \widehat \delta^{({D})}\left( \sum_{i=1}^n p_i\right) \cdot \widetilde \Gamma_{n}^1(p_1,\ldots,p_n)\\
\label{eq:Gamma1} \widetilde \Gamma_{n}^1(p_1,\ldots,p_n) = \sum_{\substack{G\in \mathcal G_n}} \frac{\prod_{v\in V_G} \lambda_{|v|}} {|\Aut (G)|} \frac{\Gamma(\omega_D(G)) } { (4\pi)^{L(G)\cdot D/2} } \int_{\mathbb P_{>0}^{E_G}} \frac{{\prod_{e\in E_G} x_e}} { \mathcal U_G(\bb x)^{D /2} \cdot \mathcal V_G(\bb x)^{\omega_D(G)} } \Omega\, , \end{gather}
where we set the deformation parameter back to $\xi=1$, as we are now interested in the original non-deformed quantum field theory. We aim to compute $\Gamma_{n}^1(p_1,\ldots,p_n)$. We will provide an algorithm that does so numerically. This algorithm relies crucially on the tropicalized quantum field theory in the $\xi\rightarrow 0^+$ limit
and its solution via the tropical loop equation~\eqref{eq:pde}.

In this section, to keep the notation manageable and for concreteness, we will mostly restrict ourselves to the $\phi^k$ case, where there is only one type of interaction. The algorithms and arguments presented here generalize to more complicated interactions, but at the cost of a large proliferation of indices for bookkeeping. To avoid this notational blowup, we will fix $k \geq 3$ and assume $\lambda_r = 0$ for all $r \neq k$.

Further, it is convenient to pass to indexing by loop order instead of the number of vertices. 
\begin{lemma}
\label{lmm:kregEdgeVert}
A connected, $k$-regular graph $G$ with $n$ legs and $L$ loops has
$\frac{2(L-1)+n}{k-2}$ vertices and
$\frac{(L-1)k+n}{k-2}$~edges.
\end{lemma}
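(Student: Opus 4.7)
The proof is a short exercise combining two elementary identities: the handshake lemma (sum of vertex degrees equals twice the number of internal edges plus the number of legs) and Euler's formula for the first Betti number of a connected graph.

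The plan is as follows. Write $V$ for the number of vertices (all of degree $k$, since $G$ is $k$-regular and legs are excluded from the vertex set by the convention in Section~\ref{sec:pert}), and $E$ for the number of internal edges. Each vertex has $k$ incident half-edges; internal edges contribute two half-edges each, while legs (external edges attached to a unique vertex) contribute one. Therefore
\begin{equation*}
kV = 2E + n.
\end{equation*}
Next I would invoke Euler's formula for the loop number of the connected graph $G$, namely $L = E - V + 1$ (the legs, being trees attached at a single vertex each, do not alter $L$ whether or not one chooses to include them in the vertex/edge count, since they add an equal number of vertices and edges). This gives $E = L - 1 + V$.

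Substituting this into the handshake identity yields $kV = 2(L-1) + 2V + n$, i.e. $(k-2)V = 2(L-1) + n$, so
\begin{equation*}
V = \frac{2(L-1) + n}{k-2},
\end{equation*}
and then
\begin{equation*}
E = L - 1 + V = \frac{(k-2)(L-1) + 2(L-1) + n}{k-2} = \frac{k(L-1) + n}{k-2},
\end{equation*}
which is the claimed formula.

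There is really no obstacle here; the only mild subtlety is bookkeeping around legs, and one simply needs to confirm that the convention fixed in Section~\ref{sec:pert}, where legs are degree-$1$ vertices but the term ``vertex'' refers only to non-leg vertices, is applied consistently in both the handshake count and Euler's formula.
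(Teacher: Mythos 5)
Your proposal is correct and follows exactly the paper's route: the paper's proof simply cites the two identities $\sum_{v\in V_G}|v| = n(G) + 2|E_G|$ and $L(G) = |E_G| - |V_G| + 1$ from the proof of Lemma~\ref{lmm:PDdiag} together with $|v|=k$, which are precisely the handshake and Euler relations you combine. Your algebra and the bookkeeping about legs are consistent with the conventions of Section~\ref{sec:pert}, so there is nothing to add.
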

\begin{proof}
Use the identities in the proof of Lemma~\ref{lmm:PDdiag} and 
$|v| =k$ for all $v\in V_G$.
\end{proof}

Let $\mathcal G_{L,n}^k$ be the set of isomorphism classes of $k$-regular (i.e.~$\phi^k$) 1PI graphs with $n$ legs and $L$ loops.
In the $k$-regular case, eq.~\eqref{eq:Gamma1} specializes to the following formula for the expansion of the $n$-point correlation function:
\begin{gather} \notag \widetilde \Gamma_{n}^1(p_1,\ldots,p_n) = \frac{1}{(4\pi)^{\frac{D}{2}(1-\frac{n}{2})}} \sum_{L \geq 1} \left( \frac{\lambda_k} { (4 \pi)^{\frac{D}{4} (k-2)} }\right)^{\frac{2(L-1)+n}{k-2}} \widetilde \Gamma_{L,n}^1(p_1,\ldots,p_n)\\
\label{eq:Gamma1Ln} \widetilde \Gamma_{L,n}^1(p_1,\ldots,p_n) = \sum_{\substack{G\in \mathcal G_{L,n}^k}} \frac{ \Gamma(\omega_{D}(G)) } {|\Aut (G)|} \int_{\mathbb P_{>0}^{E_G}} \frac{{\prod_{e\in E_G} x_e}} { \mathcal U_G(\bb x)^{D /2} \cdot \mathcal V_G(\bb x)^{\omega_{D}(G)} } \Omega\, . \end{gather}

The numerical evaluation algorithm will sample pairs 
of a graph $G$ and a corresponding point $\bb x \in \mathbb P^{E_G}_{>0}$.
We can interpret these pairs as metric graphs: Each homogeneous coordinate $x_e$ 
provides a length for the corresponding edge $e$ of $G$.
The pairs will be produced with probabilities that are close to the 
values of the summand and integrand in eq.~\eqref{eq:Gamma1Ln}. 
Before we dive into the inner workings of this sampling algorithm,
we will provide a conceptually helpful interpretation of $n$-point correlation functions 
as integrals over a moduli space of metric graphs.

\subsection{Moduli spaces of metric \texorpdfstring{$k$}{k}-regular graphs and their volumes}
\label{sec:modulimetric}

    In \cite{Mirzakhani:2006fta}, Maryam Mirzakhani
derived elegant recursive formulas to compute the volumes of moduli spaces of bordered Riemann surfaces (i.e., surfaces of given genus with specified boundary lengths). 
These volumes are given by integrating the Weil--Petersson form over the moduli space.

The discussion in the preceding sections can be viewed as an analogue of Mirzakhani's volume recursion. The present section aims to make this analogy a bit more precise: the moduli space of Riemann surfaces in Mirzakhani's setting is replaced here by a moduli space of graphs, and the Weil--Petersson volume form is replaced by a form determined by the tropicalized Symanzik polynomials that is natural from a quantum field theory perspective. Finally, in this correspondence, Mirzakhani's recursion equation is replaced by the tropical loop equation~\eqref{eq:pde}. %

Similar to Mirzakhani's recursion, which not only computes Weil--Petersson volumes but also enables the study of probability measures and sampling of random Riemann surfaces, the tropical loop equation in our framework provides a method to study specific probability measures on the moduli space of graphs. %
Remarkably, there is also an explicit, polynomial-time algorithm to sample random metric graphs accordingly. This sampling procedure allows for efficient numerical integration of Feynman-type perturbative expansions such as $n$-point correlation functions via a basic Monte Carlo workflow.

The moduli space of metric $k$-regular graphs $\MG^k_{L,n}$ with $L$ loops and $n$ legs 
is an \emph{orbispace} that combines all graphs with a (normalized) metric into one geometric object. 
Let 
$\mathcal G^k_{L,n}$ be the set of isomorphism classes of $k$-regular 1PI graphs with $n$ legs and $L$ loops.
As a set, $\MG^k_{L,n}$ is 
\begin{align*} \mathcal {MG}^k_{L,n} = \bigsqcup_{G \in \mathcal G^k_{L,n}} \mathbb P^{E_G}_{>0} / \Aut(G) \,, \end{align*}
where we take the disjoint union over all $k$-regular 1PI graphs $G$ with $n$ legs and $L$ loops and take the quotient with respect to the graphs' automorphism groups. As in Section~\ref{sec:pre}, the projective simplex $\mathbb P_{>0}^{E_G}$ is the positive part of $|E_G|-1$ dimensional real projective space parametrized by one homogeneous coordinate for each edge of $G$. The group $\Aut(G)$ acts on $\mathbb P^{E_G}_{>0}$ via the canonical action induced on $G$'s edges applied to these coordinates.
This action is not free. As an orbispace $\mathcal {MG}^k_{L,n}$ locally keeps track of the stabilizers of this action. 
It is a subspace of the moduli space of graphs and the (compact) moduli space of tropical curves of genus $L$ with $n$ marked points (see, e.g., \cite{chan2021moduli}): $\mathcal {MG}^k_{L,n} \subset \mathcal {MG}_{L,n} \subset \mathcal M^\tr_{L,n}$.

Each point in $\mathcal {MG}^k_{L,n}$ is represented by a pair $(G,\bb x)$ where $G$ is a $k$-regular 1PI graph and
$\bb x \in \mathbb P^{E_G}_{>0}$. We can think of such a point as a \emph{metric graph} of fixed total volume, where the homogeneous coordinate $x_e$ gives the length of the edge $e$. A differential $p$-form on $\mathcal {MG}^k_{L,n}$ is a collection of $p$-forms $\rho_G \in \Omega^p(\mathbb P^{E_G}_{>0})$ for each $G \in \mathcal G^k_{L,n}$ such that every $\rho_G$ is invariant under the action of $\Aut(G)$, i.e.~$\alpha^* \rho_G = \rho_G$ for all $\alpha \in \Aut(G)$.
On top of that, it is required that such a family of forms is compatible with a canonical gluing of the cells (see, e.g.,~\cite[\S 2.2]{Brown:2021umn}), but these additional conditions are irrelevant here, so we will not discuss them. 
Let $\rho = \{\rho_G\}_{G\in \mathcal G^k_{L,n}}$ be such a collection of top-forms. Orbifold integration over $ \mathcal {MG}^k_{L,n}$ is defined by
\begin{align*} \int_{ \mathcal {MG}^k_{L,n} } \rho = \sum_{ \substack{ G\in \mathcal G^k_{L,n} } } \frac{1}{|\Aut(G)|} \int_{\mathbb P_{>0}^{E_G}} \rho_G\, , \end{align*}
where we integrate over all cells in the moduli space and weight each point by the cardinality of the stabilizer that is attached to it.
A particularly interesting example of an appropriate form comes from $\phi^k$ quantum field theory. It is given by 
\begin{align*} \mu_{L,n} &= \{\mu_G\}_{G\in \mathcal G^k_{L,n}} &\text{ where }&& \mu_G &= \Gamma(\omega_{D}(G)) \frac{ {\prod_{e\in E_G} x_e}} { \mathcal U_G(\bb x)^{D /2} \cdot \mathcal V_G(\bb x)^{\omega_{D}(G)} } \, \Omega \, . \intertext{        We also define the tropicalized analogue of $\mu_{L,n}$ }  \mu_{L,n}^\tr &= \{\mu^\tr_G\}_{G\in \mathcal G^k_{L,n}} &\text{ where }&& \mu^\tr_G &= \frac{1}{\omega_{D}(G)} \frac{ {\prod_{e\in E_G} x_e}} { \mathcal U^\tr_G(\bb x)^{D /2} \cdot \mathcal V^\tr_G(\bb x)^{\omega_{D}(G)} } \, \Omega \, . \end{align*}
From the definition of $\mathcal U_G$ and $\mathcal V_G$, it follows that 
$\mu_G$ and $\mu_G^\tr$ are invariant under the action of $\Aut(G)$, so both 
families give rise to appropriate 
forms on $\mathcal {MG}^k_{L,n}$.

Combining this definition with \eqref{eq:Gamma1Ln} gives a compact formula for the $L$-loop $n$-point correlation function in $\phi^k$ quantum field theory as a moduli space integral:
\begin{align*} \widetilde \Gamma_{L,n}^1(p_1,\ldots,p_n) = \int_{\mathcal {MG}^k_{L,n}} \mu_{L,n}. \end{align*}

If the integral
of 
$\mu_{L,n}^\tr$ over 
$\mathcal {MG}^k_{L,n}$
exists, 
then we can define its normalized version,
\begin{align} \label{eq:tildemutrdef} \widetilde \mu_{L,n}^\tr = \frac{\mu_{L,n}^\tr}{Z^k_{D}(L,n)} \, , \text{ where } Z^k_{D}(L,n) = \int_{\mathcal {MG}^k_{L,n}} \mu_{L,n}^\tr \, , \end{align}
is an appropriate normalization factor. Positivity of the form $\mu_{L,n}^\tr$ then implies that 
$\widetilde \mu_{L,n}^\tr$ defines a probability measure on $\mathcal {MG}^k_{L,n}$. 

\begin{proposition}
\label{prop:finiteZ}
If for all $L \geq 1$ and $n \geq 2$, where 
$\frac{2(L-1)+n}{k-2}$
and $\frac{(L-1)k+n}{k-2}$ are integers,
\begin{align} \label{eq:omegaLn} \omega^k_D(L,n)   := \frac{(L-1)k+n}{k-2}-\frac{LD}{2} > 0\, ,  \end{align}
then 
the integral in eq.~\eqref{eq:tildemutrdef}
exist and $Z^k_{D}(L,n)$ is finite and non-negative 
for all $L \geq 0$ and $n \geq 2$.
\end{proposition}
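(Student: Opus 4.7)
The plan is to reduce $Z^k_D(L,n)$ to a finite sum over $\mathcal{G}^k_{L,n}$ of Hepp bounds $\mathcal{H}_D(G)$, and then to prove finiteness of each $\mathcal{H}_D(G)$ using the assumed UV power-counting positivity. By Lemma~\ref{lmm:kregEdgeVert}, both $|V_G|$ and $|E_G|$ are fixed for any $G\in\mathcal{G}^k_{L,n}$, so this set is finite. It therefore suffices to show that each term
\begin{align*}
\frac{1}{|\Aut(G)|}\int_{\mathbb P^{E_G}_{>0}} \mu_G^\tr
\end{align*}
is finite and non-negative. Non-negativity is immediate from positivity of the integrand on the open simplex. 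For finiteness, Proposition~\ref{prop:hepp} (applicable since $m^2>0$) identifies the integral with $\mathcal{H}_D(G)$, and Corollary~\ref{cor:HDcube} rewrites the latter as the cube integral $\int_{[0,1]^{E_G}} \dd\xiz_1\cdots \dd\xiz_{|E_G|} / \mathcal{U}_G^\tr(\bb\xiz)^{D/2}$ with a positive integrand.

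It remains to prove $\mathcal{H}_D(G) < \infty$ for every $k$-regular 1PI graph $G$ with $L\geq 1$ loops and $n\geq 2$ legs. The case $L=0$ is trivial since for a tree $\mathcal{U}_G^\tr\equiv 1$. For $L\geq 1$, I would invoke the standard UV power-counting convergence criterion: the cube integral converges whenever $\omega_D(\gamma)>0$ for every 1PI subgraph $\gamma\subseteq G$ with $L(\gamma)\geq 1$ (no IR issues since $m^2>0$ and integration is over the compact cube). The hypothesis supplies this positivity for all pairs $(L,n)$ with $L\geq 1$ and $n\geq 2$, so the crux is to verify that every such subgraph $\gamma$ of $G$ falls under the hypothesis. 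This rests on two graph-theoretic observations: first, each 1PI subgraph $\gamma$ of a $k$-regular graph $G$ is itself $k$-regular once every half-edge at $V_\gamma$ outside $E_\gamma$ is reinterpreted as a leg, so that $\omega_D(\gamma)=\omega^k_D(L(\gamma),n(\gamma))$; second, the 1PI property of $G$ forces $n(\gamma)\geq 2$ for every proper 1PI subgraph, since a proper 1PI subgraph with at most one external edge would produce a bridge in $G$ or coincide with $G$ itself.

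The main obstacle is the rigorous justification of the power-counting convergence criterion in the tropical cube-integral setting. One direct route is to re-derive it from the edge-decomposition identity $\mathcal{H}_D(G)=\omega_D(G)^{-1}\sum_{e\in E_G}\mathcal{H}_D(G\setminus e)$ from the proof of Corollary~\ref{cor:HDcube} by induction on $|E_G|$; however, this requires an inductive invariant on a class of graphs stable under edge deletion, which is delicate since $G\setminus e$ is generally neither 1PI nor $k$-regular, and so its relevant subgraphs must be controlled more carefully. An alternative, cleaner route is to upper-bound $\mathcal{H}_D(G)$ by a constant times the classical Schwinger parametric integrand via Lemma~\ref{lmm:tropapprox}, thereby reducing the question to the well-established convergence of Euclidean Feynman integrals with positive mass under the subgraph power-counting condition. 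With either route, the two observations above close the proof.
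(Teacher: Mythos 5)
Your proposal is correct in substance but follows a genuinely different route from the paper. The paper's own proof is a one-line induction on the recursion equations \eqref{eq:Zrec} and \eqref{eq:Brec}: starting from $Z^k_D(0,k)=1$ and $Z^k_D(0,m)=0$ otherwise, every step of the recursion is a sum or product of non-negative quantities except for the division by $2\omega^k_D(L,n)$ with $L\geq 1$, $n\geq 2$, so hypothesis \eqref{eq:omegaLn} immediately yields finiteness and non-negativity of all $Z^k_D$ and $B^k_D$ at once, without ever inspecting an individual graph. You instead argue graph by graph, reducing each $\mathcal H_D(G)$ to the classical subgraph power-counting criterion and then checking that the hypothesis covers every subgraph that matters. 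Your two combinatorial observations are correct: $n(\gamma)\geq 2$ does follow from bridgelessness exactly as you say, and $\omega_D(\gamma)=\omega^k_D(L(\gamma),n(\gamma))$ follows from Lemma~\ref{lmm:kregEdgeVert} applied to $\gamma$ with cut half-edges counted as legs. These observations make explicit \emph{why} the stated hypothesis is precisely the right one, something the paper's proof leaves buried in the bookkeeping of the recursion. The cost is the convergence criterion itself, which you flag as the ``main obstacle'' and only sketch. Of your two routes, the second closes the gap cleanly: Lemma~\ref{lmm:tropapprox} bounds $\mathcal U_G^\tr/\mathcal U_G$ and $\mathcal F_G^\tr/\mathcal F_G$ between positive constants, reducing convergence of the tropical integral to that of the massive Euclidean parametric integral, for which the subgraph power-counting criterion is classical and is exactly the reference \cite{Speer:1975dc} already invoked in Section~\ref{sec:positivity}; you should also record that $\omega_D$ is additive over 1PI components (bridges contributing $+1$ and forests having $\omega_D(S)=|S|>0$), so positivity on 1PI subgraphs with $L(\gamma)\geq 1$ controls all edge subsets. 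Your first route, induction on single-edge deletion, is indeed awkward for the reason you identify; the natural self-contained replacement is the full Hepp-sector decomposition obtained by iterating the $C_e$ decomposition of Lemma~\ref{lmm:kruskal} and Corollary~\ref{cor:HDcube}, which writes the cube integral as a finite sum over edge orderings of products of $\omega_D(\gamma)^{-1}$ over chains of edge subgraphs. In comparison, your argument buys a transparent graph-level justification of the hypothesis and finiteness of each individual $\mathcal H_D(G)$, at the price of importing an external convergence theorem; the paper's recursion argument is shorter and simultaneously delivers finiteness of the $B^k_D(L,n)$, which is needed later for the probabilities in Algorithm~\ref{algo:beaded}.
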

By Lemma~\ref{lmm:tropapprox}, this proposition 
also implies that $\widetilde \Gamma_{L,n}^1(p_1,\ldots,p_n)$
is finite provided that the conditions on $m^2$ and $p_1,\ldots, p_n$ laid out in Section~\ref{sec:positivity} are fulfilled. We will postpone the proof of this proposition to the next section.

Using the results of the preceding sections, in particular the tropical loop equation, we will construct an efficient sampling algorithm that produces representatives of points $(G,\bb x) \in \mathcal {MG}^k_{L,n}$ with probability density $\widetilde \mu^\tr_{L,n}$. Here, efficient means that this algorithm runs in polynomial time and it has polynomially bounded memory requirements. 

In the remainder of this section, we will briefly describe the application of this algorithm to numerical integration over the moduli space of graphs and thereby the computation of perturbation theoretic quantities in quantum field theories such as $n$-point correlation functions.
It is convenient to define the residual function $f_{L,n}:\mathcal{MG}_{L,n}^k \rightarrow \R$ by 
\begin{align} \label{eq:fLn} f_{L,n}(G,x)= \Gamma(\omega_{D}(G)+1) \left( \frac{ \mathcal U_G^\tr(\bb x) } { \mathcal U_G(\bb x) } \right)^{D /2} \left( \frac{ \mathcal V^\tr_G(\bb x) } { \mathcal V_G(\bb x) } \right)^{\omega_{D}(G)}. \end{align}
As before with the top-forms $\mu_{L,n}$ and $\mu^\tr_{L,n}$,
the function $f$ descends to a function on $\mathcal{MG}_{L,n}^k$, because of the symmetries of the Symanzik polynomials 
under the group action of $\Aut(G)$.

Using~\eqref{eq:Gamma1Ln} and the definitions above, we find the following representation for the $n$-point correlation function in $\phi^k$-theory as an expectation value of the random variable given by $f_{L,n}$:
\begin{align*} \widetilde \Gamma_{L,n}^1(p_1,\ldots,p_n) = Z^k_D({L,n}) \cdot \int_{\mathcal {MG}^k_{L,n}} f_{L,n} \cdot \widetilde \mu_{L,n}^\tr = Z^k_D({L,n}) \cdot \mathbb E\left[ f_{L,n}\right]_{\mu_{L,n}^\tr} \, . \end{align*}
It follows from Lemma~\ref{lmm:tropapprox} and eq.~\eqref{eq:fLn} that
there exist constants $C_1(L,n)$ and $C_2(L,n)$
such that for all $(G,\bb x) \in \mathcal {MG}^k_{L,n}$, we have
$C_1(L,n) \leq f_{L,n}(G,\bb x) \leq C_2(L,n)$.
This provides all the necessary ingredients to evaluate 
the correlation function
$\widetilde \Gamma_{L,n}^1(p_1,\ldots,p_n)$ in one go as an integral over the moduli space of graphs:

\begin{proposition}
If $(G_1,\bb x_1), \ldots, (G_N,\bb x_N)$ are random pairs $(G_k,\bb x_k) \in \mathcal{MG}^k_{L,n}$ 
distributed independently following the probability density $\widetilde \mu^\tr_{L,n}$ on $\mathcal{MG}^k_{L,n}$,
then we can estimate 
$$
\widetilde 
\Gamma_{L,n}^1(p_1,\ldots,p_n)
\approx
\frac{Z^k_D(L,n)}{N}
\sum_{k=1}^N f_{L,n}(G_k,\bb x_k).
$$
For sufficiently large $N$, the error of this estimate follows a normal distribution with mean $0$
and variance $C({L,n})/N$, where $C({L,n})$ is a constant independent of $N$.
\end{proposition}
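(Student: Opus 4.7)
The plan is to recognize the claimed estimator as a standard Monte Carlo average for the expectation representation derived immediately before the proposition,
\[
\widetilde\Gamma^1_{L,n}(p_1,\ldots,p_n) = Z^k_D(L,n)\cdot \mathbb E\bigl[f_{L,n}\bigr]_{\widetilde\mu^\tr_{L,n}},
\]
and then to invoke the classical law of large numbers and central limit theorem for i.i.d.\ samples.

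First I would verify unbiasedness. Since each $(G_k,\bb x_k)$ is drawn from the probability measure $\widetilde\mu^\tr_{L,n}$, linearity of expectation yields
\[
\mathbb E\!\left[\frac{Z^k_D(L,n)}{N}\sum_{k=1}^N f_{L,n}(G_k,\bb x_k)\right] = Z^k_D(L,n)\cdot \mathbb E\bigl[f_{L,n}\bigr]_{\widetilde\mu^\tr_{L,n}} = \widetilde\Gamma^1_{L,n}(p_1,\ldots,p_n),
\]
so the estimator has the correct mean, and almost-sure convergence of the sample average as $N\to\infty$ follows from the strong law of large numbers.

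The error analysis then reduces to establishing finite variance of $f_{L,n}$ under $\widetilde\mu^\tr_{L,n}$. The key input is already recorded in the paragraph preceding the proposition: the uniform two-sided bound $C_1(L,n)\leq f_{L,n}(G,\bb x) \leq C_2(L,n)$ on $\MG^k_{L,n}$, obtained by applying Lemma~\ref{lmm:tropapprox} to the homogeneous polynomial ratios $\mathcal U_G/\mathcal U^\tr_G$ and $\mathcal V_G/\mathcal V^\tr_G$ occurring in $f_{L,n}$ (the $\Gamma$-factor contributes only a graph-dependent constant that is uniformly bounded in $G$, since $\mathcal G^k_{L,n}$ is finite once $L$ and $n$ are fixed by Lemma~\ref{lmm:kregEdgeVert}). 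Boundedness of $f_{L,n}$ immediately gives $\var_{\widetilde\mu^\tr_{L,n}}(f_{L,n}) < \infty$. The classical i.i.d.\ central limit theorem then delivers asymptotic normality of $\tfrac{1}{N}\sum_k f_{L,n}(G_k,\bb x_k)$ around $\mathbb E[f_{L,n}]_{\widetilde\mu^\tr_{L,n}}$ with variance $\var(f_{L,n})/N$; multiplying by the deterministic scalar $Z^k_D(L,n)$ rescales the variance to $C(L,n)/N$ with $C(L,n) = (Z^k_D(L,n))^2\,\var_{\widetilde\mu^\tr_{L,n}}(f_{L,n})$, a finite, non-negative constant independent of $N$.

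There is no genuine obstacle beyond ensuring that the underlying objects make sense: finiteness and positivity of the normalization $Z^k_D(L,n)$, so that $\widetilde\mu^\tr_{L,n}$ is a bona fide probability measure, are supplied by Proposition~\ref{prop:finiteZ}. With that in hand the statement is a direct application of textbook Monte Carlo theory; the substantive work of the section lies not here but in constructing a sampler that actually produces i.i.d.\ draws from $\widetilde\mu^\tr_{L,n}$ in polynomial time, which is the content of the subsequent subsections.
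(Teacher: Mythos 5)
Your proposal is correct and takes exactly the route the paper intends: the paper's own proof is a one-line appeal to ``a standard argument based on the central limit theorem,'' relying on the two-sided bound $C_1(L,n)\leq f_{L,n}\leq C_2(L,n)$ established just before the proposition via Lemma~\ref{lmm:tropapprox}, which is precisely the finite-variance input you identify. Your write-up merely makes the unbiasedness, law-of-large-numbers, and CLT steps explicit, so there is nothing to add.
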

\begin{proof}
Follows from a standard argument based on the central limit theorem
(see, e.g., \cite[\S 2.2]{Borinsky:2020rqs} for an explanation in a related context).
\end{proof}

\subsection{Explicit recursions for tropicalized quantum field theory}
\label{sec:preproc}
The first and most crucial step in stating a global sampling algorithm for the measure $\widetilde \mu^\tr_{L,n}$ is to give an efficient way to compute the normalization constant $Z^k_D(L,n)$ for fixed $D\in \R$.
To do so, we first write $Z^k_D(L,n)$ as a Hepp bound weighted graph sum:
\begin{align*} Z^k_D({L,n}) = \sum_{G \in \mathcal G^k_{L,n}} \frac{ \mathcal H_D(G) }{|\Aut(G)|}, \end{align*}
which follows, e.g., from Proposition~\ref{prop:hepp} and the definitions of 
$ Z^k_D({L,n})$
and 
integration over $\mathcal {MG}^k_{L,n}$.
Here, we remark on another advantage of taking a global instead of a local graph-by-graph viewpoint: For a single graph $G$, it costs an with $L$ exponentially growing amount of runtime and memory to compute $\mathcal H_D(G)$ \cite[Corollary~3.10]{Panzer:2019yxl}. 
Here, in our global setup, we can compute the sum over all Hepp bounds $Z^k_D({L,n})$ recursively using the tropical loop equation in polynomial time, as we will now explain.

In Section~\ref{sec:tropqft}, we established that the numbers $ Z^k_D({L,n})$
 form the coefficients of the tropical effective action (see eq.~\eqref{eq:gammatr}).
In our, to the $\phi^k$ theory restricted case, we have
\begin{align} \label{eq:ZLn} \Gamma^\tr(\varphi, 0,\ldots, 0,\lambda_k, 0,\ldots)= \Gamma^\tr_{\phi^k} = \sum_{n \geq 0} \sum_{L \geq 0} Z^k_D({L,n}) \frac{\varphi^n}{n!} \lambda_k^{\frac{2(L-1)+n}{k-2}}, \end{align}
where we set $\lambda_r=0$ for all $r \neq k$ and the factor $n!$ accounts for passing from labelled legs to unlabelled legs, as we did in the argument for Theorem~\ref{thm:trop}. Recall Lemma~\ref{lmm:kregEdgeVert}, which implies that 
$\frac{2(L-1)+n}{k-2}$ is always an integer.

In analogy to $\mathcal G_{L,n}^k$, 
we define $\mathcal B_{L,n}^k$, the set of isomorphism classes of $k$-regular beaded graphs (see Definition~\ref{def:beaded}) with $n$ labelled legs and $L$ loops. We also require the two special legs to be distinctly labelled.  We define a second set of coefficients $B^k_D(L,n)\in \R$ via
\begin{align*} B^k_D(L,n) = \sum_{G \in \mathcal B_{L,n}^k} \frac{\mathcal H_D(G)}{|\Aut_{\mathcal B}(G)|}, \end{align*}
where $\Aut_{\mathcal B}(G) = \Aut(G) $ is the group of automorphisms of $G$ that fixes all legs.
From Proposition~\ref{prop:beaded} and Lemma~\ref{lmm:kregEdgeVert}, we know that 
\begin{align} \label{eq:BLn} \frac12 \left( 1- \frac{\partial^2 \Gamma^\tr_{\phi^k}}{\partial \varphi^2} \right)^{-1} - \frac12 = \sum_{n \geq 2} \sum_{L \geq 0} B^k_D({L,n}) \frac{\varphi^{n-2}}{2 (n-2)!} \lambda_k^{\frac{2(L-1)+n}{k-2}} , \end{align}
where again the additional factor $2(n-2)!$ accounts for fixing the legs.

By the tropical loop equation, as, e.g., given in eq.~\eqref{eq:troploopphik}, we have
\begin{align} \label{eq:PDEBZ} \mathcal P^{\phi^k}_D \Gamma^\tr_{\phi^k} = \left( 1- \frac{\partial^2 \Gamma^\tr_{\phi^k}}{\partial \varphi^2} \right)^{-1} -1. \end{align}
By Lemma~\ref{lmm:PDdiag}, $\mathcal P^{\phi^k}_D \varphi^n \lambda_k^m = 2 \omega^k_D(L,n) \, \varphi^n \lambda_k^m$, where $m=\frac{2(L-1)+n}{k-2}$ and 
with $\omega_D^k(L,n)$ from eq.~\eqref{eq:omegaLn}. 
We have $\omega_D^k(L,n) = \omega_D(G)$ for all $G \in \mathcal G_{L,n}^k$, by Lemma~\ref{lmm:kregEdgeVert}.
By combining this with eqs.~\eqref{eq:ZLn}, \eqref{eq:BLn}, and \eqref{eq:PDEBZ}, we get a relation between the coefficients $B^k_D(L,n)$ and $Z^k_D(L,n)$,
\begin{align} \label{eq:Zrec} Z^k_D(L,n) &= \frac{ B^k_D(L-1,n+2)}{2 \omega^k_D(L,n)} \text{ for all } n\geq 0 \text{ and } L \geq 1, \end{align}
which allows to compute $Z^k_D(L,n)$ for all $n$ if 
$B^k_D(L-1,n')$ is known for all values of $n'$.
The boundary condition for the tropical loop equation 
fixes the $L=0$ case, i.e.~the values $Z^k_D(0,n)$ for all $n \geq 0$.
In our $\phi^k$ case, we find
$Z^k_D(0,k) = 1$ and $Z^k_D(0,m) = 0$ for all $m \neq k$.
The identity 
$$
\frac12
\left(
1-
\frac{\partial^2 \Gamma^\tr_{\phi^k}}{\partial \varphi^2}
\right)^{-1}
-
\frac12 
=
\frac12 
\frac{\partial^2 \Gamma^\tr_{\phi^k}}{\partial \varphi^2}
+
\frac12 
\frac{\partial^2 \Gamma^\tr_{\phi^k}}{\partial \varphi^2}
\left(
\left(
1-
\frac{\partial^2 \Gamma^\tr_{\phi^k}}{\partial \varphi^2}
\right)^{-1}
-\frac12 \right)
$$
translates into the following recursion equation that computes $B^k_D(L,n)$ for all $L \geq 0$ and $n \geq 2$,
provided that $Z^k_D(L',n')$ is known for all $L'\leq L$ and $n' \leq n$:
\begin{align} \label{eq:Brec} B^k_D(L,n) = Z^k_D(L,n) + \sum_{n' = 0}^{n-2} \sum_{L' =0}^{L} \binom{n-2}{n'} Z^k_D(L',n'+2) B^k_D(L-L',n-n'). \end{align}
The recursion terminates, because $Z^k_D(0,n)=B^k_D(0,n)=0$ for all $n < k$.

Combining eqs.~\eqref{eq:Zrec} and \eqref{eq:Brec} gives a loop-by-loop recursion and thereby an efficient algorithm to compute $Z^k_D(L,n)$.
Counting the number of steps needed to compute $Z^k_D(L,n)$ shows that the required computing time scales as $\bigO(L^2(L+n)^2)$ while the required memory scales as $\bigO(L(L+n))$.
\begin{theorem}
For fixed $D$, the value of $Z^k_D(L,n)$ can be computed in 
polynomial time and memory in $L$ and $n$ using the  recursion equations \eqref{eq:Zrec} and \eqref{eq:Brec}.
\end{theorem}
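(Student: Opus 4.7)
The plan is to interleave applications of the recursions~\eqref{eq:Zrec} and~\eqref{eq:Brec} loop by loop, tabulate all intermediate values in a two-dimensional array, and then count the work.

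First, I would identify the set of index pairs that must be stored in order to reach $Z^k_D(L,n)$. Since~\eqref{eq:Zrec} raises the leg count by $2$ while lowering the loop count by $1$, and~\eqref{eq:Brec} invokes only indices of loop count $\leq L'$ and leg count $\leq n'$, unrolling the dependencies shows that it suffices to tabulate both $Z^k_D(L',n')$ and $B^k_D(L',n')$ over the cone
$$
\mathcal T_{L,n} = \{ (L',n') \in \Z_{\geq 0}^2 : 0 \leq L' \leq L,\ 0 \leq n' \leq n + 2(L-L') \}.
$$
The cardinality of this set is $\sum_{L'=0}^{L}(n + 2(L-L') + 1) = \bigO(L(L+n))$, which gives the claimed memory bound once $D$ is fixed and each entry is stored as a single real number.

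Next, I would fill $\mathcal T_{L,n}$ by induction on $L'$. At $L' = 0$ the boundary data $Z^k_D(0,k) = 1$ and $Z^k_D(0,m) = 0$ for $m \neq k$ seeds the $Z^k_D$ row, after which~\eqref{eq:Brec} seeds the $B^k_D$ row; the only potentially self-referential term, namely $(L'',n'')=(0,0)$, carries the vanishing coefficient $Z^k_D(0,2)=0$ and therefore causes no circularity. For $L' \geq 1$ I would first apply~\eqref{eq:Zrec} to produce $Z^k_D(L',n')$ from the already-computed values $B^k_D(L'-1,n'+2)$, using that $\omega^k_D(L',n') \neq 0$ in the regime of~\eqref{eq:omegaLn}, and then apply~\eqref{eq:Brec} to produce $B^k_D(L',n')$, whose right-hand side only involves entries already stored in $\mathcal T_{L,n}$ (either at strictly smaller loop count, at strictly smaller leg count, or the $Z^k_D(L',n')$ just computed in the first half of this step).

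Finally, I would bound the work per cell. Each call to~\eqref{eq:Zrec} uses $\bigO(1)$ arithmetic operations, while each call to~\eqref{eq:Brec} uses at most $\bigO(L(L+n))$ operations because of its double sum. Multiplying by the $\bigO(L(L+n))$ cells of $\mathcal T_{L,n}$ yields total time $\bigO(L^2(L+n)^2)$, matching the estimate stated just before the theorem.

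The main subtlety I anticipate is the ``backward growth'' in the leg index when the loop index decreases by one unit, which threatens to enlarge the set of required indices without bound. Verifying that the cone $\mathcal T_{L,n}$ is closed under both recursions---so that no value outside it is ever invoked---requires a careful simultaneous trace of the dependencies in~\eqref{eq:Zrec} and~\eqref{eq:Brec}, and this is the step that turns the naive recursion into a polynomial-time algorithm.
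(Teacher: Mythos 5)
Your proposal is correct and follows the same loop-by-loop dynamic-programming strategy that the paper sketches in the sentence preceding the theorem (the paper gives no formal proof, only the assertion that counting steps yields $\bigO(L^2(L+n)^2)$ time and $\bigO(L(L+n))$ memory). Your explicit dependency cone $\mathcal T_{L,n}$, the verification that it is closed under both recursions, and the observation that the self-referential term in~\eqref{eq:Brec} is killed by $Z^k_D(0,2)=0$ supply exactly the details the paper leaves implicit, and your operation count matches the stated bounds.
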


\begin{proof}[Proof of Proposition~\ref{prop:finiteZ}]
Eqs.~\eqref{eq:Zrec} and \eqref{eq:Brec}
together with the stated inequality \eqref{eq:omegaLn} imply that $Z^k_D(L,n)$ and $B^k_D(L,n)$ are finite and non-negative 
for all $L \geq 0$ and $n\geq 2$.
\end{proof}

\subsection{Polynomial-time sampling of metric graphs}
\label{sec:poly}

\begin{figure}

\def\rad{.3}
\def\rud{1.1}
  \begin{subfigure}[b]{\textwidth}
\begin{align*} \begin{tikzpicture}[every path/.style={thick},baseline={(0,-.1)}] \coordinate (v1) at (0,0); \coordinate (v11) at (-\rud,0); \coordinate (v12) at (\rud,0); \coordinate (v1i1) at ({-\rad-\rud},0); \coordinate (v1i2) at ({\rad+\rud},0); \filldraw[pattern color=col3,pattern=north east lines] (v1) circle(\rud); \node[fill=white,circle] at (v1) { {\parbox{.9cm}{\centering $L$ \\  $n$}} }; \end{tikzpicture} \quad \longleftarrow \begin{tikzpicture}[every path/.style={thick},baseline={(0,-.1)}] \coordinate (v1) at (0,0); \coordinate (v11) at (-\rud,0); \coordinate (v12) at (\rud,0); \coordinate (v1i1) at ({-\rad-\rud},0); \coordinate (v1i2) at ({\rad+\rud},0); \filldraw[pattern color=col1,pattern=dots] (v1) circle(\rud); \node[fill=white,circle] at (v1) { {\parbox{.9cm}{\centering $L-1$ \\  $n+2$}} }; \draw[col4,double] (v11) -- (v1i1); \draw[col4,double] (v12) -- (v1i2); \fill (v11) circle(2pt); \fill (v12) circle(2pt); \def\handleLen{.5} \coordinate (M) at (0,{\rad+\rud}); \coordinate (C1a) at ($ (v1i1) + (-\handleLen,0) $); \coordinate (C1b) at ($ (M) + (-\rud,0) $); \coordinate (C2a) at ($ (M) + (\rud,0) $); \coordinate (C2b) at ($ (v1i2) + (\handleLen,0) $); \draw[col4,double] (v1i1) .. controls (C1a) and (C1b) .. (M) .. controls (C2a) and (C2b) .. (v1i2); \end{tikzpicture} \longleftarrow \quad \begin{tikzpicture}[every path/.style={thick},baseline={(0,-.1)}] \coordinate (v1) at (0,0); \coordinate (v11) at (-\rud,0); \coordinate (v12) at (\rud,0); \coordinate (v1i1) at ({-\rad-\rud},0); \coordinate (v1i2) at ({\rad+\rud},0); \filldraw[pattern color=col1,pattern=dots] (v1) circle(\rud); \node[fill=white,circle] at (v1) { {\parbox{.9cm}{\centering $L-1$ \\  $n+2$}} }; \draw[double] (v11) -- (v1i1); \draw[double] (v12) -- (v1i2); \fill (v11) circle(2pt); \fill (v12) circle(2pt); \end{tikzpicture} \end{align*}
    \caption{Illustration of the $L> 0$ case of Algorithm~\ref{algo:1PI}.}
    \label{fig:algo1PI}
  \end{subfigure}

  \begin{subfigure}[b]{\textwidth}
\begin{align*} \def\shift{2.8} \begin{tikzpicture}[every path/.style={thick},baseline={(0,-.1)}] \coordinate (v1) at (0,0); \coordinate (v11) at (-\rud,0); \coordinate (v12) at (\rud,0); \coordinate (v1i1) at ({-\rad-\rud},0); \coordinate (v1i2) at ({\rad+\rud},0); \filldraw[pattern color=col1,pattern=dots] (v1) circle(\rud); \node[fill=white,circle] at (v1) { {\parbox{.9cm}{\centering $L$ \\  $n$}} }; \draw[double] (v11) -- (v1i1); \draw[double] (v12) -- (v1i2); \fill (v11) circle(2pt); \fill (v12) circle(2pt); \end{tikzpicture} \quad \longleftarrow \quad \begin{tikzpicture}[every path/.style={thick},baseline={(0,-.1)}] \coordinate (v1) at (0,0); \coordinate (v2) at ({\shift},0); \coordinate (v11) at (-\rud,0); \coordinate (v12) at (\rud,0); \coordinate (v21) at ({\shift-\rud},0); \coordinate (v22) at ({\shift+\rud},0); \coordinate (v1i1) at ({-\rad-\rud},0); \coordinate (v2i2) at ({\shift+\rud+\rad},0); \filldraw[pattern color=col3,pattern=north east lines] (v1) circle(\rud); \filldraw[pattern color=col1,pattern=dots] (v2) circle(\rud); \node[fill=white,circle] at (v1) { {\parbox{.9cm}{\centering $L'$ \\  $n'+2$}} }; \node[fill=white,circle] at (v2) { {\parbox{.9cm}{\centering $L-L'$ \\  $n-n'$}} }; \draw[double] (v11) -- (v1i1); \draw[double] (v12) -- (v21); \draw[double] (v22) -- (v2i2); \fill (v11) circle(2pt); \fill (v12) circle(2pt); \fill (v21) circle(2pt); \fill (v22) circle(2pt); \end{tikzpicture} \end{align*}
    \caption{Illustration of the $(L',n')$ case in Algorithm~\ref{algo:beaded}.}
    \label{fig:algobeaded}
  \end{subfigure}

\caption{
Illustrations of the algorithms.
The orange diagonal hatchings indicate 1PI graphs.
The blue dotted pattern stands for  beaded graphs. }
\label{fig:algo-illustration}
\end{figure}
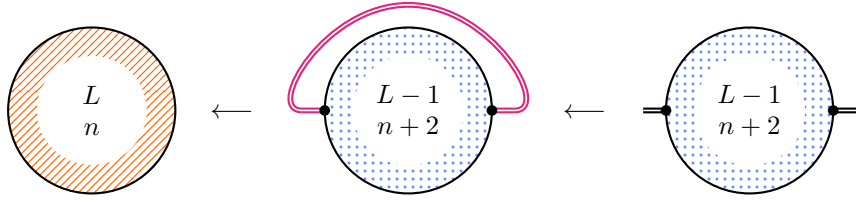
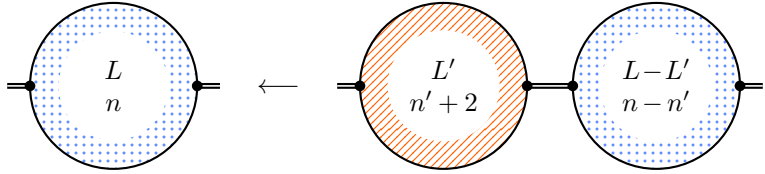

\begin{figure}

\end{figure}

We are now ready to state the global sampling algorithm 
that produces pairs $(G,\bb x)$ distributed as the 
measure $\widetilde \mu^\tr_{L,n}$ defined in eq.~\eqref{eq:tildemutrdef}. 
The strategy is to follow the steps of the recursive computation of the normalization factors $Z^k_D(L,n)$ and $B^k_D(L,n)$ from the last section and lift these recursions to integrals over graph moduli spaces.
The algorithm consists of two routines 
that invoke each other recursively.
The first routine randomly generates a metric 1PI graph.
The second one randomly generates a metric beaded graph.

\begin{algorithm}[Sampling a 1PI graph]\mbox{}
\label{algo:1PI}
\vspace{5pt}

\textbf{Input:} Integers $L \geq 0$ and $n \geq 0$ such that $Z^k_D(L,n) > 0$.

\textbf{Output:} A pair $(G, \bb \xiz)$  of a 1PI graph $G \in \mathcal G^k_{L,n}$ and an element $\bb \xiz \in [0,1]^{E_G}$.
\vspace{5pt}

First, if $L =0$ and $n=k$, then return 
the pair $(V_k,\bb \xiz)$ formed by the $k$-vertex graph $V_k$ and $\bb \xiz = \emptyset$.
If $L \geq 1$, then follow the steps:
\begin{enumerate}
\item Randomly sample a pair $(H, \bb \xiz^H)$ of a beaded graph $H \in \mathcal B^k_{L-1,n+2}$ 
with $L-1$ loops, $n+2$ legs and an element $\bb \xiz^H \in [0,1]^{E_H}$ using Algorithm~\ref{algo:beaded}.
\item Form a new graph $G$ from $H$ by adding a new edge $e'$ that connects both special legs.
\item Uniformly draw a random real number $\lambda \in [0,1]$ and set $\kappa = \lambda^{1/\omega_D(L,n)}$ with $\omega_D(L,n)$ given as in eq.~\eqref{eq:omegaLn}.
\item Form an element $\bb \xiz \in [0,1]^{E_G}$ by setting $\xiz_e = \kappa \cdot \xiz_e^H$ for $e\in E_H$ and
$\xiz_{e'} = \kappa$.
\item Return $(G,\bb \xiz)$.
\end{enumerate}

\end{algorithm}

Figure~\ref{fig:algo1PI} illustrates the essence of Algorithm~\ref{algo:1PI}: gluing together the special legs of a beaded graph. The new graph's edge parameters are rescaled such that the new edge's (depicted in purple) parameter is the largest.

\begin{algorithm}[Sampling a beaded graph]\mbox{}
\label{algo:beaded}
\vspace{5pt}

\textbf{Input:} Integers $L \geq 0$ and $n \geq 2$  such that $B^k_D(L,n) > 0$.

\textbf{Output:} A pair $(G, \bb \xiz)$  of a graph $G \in \mathcal B^k_{L,n}$ and an element $\bb \xiz \in [0,1]^{E_G}$.
\vspace{5pt}

Randomly pick one of $1+(L+1)(n-1)$ outcomes with different probabilities:

\begin{itemize}
\item
Pick outcome $\star$ with probability $\frac{Z_D^k(L,n)}{B_D^k(L,n)}$.

\item
For each of the $(L+1)(n-1)$ pairs of integers $(L',n')$ where $0\leq L' \leq L$ and $0\leq n' \leq n-2$,
pick outcome $(L',n')$ with probability 
$\binom{n-2}{n'} \frac{Z_D^k(L',n'+2) B_D^k(L-L',n'-n)}{B_D^k(L,n)}$.
\end{itemize}

Depending on the picked outcome, follow different steps:
\vspace{5pt}

\textbf{In case of outcome $\star$}:
\begin{enumerate}
\item
Randomly sample a pair $(G,\bb \xiz)$ of a graph $G \in \mathcal G^k_{L,n}$ and an element $\xiz \in [0,1]^{E_G}$ using Algorithm~\ref{algo:1PI}. 
\item
Declare the first two legs of $G$ to have the special colour, promoting $G$ to a beaded graph.
\item
Return $(G, \bb \xiz)$ as output.
\end{enumerate}

\textbf{In case of outcome $(L',n')$} follow the steps:
\begin{enumerate}
\item
Randomly sample a pair $(A,\bb \xiz^A)$ of a 1PI graph $A \in \mathcal G^k_{L',n'+2}$ with $L'$ loops and $n'+2$ legs and an element $\bb \xiz^A \in [0,1]^{E_A}$ using Algorithm~\ref{algo:1PI}. 

\item
Randomly sample a pair $(B, \bb \xiz^B)$ of a beaded graph $B \in \mathcal B^k_{L - L',n-n'}$ 
with $L-L'$ loops and $n-n'$ legs 
and $\bb \xiz^B \in [0,1]^{E_B}$ using (the current) Algorithm~\ref{algo:beaded}.

\item
Randomly and uniformly sample a subset $S$ of size $n'$ of $\{3,\ldots,n\}$.

\item
Assign the numbers in $S$ to all legs of $A$ labelled $\geq 3$ in the unique, order-preserving way.

\item
Analogously, assign the numbers $\{3,\ldots,n\} \setminus S$ to the non-special legs of $B$.
\item
Form a new graph $G$ starting with the disjoint union of $A$ and $B$.
Connect the second leg of $A$ with the first special leg of $B$ via a new edge $e'$.
Declare the first leg of $A$ to be special such that $G$ becomes a beaded graph with 
legs labelled by $\{1,\ldots,n\}$.
\item
Randomly and uniformly sample a real number $\lambda \in [0,1]$.
\item
Assign the values in $\bb \xiz^A$ and $\bb \xiz^B$ to 
the corresponding edge in $G$ and
the $\lambda$ to the new edge $e'$,
$\xiz_{e'} = \lambda$. This way, form an element $\bb \xiz \in [0,1]^{E_{G}}$.
\item
Return the pair $(G, \bb \xiz)$.
\end{enumerate}

\end{algorithm}

The initial discrete sampling step leading to the different outcomes $\star$ and $(L',n')$ is well-posed by eq.~\eqref{eq:Brec} and Proposition~\ref{prop:finiteZ}, which guarantee that the total probability of picking any outcome is 1.
Figure~\ref{fig:algobeaded} illustrates the essence of Algorithm~\ref{algo:beaded}: concatenating a 1PI graph with a beaded graph by gluing one of the beaded graph's special legs to one of the 1PI graph's legs and declaring another one of the 1PI graph's legs to be the new special leg.

\begin{theorem}\mbox{}
\label{thm:sampling1PI}
We assume that the conditions of Proposition~\ref{prop:finiteZ} are fulfilled.
For given input $L, n \geq 0$ such that $Z^k_D(L,n) >0$, Algorithm~\ref{algo:1PI} outputs the pair $(G,\bb \xiz)$, where $G\in \mathcal G^k_{L,n}$ and $\bb \xiz \in [0,1]^{E_G}$ with probability density
\begin{align*}   \frac{1}{Z^k_D(L,n)} \frac{1}{|\Aut(G)|} \frac{\prod_{e\in E_G} \dd \xiz_e}{\mathcal U_G^\tr(\bb \xiz)^{D/2}}. \end{align*}
\end{theorem}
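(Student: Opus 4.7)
The plan is to prove this statement jointly with a companion claim for Algorithm~\ref{algo:beaded}---namely, that it samples a pair $(H,\bb\xiz^H)$ with density $\frac{1}{B^k_D(L,n)\,|\Aut_{\mathcal B}(H)|}\cdot\frac{\prod_{e\in E_H} d\xiz^H_e}{\mathcal U_H^\tr(\bb\xiz^H)^{D/2}}$---by mutual induction on $L$. The base case $L=0$, $n=k$ of Algorithm~\ref{algo:1PI} is immediate: the algorithm deterministically returns $V_k$, and the claimed density reduces to a unit atom since $Z^k_D(0,k)=|\Aut(V_k)|=1$ and $\bb\xiz=\emptyset$.

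For the inductive step ($L\geq 1$), I would push forward the inductive density on $(H,\bb\xiz^H,\lambda)$ along the construction $\Phi:(H,\bb\xiz^H,\lambda)\mapsto(G,\bb\xiz)$ and match it to the claimed density on $(G,\bb\xiz)$. This relies on three routine computations. First, $d\lambda=\omega\kappa^{\omega-1}d\kappa$ with $\omega=\omega^k_D(L,n)$ and $\kappa=\lambda^{1/\omega}$. Second, the triangular Jacobian of $(\kappa,\bb\xiz^H)\mapsto\bb\xiz=(\kappa\bb\xiz^H,\kappa)$ gives $d\kappa\,d\bb\xiz^H=\kappa^{-|E_H|}d\bb\xiz$. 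Third, since the output always lies in the sector $C_{e'}(E_G)$ where $\xiz_{e'}=\kappa$ is the maximum coordinate, Lemma~\ref{lmm:kruskal} together with the identification $G\setminus e'=H$ (as graphs of internal edges) and homogeneity of $\mathcal U_H^\tr$ of degree $L-1$ yield $\mathcal U_H^\tr(\bb\xiz^H)^{-D/2}=\kappa^{LD/2}\mathcal U_G^\tr(\bb\xiz)^{-D/2}$. The combined $\kappa$-exponents collapse via the identity $\omega+LD/2-|E_G|=0$.

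What remains is a combinatorial sum: for each iso class $G\in\mathcal G^k_{L,n}$, I must enumerate iso classes $H\in\mathcal B^k_{L-1,n+2}$ such that $\Phi$ produces $G$. Such $H$'s are in bijection with orbits of \emph{oriented} edges of $G$ under $\Aut(G)$, since the two distinctly labelled special legs of $H$ correspond to the two endpoints of $e'$ with a choice of orientation. A standard orbit-stabilizer count, analogous to the one at the end of the proof of Theorem~\ref{thm:pde}, then gives
\begin{equation*}
\sum_{H\mapsto G}\frac{1}{|\Aut_{\mathcal B}(H)|}\int_{A_{e'(H)}}\frac{f(G,\bb\xiz)\,d\bb\xiz}{\mathcal U_G^\tr(\bb\xiz)^{D/2}}=\frac{2}{|\Aut(G)|}\int_{[0,1]^{E_G}}\frac{f(G,\bb\xiz)\,d\bb\xiz}{\mathcal U_G^\tr(\bb\xiz)^{D/2}}
\end{equation*}
for any $\Aut(G)$-invariant test function $f$, where $A_{e'(H)}$ is the max-edge sector attached to $H$. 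Substituting the recursion $B^k_D(L-1,n+2)=2\omega\,Z^k_D(L,n)$ from~\eqref{eq:Zrec} into the pushforward, the factors of $2$ and $\omega$ cancel exactly, and the resulting density coincides with the claimed $\frac{1}{Z^k_D(L,n)\,|\Aut(G)|}\cdot\frac{d\bb\xiz}{\mathcal U_G^\tr(\bb\xiz)^{D/2}}$.

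The main obstacle is this last combinatorial step: the factor of $2$ distinguishing oriented from unoriented edges arises precisely because the two special legs of a beaded graph carry distinct labels, and it is matched exactly by the coefficient in the recursion $B=2\omega Z$. This is the same factor $\tfrac{1}{2}$ that appears in Proposition~\ref{prop:beaded}, and the argument at the end of the proof of Theorem~\ref{thm:pde} provides a direct template. All other ingredients reduce to Jacobian and $\kappa$-exponent bookkeeping.
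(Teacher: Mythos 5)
Your proposal is correct and follows essentially the same route as the paper: the paper proves this via a mutual induction (Lemma~\ref{lmm:samp1PI} together with Theorem~\ref{thm:samplingbeaded}), with the same change of variables $\kappa=\lambda^{1/\omega}$, the same use of Lemma~\ref{lmm:kruskal} and the sector decomposition of the cube, and the same orbit--stabilizer passage from beaded graphs to edge-pointed 1PI graphs, where your ``oriented edge'' factor of $2$ is exactly the paper's two-fold symmetry cancelling the $\tfrac12$ from eq.~\eqref{eq:Zrec}. The only difference is presentational (pushforward of measures versus expectation values of test functions).
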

\begin{theorem}
\label{thm:samplingbeaded}
We assume that the conditions of Proposition~\ref{prop:finiteZ} are fulfilled.
For given input $L \geq 0$, $n \geq 2$ such that $B^k_D(L,n) >0$, Algorithm~\ref{algo:beaded} outputs the pair $(G,\bb \xiz)$ where $G\in \mathcal B^k_{L,n}$ and $\bb \xiz \in [0,1]^{E_G}$ with probability density
\begin{align*}   \frac{1}{B^k_D(L,n)} \frac{1}{|\Aut(G)|} \frac{\prod_{e\in E_G} \dd \xiz_e}{\mathcal U_G^\tr(\bb \xiz)^{D/2}}. \end{align*}
\end{theorem}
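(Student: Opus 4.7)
The plan is to prove Theorem~\ref{thm:samplingbeaded} jointly with Theorem~\ref{thm:sampling1PI} by induction on $(L,n)$, with outer induction on $L$ and inner induction on $n$; within each level $L$, Theorem~\ref{thm:sampling1PI} is proven first (its recursion goes from $(L,n)$ to beaded samples at $(L-1,n+2)$, i.e.\ the previous level) and Theorem~\ref{thm:samplingbeaded} follows. All remaining recursive calls in the algorithms land either at strictly lower $L$, or at the same $L$ with strictly smaller $n$, or to the already-established 1PI routine at $(L,n)$. The base case $L=0$, $n=k$ of Algorithm~\ref{algo:1PI} is immediate since $V_k$ has no edges, $Z^k_D(0,k)=1$, $|\Aut(V_k)|=1$, and $\mathcal U_{V_k}^\tr$ is the empty product.

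For the inductive step of Theorem~\ref{thm:samplingbeaded}, I would split by the chosen outcome of the initial discrete step. Distinct outcomes produce disjoint subsets of $\mathcal{MG}^k_{L,n}$: the $\star$-outcome covers the 1PI beaded graphs, while each outcome $(L',n')$ covers those beaded graphs whose canonical first 1PI block (the one containing special leg~$1$) has $L'$ loops and $n'+2$ external attachments. For the $\star$-outcome, the induction hypothesis on Algorithm~\ref{algo:1PI} at $(L,n)$ gives the density $\frac{1}{Z^k_D(L,n)|\Aut(G)|}\frac{\prod_{e\in E_G}\dd\xiz_e}{\mathcal U_G^\tr(\bb\xiz)^{D/2}}$; multiplying by $P(\star)=Z^k_D(L,n)/B^k_D(L,n)$ and using $\Aut_{\mathcal B}(G)=\Aut(G)$ for a 1PI graph (both fix every numbered leg) gives exactly the claimed density on this subset.

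For outcome $(L',n')$ I would verify three facts and combine them. First, the gluing map from input data $(A,\bb\xiz^A,B,\bb\xiz^B,S,\lambda)$ to output $(G,\bb\xiz)$ is a bijection onto the corresponding subset of the moduli space, because the first-bridge block decomposition of a labelled beaded graph is intrinsic and the order-preserving assignment from $S$ recovers the labellings on $A$ and $B$ canonically; the Jacobian of the coordinate change $(\bb\xiz^A,\bb\xiz^B,\lambda)\to\bb\xiz$ is $1$. Second, since $e'$ is a bridge, it lies in every spanning tree of $G$ and factors out of the spanning-tree sum without its Schwinger parameter ever appearing, giving $\mathcal U_G^\tr(\bb\xiz)=\mathcal U_A^\tr(\bb\xiz^A)\cdot\mathcal U_B^\tr(\bb\xiz^B)$. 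Third, $|\Aut_{\mathcal B}(G)|=|\Aut(A)|\cdot|\Aut_{\mathcal B}(B)|$, since any beaded automorphism of $G$ preserves the canonical block decomposition and all numbered legs and so splits as an independent pair of automorphisms of $A$ and $B$ with the correct leg constraints. Multiplying the induction-hypothesis densities for $A$ and $B$ by $P(S)=1/\binom{n-2}{n'}$, by $\dd\lambda=\dd\xiz_{e'}$, and by the outcome probability $P((L',n'))=\binom{n-2}{n'}Z^k_D(L',n'+2)B^k_D(L-L',n-n')/B^k_D(L,n)$, the factors of $Z^k_D(L',n'+2)$, $B^k_D(L-L',n-n')$ and $\binom{n-2}{n'}$ cancel cleanly and leave exactly $\frac{1}{B^k_D(L,n)|\Aut_{\mathcal B}(G)|}\frac{\prod_{e\in E_G}\dd\xiz_e}{\mathcal U_G^\tr(\bb\xiz)^{D/2}}$.

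The main obstacle will be the combinatorial identities in the first and third facts above: matching $\Aut_{\mathcal B}(G)$ with $\Aut(A)\times\Aut_{\mathcal B}(B)$, and showing that each output isomorphism class has a unique preimage under the gluing map once the outcome $(L',n')$ is fixed. These are labelled-leg versions of the natural bijection between edge-pointed 1PI graphs and beaded graphs used in the proof of Theorem~\ref{thm:pde}; here the individually numbered legs and the canonicity of the first-bridge decomposition ensure that no additional symmetry factors or multiplicities intervene. Apart from this bookkeeping, the argument is a routine reflection of the recursions~\eqref{eq:Zrec} and~\eqref{eq:Brec} at the algorithmic level, and Theorem~\ref{thm:sampling1PI}'s inductive step is handled in parallel by the scaling $\xiz_e=\kappa\xiz_e^H$, $\kappa=\lambda^{1/\omega^k_D(L,n)}$, where Lemma~\ref{lmm:kruskal} produces $\mathcal U_G^\tr(\bb\xiz)=\kappa^L\mathcal U_H^\tr(\bb\xiz^H)$ and all $\kappa$-exponents cancel to zero.
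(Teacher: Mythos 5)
Your proposal is correct and follows essentially the same route as the paper's proof: the same lexicographic induction on $(L,n)$ with the 1PI routine supplied at each level via the beaded samples one loop order down, the same case split by outcome, and the same three key facts for the $(L',n')$ branch (the bijection between triples $(A,B,S)$ and beaded graphs with prescribed first block, the factorization $\mathcal U^\tr_{G}=\mathcal U^\tr_A\cdot\mathcal U^\tr_B$ across the bridge, and $|\Aut_{\mathcal B}(G)|=|\Aut(A)|\cdot|\Aut_{\mathcal B}(B)|$ from the fixed leg labels), with the normalization constants cancelling against the outcome probabilities exactly as in the paper. The only differences are presentational — you argue directly with densities on disjoint subsets where the paper computes expectation values of test functions, and your closing remark on Theorem~\ref{thm:sampling1PI} compresses the factor of $2\omega^k_D(L,n)$ bookkeeping (the edge-pointed/beaded correspondence and the orbit--stabilizer step) that the paper spells out in Lemma~\ref{lmm:samp1PI}.
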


While studying the proofs of these theorems below, it may be instructive to occasionally return to Section~\ref{sec:hepp} to review the necessary integral manipulations that are used heavily here.

\begin{lemma}
\label{lmm:samp1PI}
For input pairs $(L,n)$ where $L=0$, Theorem~\ref{thm:sampling1PI} holds.
If Theorem~\ref{thm:samplingbeaded} holds for the input pair $(L-1,n+2)$  where $L \geq 1$, then 
 Theorem~\ref{thm:sampling1PI} holds for $(L,n)$.
\end{lemma}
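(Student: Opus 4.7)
The plan is to handle the two cases of the lemma separately. For the base case $L=0$, the hypothesis $Z^k_D(0,n)>0$ combined with the boundary data from Section~\ref{sec:tropqft} forces $n=k$, so the only $k$-regular loopless 1PI graph with $n$ labelled legs is the single vertex $V_k$, which has no edges, trivial automorphism group (its $k$ legs are distinctly labelled), and satisfies $\mathcal H_D(V_k)=1=Z^k_D(0,k)$. The algorithm deterministically returns $(V_k,\emptyset)$, and the claimed density degenerates to $1$ (empty product in the numerator, $\mathcal U_{V_k}^\tr\equiv 1$).

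For the inductive step $L\geq 1$, I will push the input joint density of $(H,\bb\xiz^H,\lambda)$ forward under the map specified by steps~2--5 of Algorithm~\ref{algo:1PI} and verify it agrees with the target density. The input density is $\frac{1}{B^k_D(L-1,n+2)\,|\Aut_{\mathcal B}(H)|}\frac{d\bb\xiz^H\,d\lambda}{\mathcal U_H^\tr(\bb\xiz^H)^{D/2}}$ by the assumed Theorem~\ref{thm:samplingbeaded} for $(L-1,n+2)$ together with the independent uniform measure on $\lambda\in[0,1]$. Two local calculations transport this to a density in $(G,\bb\xiz)$. First, a block-triangular determinant calculation for the map $(\bb\xiz^H,\lambda)\mapsto\bb\xiz$ with $\xiz_e=\kappa\xiz_e^H$ for $e\in E_H$, $\xiz_{e'}=\kappa$, $\kappa=\lambda^{1/\omega}$ (writing $\omega:=\omega^k_D(L,n)$) yields Jacobian $|J|=\kappa^{|E_H|+1-\omega}/\omega=\xiz_{e'}^{LD/2}/\omega$ after using $|E_G|-\omega=LD/2$. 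Second, Lemma~\ref{lmm:kruskal} on the chamber $C_{e'}(E_G)$, which is precisely the image of this map, combined with the homogeneity of $\mathcal U_H^\tr$ in degree $L(H)=L-1$, gives $\mathcal U_H^\tr(\bb\xiz^H)^{D/2}=\xiz_{e'}^{-LD/2}\,\mathcal U_G^\tr(\bb\xiz)^{D/2}$. Multiplying these, all powers of $\xiz_{e'}$ cancel and the local density on $C_{e'}(E_G)$ becomes $\frac{\omega}{B^k_D(L-1,n+2)\,|\Aut_{\mathcal B}(H)|}\frac{d\bb\xiz}{\mathcal U_G^\tr(\bb\xiz)^{D/2}}$.

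The final step, which I expect to be the main obstacle, is aggregating over pre-images. A fixed labelled 1PI graph $G$ can arise from several isomorphism classes of beaded graphs $H$ via step~2; by the bijection used in the proof of Theorem~\ref{thm:pde} these classes correspond to orbits of $\Aut(G)$ on the oriented edge set $\vec E_G$, with $|\Aut_{\mathcal B}(H)|$ identified with the corresponding stabilizer. A case distinction depending on whether some automorphism of $G$ reverses $e'$ shows that in both cases $\sum_{[\vec e']\mapsto[e']}|\Aut(G,\vec e')|^{-1}=2/|\Aut(G,e')|$. Combining this with orbit--stabilizer and the cube decomposition $[0,1]^{E_G}=\bigsqcup_{e'}C_{e'}(E_G)$ from the proof of Corollary~\ref{cor:HDcube}, the aggregated density consolidates to $\frac{2\omega}{B^k_D(L-1,n+2)\,|\Aut(G)|}\frac{d\bb\xiz}{\mathcal U_G^\tr(\bb\xiz)^{D/2}}$, which coincides with the claim of Theorem~\ref{thm:sampling1PI} via the normalization identity $B^k_D(L-1,n+2)=2\omega^k_D(L,n)\,Z^k_D(L,n)$ of eq.~\eqref{eq:Zrec}. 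The factor of $2$ tracked in this combinatorial step is the same one appearing in the $\frac{1}{2}$ of Proposition~\ref{prop:beaded} and in the coefficient $2\omega$ of eq.~\eqref{eq:Zrec}, so its appearance is reassuring rather than unexpected.
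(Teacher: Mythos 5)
Your proposal is correct and follows essentially the same route as the paper's proof: the paper phrases the argument via expectation values of a test function rather than a density pushforward, but the ingredients are identical --- the change of variables $\lambda\mapsto\kappa$ with Jacobian $\kappa^{\omega-1}/\omega$, Lemma~\ref{lmm:kruskal} plus homogeneity of $\mathcal U^\tr_H$ on the chamber $C_{e'}(E_G)$, the beaded-graph/edge-pointed-graph correspondence with its factor of $2$ (your oriented-edge case distinction is exactly the paper's ``additional $2$-fold symmetry''), orbit--stabilizer, the chamber decomposition of the cube, and eq.~\eqref{eq:Zrec}. No gaps.
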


\begin{proof}
For $L=0$, Theorem~\ref{thm:sampling1PI} holds trivially,
because Algorithm~\ref{algo:1PI} only outputs the vertex graph without any edge parameters.

Let $f$ be a test function on the set of pairs $(G,\bb \xiz)$ where $G\in \mathcal G^k_{L,n}$ and $\bb \xiz \in [0,1]^{E_G}$. We will compute the expectation value of $f$ evaluated on a random pair $(G,\bb \xiz)$ that is produced by Algorithm~\ref{algo:1PI}.
For input pairs $(L,n)$ with $L \geq 1$, Algorithm~\ref{algo:1PI} invokes Algorithm~\ref{algo:beaded} with the input pair $(L-1,n+2)$.
By assumption of the validity of Algorithm~\ref{algo:beaded} for this input and by translating the listed steps of Algorithm~\ref{algo:1PI} into an integral formula, we get the expectation value 
\begin{align*} \mathbb E[f] &= \frac{1}{B^k_D(L,n)} \sum_{H\in \mathcal{B}^k_{L-1,n+2}} \frac{1}{|\Aut(H)|} \int_{0}^1 \dd \lambda \int_{[0,1]^{E_H} } f(G(H), \bb \xiz(\lambda^{1/\omega_D^k(L,n)}, \bb \xiz^H)) \cdot \frac{\prod_{e\in E_H} \dd \xiz_e^H }{\mathcal U_H^\tr(\bb \xiz^H)^{D/2}}, \end{align*}
where $G(H)$ is the 1PI graph obtained from $H$ as described in Step~2 of Algorithm~\ref{algo:1PI},
and $\bb \xiz(\kappa, \bb \xiz^H)$ is the element of $[0,1]^{E_{G(H)}}$ obtained from $\bb \xiz^H$ and $\kappa = \lambda^{1/\omega^k_D(L,n)}$ as described in Step~4.
From eq.~\eqref{eq:Zrec}, and the change of variables 
to $\kappa$, we get 
\begin{align*} \mathbb E[f] &= \frac12 \frac{1}{Z^k_D(L,n)} \sum_{H\in \mathcal{B}^k_{L-1,n+2}} \frac{1}{|\Aut(H)|} \int_{0}^1 \frac{\dd \kappa}{\kappa} \kappa^{\omega_D^k(L,n)} \int_{[0,1]^{E_H} } f(G(H), \bb \xiz(\kappa, \bb \xiz^H)) \cdot \frac{\prod_{e\in E_H} \dd \xiz_e^H }{\mathcal U_H^\tr(\bb \xiz^H)^{D/2}}. \end{align*}
Let $\mathcal {EG}_{L,n}^k$ be the set of pairs $(G,e)$ of a graph $G \in \mathcal G^k_{L,n}$ and an edge $e \in E_G$ up to isomorphism, and $\Aut(G,e) \subset \Aut(G)$ the subgroup of automorphisms of $G$ that fix $e$. As in the proof of Theorem~\ref{thm:pde}, we may pass from summing over all
beaded graphs $\mathcal{B}^k_{L-1,n+2}$ to edge-pointed 
graphs in $\mathcal{EG}^k_{L,n}$.
We get an extra factor of $2$ by accounting for the additional $2$-fold symmetry of edge-pointed graphs, where we can change the order of the two legs that were glued together:
\begin{align*} \mathbb E[f] &= \frac{1}{Z^k_D(L,n)} \sum_{(G,e')\in \mathcal{EG}^k_{L,n}} \frac{1}{|\Aut(G,e')|} \int_{0}^1 \frac{\dd \kappa}{\kappa} \kappa^{\omega_D^k(L,n)} \int_{[0,1]^{E_H} } f(G, \bb \xiz^{e'}(\kappa, \bb \xiz^H)) \cdot \frac{\prod_{e\in E_H} \dd \xiz_e^H }{\mathcal U_H^\tr(\bb \xiz^H)^{D/2}}, \end{align*}
where we made the dependence of $\bb \xiz = \bb \xiz^{e'}(\kappa, \bb \xiz^H)$ on the edge that is added to $H$ to form $G$ explicit, and we keep in mind that $H$ is equal to $G\setminus e'$.
Using the orbit stabilizer theorem allows us to pass from edge-pointed graphs to 1PI by summing over all edges of $G$:
\begin{align*} \mathbb E[f] &= \frac{1}{Z^k_D(L,n)} \sum_{G\in \mathcal{G}^k_{L,n}} \frac{1}{|\Aut(G)|} \sum_{e'\in E_G} \int_{0}^1 \frac{\dd \kappa}{\kappa} \kappa^{\omega_D^k(L,n)} \int_{[0,1]^{E_H} } f(G, \bb \xiz^{e'}(\kappa, \bb \xiz^H)) \cdot \frac{\prod_{e\in E_H} \dd \xiz_e^H }{\mathcal U_H^\tr(\bb \xiz^H)^{D/2}}. \end{align*}
The function $\bb \xiz^{e'}: (0,1) \times [0,1]^{E_H} \rightarrow [0,1]^{E_G}, (\kappa, \bb \xiz^H) \mapsto \bb \xiz^{e'}$ is given explicitly 
by $\xiz_e^{e'} = \kappa \xiz^H_e$ for all $e\in E_H$ and $\xiz_{e'}^{e'} = \kappa$. The image of this injective 
map is the set $C_{e'}(E_G) \subset [0,1]^{E_G}$ 
as defined before Lemma~\ref{lmm:kruskal}.
Using this lemma,
$\omega_D^k(L,n) = |E_G| - L(G)D/2$,
$\deg \mathcal U_H = L-1$, 
and passing to the $\bb \xiz^{e'}$ coordinates leads to
\begin{align*} \mathbb E[f] &= \frac{1}{Z^k_D(L,n)} \sum_{G\in \mathcal{G}^k_{L,n}} \frac{1}{|\Aut(G)|} \sum_{e'\in E_G} \int_{C_{e'}(E_G)} f(G, \bb \xiz^{e'}) \cdot \frac{\prod_{e\in E_G} \dd \xiz^{e'}_e }{\mathcal U_G^\tr(\bb \xiz^{e'})^{D/2}}\\
&= \frac{1}{Z^k_D(L,n)} \sum_{G\in \mathcal{G}^k_{L,n}} \frac{1}{|\Aut(G)|} \int_{[0,1]^{E_G}} f(G, \bb \xiz) \cdot \frac{\prod_{e\in E_G} \dd \xiz_e }{\mathcal U_G^\tr(\bb \xiz)^{D/2}}. \end{align*}
We used that the sets $C_{e'}(E_G)$ for $e' \in E_G$ partition $[0,1]^{E_G}$ up to nullsets as in Corollary~\ref{cor:HDcube}.
\end{proof}

\begin{proof}[Proof of Theorem~\ref{thm:samplingbeaded}]
We prove the statement by induction.
Fix a pair $(L,n)$ with $L \geq 0$ and $n \geq 2$. We assume that Theorem~\ref{thm:samplingbeaded} holds for all pairs $(L',n')$ with $L' \leq L$ and either $n' < n$ or $L' < L$.
We will invoke Algorithm~\ref{algo:beaded} for such a pair $(L,n)$.

Let $f$ be a test function on the set of pairs $(G,\bb \xiz)$ where $G\in \mathcal B^k_{L,n}$ and $\bb \xiz \in [0,1]^{E_G}$. Consider the probability path associated with the outcome $\star$ first. While executing the steps $\star$ of Algorithm~\ref{algo:beaded}, we need to invoke Algorithm~\ref{algo:1PI} with the input pair $(L,n)$. By Lemma~\ref{lmm:samp1PI}, we may use Theorem~\ref{thm:sampling1PI} for this input, as Theorem~\ref{thm:samplingbeaded} is assumed to be valid for all pairs $(L',n')$ where $L'<L$. The contribution to the expectation value $\mathbb E[f]$ is simply 
\begin{align*} E_\star = \frac{1}{Z^k_D(L,n)} \sum_{G \in \mathcal G^k_{L,n}} \frac{1}{|\Aut(G)|} \int_{[0,1]^{E_G}} f(G,\bb \xiz) \frac{\prod_{e\in E_G} \dd \xiz_e}{\mathcal U_G^\tr(\bb \xiz)^{D/2}}, \end{align*}
where, in the argument to $f$, we interpret the 1PI graph $G$ as a beaded graph by giving the special colour to the first two legs.

Next, we look into the case of the probability paths for the outcomes $(L',n')$ of Algorithm~\ref{algo:beaded}.
These paths will be chosen with probability 
$ \binom{n-2}{n'} {Z_D^k(L',n'+2) B_D^k(L-L',n'-n)}/{B_D^k(L,n)}$.

Under the induction hypothesis, and by Lemma~\ref{lmm:samp1PI}, we may assume the validity of Theorem~\ref{thm:sampling1PI} for all pairs $(L',n')$ with $L'\leq L$. 
The value of $Z_D^k(0,r)$ is zero if $r< k$, because there are no $k$-regular tree graphs with fewer than $k$ legs. Hence, we can assume that either $L' \geq 1$ or $n'+2 \geq k \geq 3$, as the other cases will only be invoked with probability $0$.
The probability path $(L',n')$ invokes Algorithm~\ref{algo:beaded} recursively with the input $L-L'$ and $n-n'$.
Because either $L' \geq 1$ or $n' +2 \geq 3$, we may assume the validity of this algorithm for the respective input
under the induction hypothesis.

For given $L',n'$, a 1PI graph $A \in {\mathcal G}^k_{L',n'+2}$,
a beaded graph $B \in {\mathcal B}^k_{L-L',n-n'}$
and a subset $S\subset \{3,\ldots,n\}$,
denote $G(A,B,S)$ the graph formed by Step~6 of the $(L',n')$ part of Algorithm~\ref{algo:beaded}.
For given $\bb \xiz^A \in [0,1]^{E_A}, \bb \xiz^B \in [0,1]^{E_B}$ and $\lambda \in [0,1]$,
write $\bb \xiz(\bb \xiz^A,\bb \xiz^B,\lambda)$
for the point in the cube $[0,1]^{E_G}$ 
obtained via Step~8 of this part of the algorithm.
The contribution to the expectation value $\mathbb E[f]$ is hence
\begin{gather*} E_{L',n'}= \frac{1}{Z^k_D(L',n'+2)} \sum_{A \in {\mathcal G}^k_{L',n'+2}} \frac{1}{|\Aut(A)|} \frac{1}{B^k_D(L-L',n-n')} \sum_{B \in {\mathcal B}^k_{L-L',n-n'}} \frac{1}{|\Aut(B)|} \times \\
\frac{1}{\binom{n-2}{n'}} \sum_{ \substack{ S \subset \{3,\ldots,n\}\\
|S| = n' } } \int_{[0,1]^{E_A}} \int_{[0,1]^{E_B}} \int_0^1 f \left(G(A,B,S),\bb \xiz(\bb \xiz^A,\bb \xiz^B, \lambda) \right) \frac{\prod_{e\in E_A} \dd \xiz_e^A} { \mathcal U_A^\tr(\bb \xiz^A)^{D/2} } \frac{\prod_{e\in E_B} \dd \xiz_e^B} { \mathcal U_B^\tr(\bb \xiz^B)^{D/2} } \dd \lambda\,. \end{gather*}
Because the legs of $A$ and $B$ are fixed,
$|\Aut(A)|\cdot|\Aut(B)|=|\Aut(G(A,B,S))|$.
Because all spanning trees of $G(A,B,S)$ have 
to pass through the bridge connecting the subgraphs 
$A$ and $B$, we also find that 
$\mathcal U^\tr_A(\bb \xiz^A) \mathcal U^\tr_B(\bb \xiz^B) = \mathcal U^\tr_{G(A,B,S)}(\bb \xiz). $
Hence, this contribution equals 
\begin{gather*} E_{L',n'}= \frac{1}{Z^k_D(L',n'+2)} \frac{1}{B^k_D(L-L',n-n')} \frac{1}{\binom{n-2}{n'}} \sum_{A \in {\mathcal G}^k_{L',n'+2}} \sum_{B \in {\mathcal B}^k_{L-L',n-n'}} \sum_{ \substack{ S \subset \{3,\ldots,n\}\\
|S| = n' } } \\
\times \frac{1}{|\Aut(G(A,B,S))|} \int_{[0,1]^{E_{G(A,B,S)}}} f \left(G(A,B,S),\bb \xiz \right) \frac{\prod_{e\in E_{G(A,B,S)}} \dd \xiz_e} { \mathcal U_{G(A,B,S)}^\tr(\bb \xiz)^{D/2} }\,. \end{gather*}

Let $\mathcal B^k_{L,n}(L',n') \subset \mathcal B^k_{L,n}$ be the set of beaded graphs with $L$ loops and $n$ legs,
 which have at least one bridge, and which
decompose into a 1PI graph with $L'$ loops and $n'+2$ 
legs and a smaller beaded graph if the bridge that is closest to the first leg is removed.
The construction under Step~6 of the $(L',n')$ part of Algorithm~\ref{algo:beaded} provides a bijection 
between $\mathcal B^k_{L,n}(L',n')$ and triples
$(A,B,S)$ where $A\in \mathcal G^k_{L',n'+2}, B \in \mathcal B^k_{L-L',n-n}$ and $S \subset \{3,\ldots,n\}$ with $|S|=n'-2$. Hence,
\begin{gather*} E_{L',n'}= \frac{1}{Z^k_D(L',n'+2)} \frac{1}{B^k_D(L-L',n-n')} \frac{1}{\binom{n-2}{n'}} \sum_{G \in \mathcal B^k_{L,n}(L',n')} \int_{[0,1]^{E_{G}}} \frac{ f \left(G,\bb \xiz \right) }{|\Aut(G)|} \frac{\prod_{e\in E_{G}} \dd \xiz_e} { \mathcal U_{G}^\tr(\bb \xiz)^{D/2} }\,. \end{gather*}
Summing over all outcomes, $\star$ and $(L',n')$, weighted by their respective probabilities, results in 
\begin{align*} \mathbb E[f] &= \frac{Z^k_D(L,n)}{B^k_D(L,n)} E_\star + \sum_{L'=0}^L \sum_{n'=0}^{n-2} \binom{n-2}{n'} \frac{Z_D^k(L',n'+2) B_D^k(L-L',n'-n)}{B_D^k(L,n)} E_{L',n'} \\
&= \sum_{G \in \mathcal G^k_{L,n}} \int_{[0,1]^{E_{G}}} \frac{ f \left(G,\bb \xiz \right) }{|\Aut(G)|} \frac{\prod_{e\in E_{G}} \dd \xiz_e} { \mathcal U_{G}^\tr(\bb \xiz)^{D/2} } \\
&+ \sum_{L'=0}^L \sum_{n'=0}^{n-2} \sum_{G \in \mathcal B^k_{L,n}(L',n')} \int_{[0,1]^{E_{G}}} \frac{ f \left(G,\bb \xiz \right) }{|\Aut(G)|} \frac{\prod_{e\in E_{G}} \dd \xiz_e} { \mathcal U_{G}^\tr(\bb \xiz)^{D/2} }\,. \end{align*}
Every beaded graph in $\mathcal B^k_{L,n}$
corresponds either to a 1PI graph
in $\mathcal G^k_{L,n}$ or 
a beaded graph of type $L',n'$
in $\mathcal B^k_{L,n}(L',n')$.
Therefore,
\begin{align*} \mathbb E[f] &= \frac{1}{B^k_D(L,n)} \sum_{G \in \mathcal B^k_{L,n}} \int_{[0,1]^{E_{G}}} \frac{ f \left(G,\bb \xiz \right) }{|\Aut(G)|} \frac{\prod_{e\in E_{G}} \dd \xiz_e} { \mathcal U_{G}^\tr(\bb \xiz)^{D/2} }\, , \end{align*}
which proves the claim.
\end{proof}

\begin{proof}[Proof of Theorem~\ref{thm:sampling1PI}]
Follows immediately from Lemma~\ref{lmm:samp1PI}
and Theorem~\ref{thm:samplingbeaded}.
\end{proof}

Algorithms~\ref{algo:1PI}
and~\ref{algo:beaded}
only produce samples of pairs $(G, \bb \xiz)$ with cubical edge coordinates $\bb \xiz$. We can directly map these samples to samples that are distributed as the normalized tropical measure $\widetilde \mu^\tr_{L,n}$ as given in eq.~\eqref{eq:tildemutrdef}:

\begin{theorem}
There is a canonical many-to-one map from $(0,1)^{E_G}$ 
to $\mathbb P_{>0}^{E_G}$ by reinterpreting $\bb \xiz \in (0,1)^{E_G}$ as a set of homogeneous coordinates.
Sampling a pair $(G,\bb \xiz)$ via Algorithm~\ref{algo:1PI} and mapping the resulting element $\bb \xiz$ to $\mathbb P_{>0}^{E_G}$ via this canonical map produces samples that are distributed as the normalized tropical measure $\widetilde \mu^\tr_{L,n}$ on the moduli space of $k$-regular graphs.
\end{theorem}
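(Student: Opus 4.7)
The plan is to push forward the cubical density from Theorem~\ref{thm:sampling1PI} along the map $\bb\xiz \mapsto [\bb\xiz]$ and verify it agrees with $\widetilde\mu^\tr_{L,n}$. The proof is essentially a repeat of the chain of integral manipulations used in Corollary~\ref{cor:HDcube} and in the proof of Proposition~\ref{prop:hepp}, but run in reverse: there one started from the projective integrand and partitioned the simplex to arrive at the cube; here one starts from the cube and partitions it to arrive at the simplex.

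First I would decompose $(0,1)^{E_G}$ into the regions $C_{e'}(E_G)$ from Section~\ref{sec:hepp}, where $\xiz_{e'}$ is the largest coordinate, plus a measure-zero set. On each $C_{e'}(E_G)$, I change variables via $\kappa = \xiz_{e'}$ and $x_e = \xiz_e/\kappa$ for $e \neq e'$, which is a diffeomorphism onto $(0,1] \times P_{e'}$, with $P_{e'} = \{\bb x \in \mathbb P^{E_G}_{>0} : x_{e'} \text{ is maximal}\}$ represented in the affine chart $x_{e'} = 1$. The Jacobian is $\kappa^{|E_G|-1}$, and Lemma~\ref{lmm:kruskal} together with the homogeneity of $\mathcal U_G^\tr$ in degree $L(G)$ yields $\mathcal U_G^\tr(\bb\xiz) = \kappa^{L(G)}\mathcal U_G^\tr(\bb x)$. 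The inner integral over $\kappa \in (0,1]$ then contributes
\begin{equation*}
\int_0^1 \kappa^{|E_G|-1 - L(G)D/2}\,\frac{d\kappa}{\kappa} \cdot \kappa = \int_0^1 \kappa^{\omega_D(G) - 1}\,d\kappa = \frac{1}{\omega_D(G)},
\end{equation*}
leaving the measure $\frac{1}{\omega_D(G)}\,\mathcal U_G^\tr(\bb x)^{-D/2}\,\prod_{e\ne e'} dx_e$ on $P_{e'}$.

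Next I would identify this with $\mu^\tr_G|_{P_{e'}}$. On the chart $x_{e'} = 1$ of $P_{e'}$, the form $\prod_e x_e \cdot \Omega$ reduces to $\pm \prod_{e\ne e'} dx_e$, while Lemma~\ref{lmm:Vmax} gives $\mathcal V_G^\tr(\bb x) = \max_e x_e = 1$, so $\mathcal V_G^\tr(\bb x)^{\omega_D(G)}$ drops out. The pushforward on $P_{e'}$ therefore equals $\mu^\tr_G|_{P_{e'}}$. Summing over $e' \in E_G$, using that the $P_{e'}$ partition $\mathbb P_{>0}^{E_G}$ up to nullsets, and dividing by $|\Aut(G)|$ and $Z^k_D(L,n)$, yields precisely $\widetilde\mu^\tr_{L,n}$ as given in eq.~\eqref{eq:tildemutrdef}. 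The main obstacle is only bookkeeping: tracking the cancellation between the Jacobian factors $\kappa^{|E_G|-1}$, the $\kappa^L$ from $\mathcal U_G^\tr$, and the projective degree of the numerator $\prod_e x_e$ so that the exponent $\omega_D(G)-1$ falls out cleanly; the rest is a direct application of Theorem~\ref{thm:sampling1PI}.
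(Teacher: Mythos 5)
Your proposal is correct and follows essentially the same route as the paper, whose proof consists precisely of citing Theorem~\ref{thm:sampling1PI} together with the change-of-variables argument from the proof of Proposition~\ref{prop:hepp}; you have simply written out that argument in the reverse direction (cube to projective simplex instead of simplex to cube), with the Jacobian, homogeneity, and $\mathcal V_G^\tr = 1$ bookkeeping all checking out. The only cosmetic remark is that the identity $\mathcal U_G^\tr(\kappa\bb x) = \kappa^{L(G)}\mathcal U_G^\tr(\bb x)$ needs only homogeneity, not Lemma~\ref{lmm:kruskal}.
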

\begin{proof}
Follows from Theorem~\ref{thm:sampling1PI} and the argument in Proposition~\ref{prop:hepp}'s proof.
\end{proof}

The recursions in Algorithms~\ref{algo:1PI} and~\ref{algo:beaded} always reduce the total number of edges in the next lower rank of the recursion tree. The initial discrete random sampling step of Algorithm~\ref{algo:beaded} takes constant time after a preprocessing step of runtime and memory requirements $\bigO(Ln)$ using the alias method \cite[\S 3.4.1]{knuth1997taocp2}. The sampling step of the subset $S$ in Algorithm~\ref{algo:beaded} takes at most runtime $n_{\mathrm{max}}$, where $n_{\mathrm{max}}$ is the maximal number of legs of a graph encountered in the course of the computation. This number is smaller than $2|E_G|+n$, where $|E_G|$ and $n$ are the number of edges and legs of the output graph. In total, we get that the number of computational steps grows at most as $|E_G|(2|E_G|+n)$. In our $k$-regular case, $|E_G|$ is proportional to the number of loops of $G$. Hence, we can record the following on the runtime of 
Algorithms~\ref{algo:1PI}
and~\ref{algo:beaded}:
\begin{theorem}
\label{thm:runtime}
The runtime of Algorithms~\ref{algo:1PI}
and~\ref{algo:beaded}
is at most of the order of $\bigO(L(L+n))$.
\end{theorem}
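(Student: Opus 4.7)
The plan is to separately bound (i) the total number of nodes in the recursion tree generated by a top-level call to Algorithm~\ref{algo:1PI}$(L,n)$ and (ii) the work performed at each such node, and then multiply the two estimates.

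First, for (i), I would set up a correspondence between ``edge-creating'' calls in the recursion and edges of the output graph $G$. A call to Algorithm~\ref{algo:1PI} with $L \geq 1$ contributes exactly the closing edge $e'$ added in its Step~2, and a call to Algorithm~\ref{algo:beaded} that takes an $(L',n')$ outcome contributes exactly the bridge $e'$ added in its Step~6. The remaining nodes---Algorithm~\ref{algo:beaded} calls taking the $\star$ outcome and the $L=0$ base case of Algorithm~\ref{algo:1PI}---do not add edges, but a $\star$-call delegates immediately to a single Algorithm~\ref{algo:1PI} call and each base case terminates a chain, so each such ``non-creating'' node can be charged to an adjacent edge-creating one. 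A short counting argument on the tree then yields a total of $O(|E_G|)$ recursion-tree nodes.

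Next, for (ii), a one-time alias-method preprocessing of cost $O(Ln)$, carried out alongside the recursive computation of the constants $Z^k_D$ and $B^k_D$ from Section~\ref{sec:preproc}, reduces each discrete outcome selection in Algorithm~\ref{algo:beaded} to constant time. The only per-call subroutine whose cost exceeds $O(1)$ is the uniform draw of the subset $S \subset \{3,\dots,n_{\mathrm{current}}\}$ in Step~3 of the $(L',n')$ branch, which can be implemented in $O(n_{\mathrm{current}})$ operations via Fisher--Yates; all remaining bookkeeping (graph gluing, leg relabelling, edge-parameter assignment) is dominated by the same quantity.

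The hard part is to control $n_{\mathrm{current}}$ uniformly across the recursion. The plan here is an induction down the recursion tree showing $n_{\mathrm{current}} \leq 2|E_G| + n$: every descent from Algorithm~\ref{algo:1PI} into Algorithm~\ref{algo:beaded} introduces exactly two new legs, namely the two sides of the would-be cut edge of Step~2, while each beaded split conserves legs between the two children up to the two special legs attached to the new bridge. Every leg present at an intermediate node can therefore be charged either to one of the $n$ legs of the output graph or to a unique edge of $G$ that will be created later in the subtree.

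Combining the $O(|E_G|)$ node count with the per-node work bound $O(n_{\mathrm{current}}) = O(|E_G|+n)$ gives total runtime $O(|E_G|(|E_G|+n))$. By Lemma~\ref{lmm:kregEdgeVert}, $|E_G| = ((L-1)k + n)/(k-2)$, which for fixed $k$ is linear in $L$ and $n$; substituting and absorbing $k$-dependent constants produces the claimed $\bigO(L(L+n))$ bound.
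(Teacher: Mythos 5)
Your proposal is correct and follows essentially the same route as the paper: bound the number of recursion-tree nodes by $\bigO(|E_G|)$ since each recursive step adds an edge, reduce the discrete outcome selection to constant time via an $\bigO(Ln)$ alias-method preprocessing, bound the per-node work by the current leg count $n_{\mathrm{max}} \leq 2|E_G|+n$, and multiply, using Lemma~\ref{lmm:kregEdgeVert} to convert $|E_G|$ into $\bigO(L)$. Your charging arguments for the node count and for the $2|E_G|+n$ leg bound merely make explicit what the paper asserts more briefly.
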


Running Algorithms~\ref{algo:1PI} and~\ref{algo:beaded} requires the computation of the normalization constants $Z_D^k(L,n)$ and $B_D^k(L,n)$ as described in Section~\ref{sec:preproc}. As we explained there, these can be computed in runtime $\bigO(L^2(L+n)^2)$, and they require  $\bigO(L(L+n))$ memory.  We thus have a polynomial-time and polynomial-memory algorithm for the sampling of the normalized tropical measure $\widetilde \mu^\tr_{L,n}$ on the moduli space of $k$-regular graphs.

\section{Some illustrative applications}
\label{sec:applications}
In this section, we will briefly discuss two concrete applications of the global sampling algorithm from the last section.  
A proof-of-concept \texttt{C++} implementation of the algorithm is attached with the ancillary files to the 
arXiv version of this article and available on GitHub.\footnote{\url{https://github.com/michibo/amplitrop}} All computations can be reproduced using this 
implementation.
We leave a full, in-depth interpretation of the resulting data to future~work.

\subsection{Application to massive \texorpdfstring{$\phi^3$}{phi3} theory in \texorpdfstring{$D=3$}{D=3}}
\label{sec:applicationphi3}

The algorithm introduced in the last section samples graphs proportional to their contribution to the tropicalized perturbative expansion.  
For that to be immediately applicable without further modifications, we need 
all contributing (tropicalized) Feynman integrals to be finite.  For scalar massive quantum field theories, this only holds in low dimensions $D$. For instance, we can use the algorithm to compute the $n$-point correlation functions in $\phi^3$ theory in $D=3$ for $n \geq 2$, as these $n$-point functions are finite and have no subdivergences:
the conditions for Proposition~\ref{prop:finiteZ} are fulfilled with $k=3$ and $D=3$. 
Such computations of finite quantities in low-dimensional super-renormalizable quantum field theories have interesting phenomenological applications.
See, for instance, \cite{Sberveglieri:2023mzy} for a recent exposition.

As an illustration of the algorithm's capabilities, we evaluate the 1PI $3$-point correlation function with zero momentum transfer and with the mass parameter $m^2=1$. That means, we compute the coefficients 
$ \widetilde \Gamma_{L,3}^1(0,0,0) $
from~\eqref{eq:Gamma1Ln}
in the special case $k=3$, $n=3$, $p_1=p_2=p_3 = 0$,
and $m^2/\mu^2 = 1$ in~\eqref{eq:UF},
\begin{align} \notag \widetilde \Gamma_{3}^1(0,0,0) = (4\pi)^{\frac{3}{4}} \sum_{L \geq 1} \left( \frac{\lambda_3} { (4 \pi)^{\frac{3}{4}} }\right)^{2(L-1)+3} \widetilde \Gamma_{L,3}^1(0,0,0), \end{align}
which are explicitly given by
\begin{align} \begin{aligned} \label{eq:phi3integral} \widetilde \Gamma_{L,3}^1(0,0,0) &= \sum_{\substack{G\in \mathcal G_{L,3}^3}} \frac{ \Gamma(\omega_{3}(G)) } {|\Aut (G)|} \int_{\mathbb P_{>0}^{E_G}} \frac{{\prod_{e\in E_G} x_e}} { \mathcal U_G(\bb x)^{3 /2} \cdot \mathcal V_G(\bb x)^{\omega_{3}(G)} } \Omega\,\\
&= Z^3_3(L,3) \cdot \int_{\mathcal{MG}^3_{L,3}} \Gamma(\omega_{3}(G)+1) \left( \frac{ \mathcal U_G^\tr(\bb x) }{ \mathcal U_G(\bb x) } \right)^{3/2} \left( \frac{ \mathcal V_G^\tr(\bb x) }{ \mathcal V_G(\bb x) } \right)^{\omega_{3}(G)} \mu^\tr_{L,3}\, , \end{aligned} \end{align}
where, as we are in the zero momentum case,
$\mathcal V_G(\bb x) = m^2/\mu^2 \sum_{e\in E_G} x_e = \sum_{e\in E_G} x_e$.

\begin{table}
\centering
\begin{tabular}{llll}
\toprule
$L$ & samples & $\widetilde \Gamma_{L,3}^1(0,0,0)$ & time/h \\
\midrule
1 & $1 \cdot 10^{11}$ & $4.431109 \cdot 10^{-1}$ $\pm$ $1.1 \cdot 10^{-6}$ & 2 \\
2 & $1 \cdot 10^{11}$ & $1.047191 \cdot 10^{0}$ $\pm$ $5.9 \cdot 10^{-6}$ & 3 \\
3 & $1 \cdot 10^{11}$ & $2.902190 \cdot 10^{0}$ $\pm$ $3.6 \cdot 10^{-5}$ & 3 \\
4 & $1 \cdot 10^{11}$ & $8.877142 \cdot 10^{0}$ $\pm$ $2.7 \cdot 10^{-4}$ & 4 \\
5 & $1 \cdot 10^{11}$ & $2.920635 \cdot 10^{1}$ $\pm$ $2.4 \cdot 10^{-3}$ & 6 \\
6 & $1 \cdot 10^{11}$ & $1.019640 \cdot 10^{2}$ $\pm$ $2.4 \cdot 10^{-2}$ & 6 \\
7 & $1 \cdot 10^{11}$ & $3.748502 \cdot 10^{2}$ $\pm$ $3.3 \cdot 10^{-1}$ & 7 \\
8 & $1 \cdot 10^{11}$ & $1.440633 \cdot 10^{3}$ $\pm$ $2.1 \cdot 10^{0}$ & 8 \\
9 & $1 \cdot 10^{11}$ & $5.787627 \cdot 10^{3}$ $\pm$ $2.2 \cdot 10^{1}$ & 9 \\
10 & $1 \cdot 10^{11}$ & $2.399101 \cdot 10^{4}$ $\pm$ $1.4 \cdot 10^{2}$ & 7 \\
11 & $1 \cdot 10^{11}$ & $1.074911 \cdot 10^{5}$ $\pm$ $2.6 \cdot 10^{3}$ & 12 \\
12 & $1 \cdot 10^{11}$ & $4.760706 \cdot 10^{5}$ $\pm$ $1.2 \cdot 10^{4}$ & 13 \\
13 & $1 \cdot 10^{11}$ & $2.235488 \cdot 10^{6}$ $\pm$ $1.0 \cdot 10^{5}$ & 15 \\
14 & $1 \cdot 10^{11}$ & $1.000354 \cdot 10^{7}$ $\pm$ $3.3 \cdot 10^{5}$ & 16 \\
15 & $1 \cdot 10^{11}$ & $5.464614 \cdot 10^{7}$ $\pm$ $4.0 \cdot 10^{6}$ & 16 \\
16 & $1 \cdot 10^{11}$ & $2.859931 \cdot 10^{8}$ $\pm$ $3.4 \cdot 10^{7}$ & 17 \\
17 & $1 \cdot 10^{11}$ & $1.156947 \cdot 10^{9}$ $\pm$ $3.6 \cdot 10^{7}$ & 20 \\
18 & $1 \cdot 10^{11}$ & $8.861573 \cdot 10^{9}$ $\pm$ $1.6 \cdot 10^{9}$ & 20 \\
19 & $1 \cdot 10^{11}$ & $7.159013 \cdot 10^{10}$ $\pm$ $3.6 \cdot 10^{10}$ & 23 \\
20 & $1 \cdot 10^{11}$ & $2.776484 \cdot 10^{11}$ $\pm$ $5.2 \cdot 10^{10}$ & 24 \\
\bottomrule
\end{tabular}
\caption{$3$-point function computation in massive $\phi^3$ theory in $D=3$. }
\label{tab:phi3_loop_values}

\end{table}

Up to loop order 20, Table~\ref{tab:phi3_loop_values} lists the number of samples taken, the estimate for the coefficient, and the computing time in hours on all cores of a computer.  The computation ran on various nodes of Perimeter Institute's computing cluster. All these nodes were equipped with an \texttt{AMD EPYC 7532 32-Core Processor}. 
At 20 loops, the memory requirement of the algorithm is of the order of a couple of kilobytes, which is negligible by 2025 standards. So, we did not record the memory usage.

Table~\ref{tab:phi3_loop_values} illustrates the favourable runtime scaling of the algorithm with increasing loop order. In the regime up to $L=20$, the algorithm's runtime appears to scale as $\bigO(L^{\frac32})$, faster than the estimates obtained for Theorem~\ref{thm:runtime}.

The implementation uses the Eigen \cite{eigenweb} linear algebra library to compute the Cholesky decomposition of the graph Laplacian whose determinant equals $\mathcal U_G$ (see, e.g., \cite[\S 4.2]{Borinsky:2023jdv} for an exposition of fast Feynman integrand evaluation methods). For larger values of $L$, the computation of this decomposition at each sampled point will become significant. It scales as $\bigO(L^3)$.

We verified the computation for the first couple of loop orders using \texttt{feyntrop}~\cite{Borinsky:2023jdv}.
However, we have to give a cautionary warning: as we will discuss in the next section, the reader should be careful while interpreting the data in Table~\ref{tab:phi3_loop_values} at larger loop orders $L$. Especially, the error intervals might be overly optimistic, as these are also estimated via a Monte Carlo process.

We will leave the interpretation and explicit application of such computations to future work. 

\subsection{Bounds on the estimates' relative error}
\label{sec:error}
In Table~\ref{tab:phi3_loop_values}, we observe that the error bands for the obtained estimates grow significantly with the loop order. So, even though the global sampling algorithm and all preprocessing steps run in polynomial time, we need to sample a rapidly growing number of points with the
loop order to get a fixed relative accuracy for our $n$-point function estimate. In fact, the number of points we need to sample might grow exponentially or more. So, by obtaining a polynomial-time algorithm for the sampling of points over the moduli space with a bounded fluctuation term, we have not necessarily achieved a polynomial-time algorithm to evaluate the original perturbative coefficient to fixed relative accuracy.

    This section briefly explains how to study this error term and the relative accuracy qualitatively by providing hard bounds on it. Suppose we want to estimate the expectation value of a random variable (i.e.~a function) $f$ on the moduli space $\mathcal{MG}^k_{L,n}$  via sampling from the tropical probability measure $\widetilde \mu^\tr_{L,n}$. That means, we aim to estimate the integral
\begin{align*} \mathbb E[f] = \int_{ \mathcal {MG}^k_{L,n} } f(G,\bb x)\, \widetilde \mu^\tr_{L,n}\,. \end{align*}
Provided that the variance, $\mathrm{Var}[f ] = \mathbb E[f^2] - (\mathbb E[f ])^2$ is finite, we can estimate $\mathbb E[f ]$ by averaging over many evaluations of $f$ on samples $(G_1, \bb x_1), \ldots , (G_N , \bb x_N )$, which are drawn independently using the measure $\widetilde \mu^\tr_{L,n}$. This is the typical Monte Carlo approach:
\begin{align*} \mathbb E[f] \approx \frac{1}{N} \sum_{i=1}^N f(G_i,\bb x_i) \, . \end{align*}
The statistical error of this estimate follows a Gaussian distribution with mean $0$ and variance $\mathrm{Var}[f]/N$ by the central
limit theorem.
In our setup, we will only consider functions $f$ that are bounded due to applications of Lemma~\ref{lmm:tropapprox}. So, the variance is always finite, the central limit theorem holds, and the overall Monte Carlo process is guaranteed to work. 

However, to get concrete bounds on the \emph{relative} error of the estimate, we also need a well-behaved lower bound for $f$. Suppose that we find (e.g., by using an argument similar to the one for Lemma~\ref{lmm:tropapprox}) that $f$ is bounded by constants $C_1, C_2 > 0$ such that $C_1 \leq f (G, \bb x) \leq C_2$ for all
representatives of metric graphs $(G, \bb x) \in \mathcal{MG}^k_{L,n}$. Then the expected relative error is
\begin{align} \label{eq:relerror} \delta[f] = \frac{\sqrt{\mathrm{Var}[f]/N}}{\mathbb E[f]}= \frac{1}{\sqrt{N}} \sqrt{ \frac{\mathbb E[f^2]}{\mathbb E[f]^2} -1 } \leq \frac{1}{\sqrt{N}} \frac{\sqrt{\mathbb E[f^2]}}{\mathbb E[f]} \leq \frac{1}{\sqrt{N}} \frac{C_2}{C_1} \, . \end{align}
Notice that by the Cauchy--Schwarz inequality, $\mathbb E[f^2] \geq \mathbb E[f ]^2$. The 
obtained bound on the relative error in the last step is governed by the ratio of the lower and upper bounds on $f$.

As a concrete example, we will discuss the relative error for the estimates of our $\phi^3$ theory $3$-point function in $D = 3$ above.
To get concrete error bounds, we need to analyze the integrand over the moduli space $\mathcal{MG}^3_{L,3}$ in eq.~\eqref{eq:phi3integral}. 
From the argument for Lemma~\ref{lmm:tropapprox}, we get the bounds,
$$
\frac{1}{|\mathcal T_G|}
\leq 
\frac{
\mathcal U_G^\tr(\bb x)
}{
\mathcal U_G(\bb x)
}
\leq 1\, ,
$$
where $|\mathcal T_G|$ is the number of spanning trees of $G$, and 
$$
\frac{1}{|E_G|} \leq 
\frac{
\mathcal V_G^\tr(\bb x)
}{
\mathcal V_G(\bb x)
}
=
\frac{
\max_{e\in E_G} x_e
}
{
\sum_{e\in E_G} x_e
}
\leq 1\,.
$$
In both cases, the upper bounds look promising. Unfortunately, the lower bounds can get tiny.

Recall (e.g.~from Lemma~\ref{lmm:kregEdgeVert}) that for a $L$ loop 3-legged 3-regular graph $G$, we have $|E_G| = 3L$ and $\omega_3(G) = \frac32 L$. 
Moreover, a graph has at most as many spanning trees 
as it has edge-subgraphs. So, $|\mathcal T_G| \leq 2^{|E_G|}$.
For the integration kernel
$$f(G, \bb x) = 
\Gamma(\omega_{3}(G)+1)
\left(
\frac{
\mathcal U_G^\tr(\bb x)
}{
\mathcal U_G(\bb x)
}
\right)^{3/2}
\left(
\frac{
\mathcal V_G^\tr(\bb x)
}{
\mathcal V_G(\bb x)
}
\right)^{\omega_{3}(G)}\, ,
$$
we find, hence, that in total 
$\delta[f] \leq (24L)^{\frac32 L}$ by eq.~\eqref{eq:relerror}.

Unfortunately, this bound grows more than factorially with $L$. The source of this huge growth rate is the term 
$ \left( { \mathcal V_G^\tr(\bb x) }/{ \mathcal V_G(\bb x) } \right)^{\omega_{3}(G)} $
in the moduli space integrand~\eqref{eq:phi3integral}. In particular, the exponent $\omega_3(G)$ that grows with the loop order leads to a super-exponentially growing bound.

In the practical computation listed in Table~\ref{tab:phi3_loop_values}, the error intervals do not grow as much as this theoretical bound on $\delta[f]$ indicates. In fact, the error only seems to grow exponentially with the loop order. Still, the growth of the variance obstructs the study of loop orders in massive scalar $\phi^3$ theory in $D=3$ beyond $\approx 20$ using the present approach without any modifications.

    Salvatori discusses variance reduction techniques in the context of sampling of surface integrals \cite[\S 5]{Salvatori:2025oib}. Plausibly, importing those ideas to the present context can lead to improvements of our estimates and our theoretical bounds in their errors.

    Further,  the exponent $\omega_3(G)$ only grows because we chose to study a super-renormalizable quantum field theory. In a renormalizable quantum field theory, this exponent would remain constant, and we can expect a much better theoretical bound on the relative error.

    In the next section, we will therefore sketch a way to go forward and also perform computations in quantum field theories in the critical dimension using a slight modification of the algorithm from Section~\ref{sec:algo}.

\subsection{Renormalization and the positive Hepp bound}
\label{sec:renormalization}

In this section, we introduce an extension of the algorithm from Section~\ref{sec:algo} that allows us to evaluate certain perturbative objects in higher-dimensional quantum field theories where divergences appear,
and the conditions for Proposition~\ref{prop:finiteZ} are not fulfilled anymore. The standard tool to manage divergences in quantum field theory is renormalization. 
Because we only consider a massive scalar field here, all relevant divergences are UV divergences.

\begin{definition}
\label{def:posHepp}
The positive Hepp bound $\mathcal H_D^{\mathrm{pos}}(G)$ is defined for 
each graph $G$ by setting 
$\mathcal H_D^{\mathrm{pos}}(G) = 1$ if $G$ is a graph without edges,
by
$\mathcal H_D^{\mathrm{pos}}(H) = \prod_{k} \mathcal H_D^{\mathrm{pos}}(G_k) $
for a non-1PI graph $H$ with 1PI components $G_1,\ldots,G_n$ where $n \geq 2$
and otherwise if $G$ is a 1PI graph, 
\begin{align*}  \HE_D^{\mathrm{pos}}(G) = \begin{cases} \frac{1}{\omega_D(G)} \sum_{e\in E_G} \HE_D^{\mathrm{pos}}(G\setminus e) & \text{ if } \omega_D(G) > 0 \\
0 & \text{ else} \, , \end{cases} \end{align*}
which recursively defines the positive Hepp bound for all graphs.
\end{definition}
The positive Hepp bound is not a rational function in $D$, and it can have quite complicated singularities in $D$.
We remark that the positive Hepp bound  is a renormalized version of the original Hepp bound from Definition~\ref{def:Hepp}. This can be proven using the Connes-Kreimer interpretation of renormalization \cite{Connes:1998qv,Connes:1999yr} (see also \cite{Panzer:2012gp}). 

For fixed $D$, we define the generating function of the positive Hepp bound as a renormalized version of the tropical effective action~\eqref{eq:gammatr}:
\begin{align*} \begin{gathered}  \Gamma^\tr_{\mathrm{pos}} = \sum_{G \in \mathcal G} \frac{\varphi^{n(G)} \prod_{v\in V_G} \lambda_{|v|}} {|\Aut' (G)|}\, \HE^{\mathrm{pos}}_D(G) \, . \end{gathered} \end{align*}

The argument in Section~\ref{sec:tropqft} can be applied transparently to give a fixed-point equation for the positive Hepp bound. The result is
\begin{align*} \Gamma^\tr_{\mathrm{pos}} = \sum_{k \geq 3} \varphi^k \frac{\lambda_k}{k!} + \mathcal P_D^{-1} \circ (\id - K_D) \circ \left( \left(1- \frac{\partial^2 \Gamma^\tr_{\mathrm{pos}}}{ \partial \varphi^2} \right)^{-1} - 1 \right), \end{align*}
where $\id$ is the identity operator, and 
$K_D$ is the operator that acts on basis monomials that correspond to 1PI graphs by mapping 
$K_D ( \varphi^{n(G)} \prod_{v\in V_G} \lambda_{|v|} ) = \varphi^{n(G)} \prod_{v\in V_G} \lambda_{|v|}$ if $\omega(G) < 0$ and $K_D ( \varphi^{n(G)} \prod_{v\in V_G} \lambda_{|v|} ) = 0$ if $\omega(G) \geq 0$. In our setup, $K_D$ takes the role of the operator isolating the divergent contribution that is typically used in the BPHZ formalism. In fact, it is possible 
(but very inefficient) to compute $\HE_D^{\mathrm{pos}}(G)$ from $\HE_D(G)$ using the BPHZ formalism.

Replacing $\HE_D(G)$ with $\HE^{\mathrm{pos}}_D(G)$ in Section~\ref{sec:algo} yields an algorithm that samples pairs $(G,\bb x)$ of metric graphs weighted by the positive Hepp bound $\HE^{\mathrm{pos}}_D(G)$. By Definition~\ref{def:posHepp},
it is obvious that $\HE_D(G) = \HE^{\mathrm{pos}}_D(G)$ if $\omega_D(\gamma) > 0$ for all subgraphs $\gamma \subset G$. That means, for graphs without subdivergences,
the positive Hepp bound and the original Hepp bound are equal.
By definition, the positive Hepp bound is 
always non-negative and finite. So, this modified sampling algorithm is not obstructed by issues of potential divergences.

\subsection{Application to the primitive \texorpdfstring{$\phi^4$}{phi4} theory \texorpdfstring{$\beta$}{beta} function}
\label{sec:applicationphi4}

We will apply this modified algorithm based on the positive Hepp bound to compute a particularly interesting quantity in scalar quantum field theory: the primitive contribution to the scalar $\phi^4$ theory $\beta$ function \cite{Broadhurst:1985tld,Broadhurst:1995km,Schnetz:2008mp}. This quantity has long served as a benchmark for new computational techniques. While it is known analytically up to 7 loops, at loop order 8, new number-theoretic phenomena emerge, which pose obstructions for an evaluation using existing analytic methods \cite{Brown:2010bw}. 
With Hepp-bound-based methods, Kompaniets and Panzer 
estimated this primitive contribution up to 11 loops \cite{Kompaniets_2017}. 
Recently, Balduf used tropical sampling to study the primitive contribution to the $\phi^4$ theory $\beta$ function up to loop order 18 \cite{Balduf:2023ilc} (see also \cite{Balduf:2024gvv,Balduf:2024njk}). Up to 17 loops, these computations were later confirmed by Favorito and the author \cite{Borinsky:2025ncs}.
A folklore conjecture states that the primitive contribution to the minimally subtracted $\phi^4$ beta function becomes dominant at large loop order \cite{McKane:2018ocs} (see also \cite{Dunne:2021lie} for an illuminating discussion).

A $4$-point graph in $\phi^4$ theory, $G \in \mathcal G_{L,4}^4$, is 
primitive if it is logarithmically divergent for $D=4$ (meaning that $\omega_4(G)=0$) and free of subdivergences (meaning that $\omega_4(\gamma) > 0$ for all subgraphs $\gamma \subset G$).
The primitive contribution to the $\beta$ function in  $\phi^4$ theory is (see~\cite[Eq.~(B1)]{Kompaniets_2017}),
\begin{align*} \beta^{\mathrm{prim}}_{L+1} = 2\sum_{ \substack{ G \in \mathcal G_{L,4}^4\\
} }\frac{ \Theta( G \text{ is primitive} ) }{|\Aut(G)|} \int_{\mathbb P_{>0}^{E_G}} \frac{\prod_{e\in E_G} x_e }{\mathcal U_G(\bb x)^{2}} \, \Omega \, , \end{align*}
where $ \Theta( G \text{ is primitive} ) $ is $1$ if $G$ is primitive and $0$ otherwise. 
We can use the positive versions of the Algorithms~\ref{algo:1PI} and \ref{algo:beaded} to 
estimate $\beta^{\mathrm{prim}}_{L+1}$ by writing it in the form
\begin{align*} \beta^{\mathrm{prim}}_{L+1} = 2 Z^{4,\mathrm{pos}}_{4}(L,4) \int_{\mathcal M_{L,4}^4} \Theta( G \text{ is primitive} )\, \left( \frac{ \mathcal U_G^\tr(\bb x)}{\mathcal U_G(\bb x)} \right)^2 \, \widetilde \mu^{\tr,\mathrm{pos}}_{L,4} \, , \end{align*}
where $ Z^{4,\mathrm{pos}}_{4}(L,4) $ and $ \widetilde \mu^{\tr,\mathrm{pos}}_{L,4}$ are the positive versions 
of the normalization factor and measure 
$ Z^{4}_{4}(L,4) $ and $ \widetilde \mu^{\tr}_{L,4}$
from Sections~\ref{sec:modulimetric} and \ref{sec:preproc}.

\begin{table}
\centering
\begin{tabular}{cccccc}
\toprule
$L$ & samples & $N_{\text{Prim}}$ & $\beta^{\mathrm{prim}}_{L+1}$ & $\beta H^{\mathrm{prim}}_{L+1}$ & time/h \\
\midrule
3 & $1.10 \cdot 10^{10}$ & $1.87 \cdot 10^{9}$ & $1.442497 \cdot 10^{1}$ $\pm$ $3.0 \cdot 10^{-4}$ & $1.679980 \cdot 10^{2}$ $\pm$ $3.5 \cdot 10^{-3}$ & 0 \\
4 & $1.10 \cdot 10^{10}$ & $1.31 \cdot 10^{9}$ & $1.244281 \cdot 10^{2}$ $\pm$ $3.5 \cdot 10^{-3}$ & $3.432005 \cdot 10^{3}$ $\pm$ $8.9 \cdot 10^{-2}$ & 1 \\
5 & $1.10 \cdot 10^{10}$ & $1.28 \cdot 10^{9}$ & $1.698163 \cdot 10^{3}$ $\pm$ $5.5 \cdot 10^{-2}$ & $1.135437 \cdot 10^{5}$ $\pm$ $3.0 \cdot 10^{0}$ & 1 \\
6 & $1.10 \cdot 10^{10}$ & $1.18 \cdot 10^{9}$ & $2.412932 \cdot 10^{4}$ $\pm$ $9.1 \cdot 10^{-1}$ & $3.958005 \cdot 10^{6}$ $\pm$ $1.1 \cdot 10^{2}$ & 1 \\
7 & $1.10 \cdot 10^{10}$ & $1.10 \cdot 10^{9}$ & $3.709545 \cdot 10^{5}$ $\pm$ $1.6 \cdot 10^{1}$ & $1.509371 \cdot 10^{8}$ $\pm$ $4.3 \cdot 10^{3}$ & 1 \\
8 & $1.10 \cdot 10^{10}$ & $1.04 \cdot 10^{9}$ & $6.062108 \cdot 10^{6}$ $\pm$ $3.1 \cdot 10^{2}$ & $6.179273 \cdot 10^{9}$ $\pm$ $1.8 \cdot 10^{5}$ & 2 \\
9 & $1.10 \cdot 10^{10}$ & $9.80 \cdot 10^{8}$ & $1.045110 \cdot 10^{8}$ $\pm$ $6.2 \cdot 10^{3}$ & $2.692812 \cdot 10^{11}$ $\pm$ $8.2 \cdot 10^{6}$ & 2 \\
10 & $1.10 \cdot 10^{10}$ & $9.33 \cdot 10^{8}$ & $1.889201 \cdot 10^{9}$ $\pm$ $1.3 \cdot 10^{5}$ & $1.241497 \cdot 10^{13}$ $\pm$ $3.9 \cdot 10^{8}$ & 3 \\
11 & $1.10 \cdot 10^{10}$ & $8.96 \cdot 10^{8}$ & $3.566923 \cdot 10^{10}$ $\pm$ $2.8 \cdot 10^{6}$ & $6.026765 \cdot 10^{14}$ $\pm$ $1.9 \cdot 10^{10}$ & 4 \\
12 & $1.10 \cdot 10^{10}$ & $8.66 \cdot 10^{8}$ & $7.012027 \cdot 10^{11}$ $\pm$ $6.4 \cdot 10^{7}$ & $3.071324 \cdot 10^{16}$ $\pm$ $1.0 \cdot 10^{12}$ & 5 \\
13 & $1.10 \cdot 10^{10}$ & $8.44 \cdot 10^{8}$ & $1.431902 \cdot 10^{13}$ $\pm$ $1.5 \cdot 10^{9}$ & $1.638982 \cdot 10^{18}$ $\pm$ $5.4 \cdot 10^{13}$ & 6 \\
14 & $1.10 \cdot 10^{10}$ & $8.28 \cdot 10^{8}$ & $3.032472 \cdot 10^{14}$ $\pm$ $3.6 \cdot 10^{10}$ & $9.142727 \cdot 10^{19}$ $\pm$ $3.1 \cdot 10^{15}$ & 7 \\
15 & $3.11 \cdot 10^{11}$ & $2.31 \cdot 10^{10}$ & $6.655768 \cdot 10^{15}$ $\pm$ $2.4 \cdot 10^{10}$ & $5.323570 \cdot 10^{21}$ $\pm$ $3.4 \cdot 10^{16}$ & 249 \\
16 & $1.10 \cdot 10^{10}$ & $8.10 \cdot 10^{8}$ & $1.512467 \cdot 10^{17}$ $\pm$ $2.4 \cdot 10^{13}$ & $3.231993 \cdot 10^{23}$ $\pm$ $1.1 \cdot 10^{19}$ & 11 \\
17 & $1.10 \cdot 10^{10}$ & $8.08 \cdot 10^{8}$ & $3.552250 \cdot 10^{18}$ $\pm$ $6.3 \cdot 10^{14}$ & $2.044094 \cdot 10^{25}$ $\pm$ $6.9 \cdot 10^{20}$ & 12 \\
18 & $1.10 \cdot 10^{10}$ & $8.09 \cdot 10^{8}$ & $8.632116 \cdot 10^{19}$ $\pm$ $1.8 \cdot 10^{16}$ & $1.345581 \cdot 10^{27}$ $\pm$ $4.6 \cdot 10^{22}$ & 15 \\
19 & $1.10 \cdot 10^{10}$ & $8.12 \cdot 10^{8}$ & $2.167796 \cdot 10^{21}$ $\pm$ $4.9 \cdot 10^{17}$ & $9.211519 \cdot 10^{28}$ $\pm$ $3.1 \cdot 10^{24}$ & 19 \\
20 & $3.00 \cdot 10^{11}$ & $2.23 \cdot 10^{10}$ & $5.624473 \cdot 10^{22}$ $\pm$ $4.0 \cdot 10^{17}$ & $6.551806 \cdot 10^{30}$ $\pm$ $4.2 \cdot 10^{25}$ & 621 \\
25 & $8.30 \cdot 10^{10}$ & $6.50 \cdot 10^{9}$ & $1.066295 \cdot 10^{30}$ $\pm$ $2.9 \cdot 10^{25}$ & $2.060052 \cdot 10^{40}$ $\pm$ $2.5 \cdot 10^{35}$ & 824 \\
30 & $1.00 \cdot 10^{11}$ & $8.26 \cdot 10^{9}$ & $4.290822 \cdot 10^{37}$ $\pm$ $1.8 \cdot 10^{33}$ & $1.486361 \cdot 10^{50}$ $\pm$ $1.6 \cdot 10^{45}$ & 2032 \\
40 & $1.00 \cdot 10^{10}$ & $8.86 \cdot 10^{8}$ & $4.946806 \cdot 10^{53}$ $\pm$ $1.6 \cdot 10^{50}$ & $6.283492 \cdot 10^{70}$ $\pm$ $2.0 \cdot 10^{66}$ & 638 \\
50 & $1.00 \cdot 10^{10}$ & $9.26 \cdot 10^{8}$ & $5.054951 \cdot 10^{70}$ $\pm$ $3.8 \cdot 10^{67}$ & $2.625921 \cdot 10^{92}$ $\pm$ $8.2 \cdot 10^{87}$ & 725 \\
\bottomrule
\end{tabular}
\caption{Primitive $\beta$ function computation in $\phi^4$ theory.}
\label{tab:phi4_loop_values}

\end{table}

The positive modifications of Algorithms~\ref{algo:1PI} and \ref{algo:beaded} are implemented in the \texttt{C++} code that is attached to the paper. In fact, the positive version of the algorithm is equivalent to the original one up to a check of positivity of $\omega^k_D(L,n)$ in the recursion step~\eqref{eq:Zrec}. The results of a single pilot-study computation on the Perimeter Institute's computing cluster are listed in Table~\ref{tab:phi4_loop_values}, which contains the number of samples, the number of samples where $G$ was primitive $N_{\mathrm{prim}}$, $ \beta^{\mathrm{prim}}_{L+1}$, the runtime of the computation in hours and the value of 
$\beta H^{\mathrm{prim}}_{L+1}$, which is the Hepp-version of the primitive $\beta$ function in $\phi^4$ theory:
\begin{align*} \beta H^{\mathrm{prim}}_{L+1} = 2 Z^{4,\mathrm{pos}}_{4}(L,4) \int_{\mathcal M_{L,4}^4} \Theta( G \text{ is primitive} )\, \mu^{\tr,\mathrm{pos}}_{L,4} \, . \end{align*}
The main bottleneck of the computation is to check if a graph is primitive or not. This check can be performed in polynomial time by enumerating all $4$-edge cuts of the graph and checking if the respective cut splits the graph into two non-trivial components. However, the number of potential cuts grows as $|E_G|^4$; faster than the runtime of any other component of the computation.

The timings in Table~\ref{tab:phi4_loop_values} are not fully representative, as two different types of computers, one with an \texttt{AMD EPYC 7532 32-Core Processor} and one with an \texttt{Intel Xeon Gold 6148 CPU} were used for the computation. 
Again, we did not record the memory usage, as the core memory requirements of the algorithm would still be on the order of 200 kilobytes at loop order 100.
 The overall memory consumption is therefore dominated by operating system overhead.

The integrand $f(G,\bb x) = \left( { \mathcal U_G^\tr(\bb x)}/{\mathcal U_G(\bb x)} \right)^2 $ is much better-behaved than the integrand in the super-renormalizable case discussed above in Section~\ref{sec:applicationphi3}. As expected, the errors grow more slowly with the loop order than before in our $\phi^3$ theory example (see~Table~\ref{tab:phi4_loop_values}). We obtained estimates with a reasonable statistical error bound up to loop order 50. At loop orders 18 and below, the numbers confirm and refine results from \cite{Balduf:2023ilc,Balduf:2024gvv,Borinsky:2025ncs}. We emphasize that the new computations presented here were carried out in a fraction of the time required for the more basic, uniform graph sampling approach of these papers.

However, the statistical errors still seem to grow exponentially with the loop order. This growth can be explained as follows. The integrand, $f(G,\bb x)$, is always smaller than $1$, and it can become arbitrarily small at large enough loop order. The crude lower bound, ${2^{-2|E_G|}} \leq f(G,\bb x)$ obtained by estimating the number of spanning trees with the total number of subgraphs, can only be improved to a better, but still exponentially 
decreasing, bound (see~\cite{mckay1983spanning}). Hence,
the discussion in Section~\ref{sec:error} shows that we end up with an exponentially growing bound on the relative accuracy of our estimate $\delta[f] \leq 2^{4L}$, where we used that a $\phi^4$ theory $4$-point graph with $L$ loops has $2L$ edges. 
We therefore proved that we have an algorithm 
with which we can compute the primitive $\phi^4$ theory $\beta$ function at $L$ loops up to $\delta$ relative accuracy in runtime $\bigO(L^4 \delta^{-C L})$, where $C$ is a constant $> 0$. So, requiring fixed relative accuracy  still results in an exponential time algorithm in $L$.
As before, this bound and estimate are overly pessimistic, but in line with the empirical observation that the errors grow exponentially. We hope that systematic improvements of the global sampling algorithm will lead to a polynomial-time algorithm  for the full estimation of the $L$-loop perturbative coefficient at fixed relative accuracy.

\section{Perspectives \& generalizations}
\label{sec:perspective}
In this section, we summarize several perspectives and possible generalizations that emerged from the open directions highlighted throughout the article.

\begin{itemize}

\item 

Except for the brief excursions in Sections~\ref{sec:crit} and \ref{sec:renormalization}, this paper focused exclusively on \emph{bare tropicalized quantum field theory}. An exciting direction for future work is to investigate the renormalization group properties of tropicalized quantum field theory itself (see \cite{BaldufPanzer}).

Furthermore, it would be extremely valuable to incorporate renormalization directly into the sampling algorithm described in Section~\ref{sec:algo}, to compute renormalized $n$-point correlation functions and other quantities in situ. Given the numerical nature of the method, divergences would need to be addressed locally at the integrand level, so that only finite quantities remain to be integrated. Recent progress in local subtraction schemes (see \cite{Brown:2011pj,Hillman:2023ezp,Salvatori:2024nva}) should be particularly helpful while designing a fully renormalized algorithm.

\item

There are many ways to generalize the underlying quantum field theory beyond massive scalar fields. Natural first steps include massless scalar quantum field theories and theories with multiple scalar fields. Further interesting directions would be the tropicalization of quantum field theories with tensor structures, and eventually the tropicalization  of gauge theories and theories with fermions.

\item

For applications of the global tropical sampling algorithm from Section~\ref{sec:algo} to collider physics, it would be interesting to combine the algorithm from the present paper with that of \cite{Borinsky:2025asc} to directly sample pairs of a graph and the associated loop momenta along its edges. Such a combination could also prove useful for the evaluation of phase-space integrals and simulating particle scattering processes.

\item
Conceptually, the sampling method over the moduli space of graphs could provide a way to study the number-theoretic properties of Feynman integrals (see~\cite{Brown:2015fyf}) quantitatively. Interesting questions include, for example, what proportion of the contributing integrals can be expressed in terms of generalized hyperlogarithms, and which transcendental weights dominate the contributions.

\item 

Furthermore, it is plausible that the algorithm could be significantly improved using systematic variance-reduction techniques. In particular, it would be interesting to achieve a rigorous reduction of the variance such that the Monte Carlo application of the algorithm maintains polynomial complexity in the loop order, multiplicity, and target accuracy.

\end{itemize}

\providecommand{\href}[2]{#2}\begingroup\raggedright\endgroup

\end{document}